\documentclass[12pt]{article}

\usepackage{adjustbox}
\usepackage{algorithm2e}
\usepackage{amsmath}
\usepackage{amssymb}
\usepackage{amsthm}
\usepackage{appendix}
\usepackage{color}
\usepackage{cprotect}
\usepackage{enumerate} 
\usepackage{fullpage} 
\usepackage{graphicx} 
\usepackage{listings}
\usepackage{mathtools}
\usepackage{multirow}
\usepackage{natbib}
\usepackage{placeins}
\usepackage{subcaption} 
\usepackage{titletoc}
\usepackage{url}
\usepackage{xcolor}
\usepackage{hyperref} 
\usepackage{cleveref} 

\DeclareMathOperator*{\argmin}{arg\,min}

\usepackage{scalerel,stackengine}
\stackMath
\newcommand\reallywidehat[1]{%
	\savestack{\tmpbox}{\stretchto{%
			\scaleto{%
				\scalerel*[\widthof{\ensuremath{#1}}]{\kern-.6pt\bigwedge\kern-.6pt}%
				{\rule[-\textheight/2]{1ex}{\textheight}}
			}{\textheight}%
		}{0.5ex}}%
	\stackon[1pt]{#1}{\tmpbox}%
}

\newtheorem{theorem}{Theorem}
\newtheorem{lemma}{Lemma}
\newtheorem{proposition}{Proposition}
\newtheorem{remark}{Remark}

\newcommand{\beginsupplement}{%
        \setcounter{section}{0}
        \renewcommand{\thesection}{S\arabic{section}}%
        \setcounter{subsection}{0}
        \renewcommand{\thesubsection}{S\arabic{section}.\arabic{subsection}}%
        \setcounter{subsubsection}{0}
        \renewcommand{\thesubsubsection}{S\arabic{section}.\arabic{subsection}.\arabic{subsubsection}}%
        \setcounter{table}{0}
        \renewcommand{\thetable}{S\arabic{table}}%
        \setcounter{figure}{0}
        \renewcommand{\thefigure}{S\arabic{figure}}%
     }

\lstset{%
    basicstyle=\small\ttfamily\bfseries,
    columns=flexible,
    language=Python,
    keywordstyle=\color{blue}\bfseries,
    commentstyle=\color{purple}\ttfamily,
}

    \title{Optimal Thinning of MCMC Output}
    \author{Marina Riabiz$^{1,2}$, Wilson Ye Chen$^3$, Jon Cockayne$^2$, Pawel Swietach$^{4}$, \\ Steven A. Niederer$^1$, Lester Mackey$^5$, Chris. J. Oates$^{6,2}$\footnote{Address for correspondence: Chris. J. Oates, School of Mathematics, Statistics and Physics, Herschel Building, Newcastle University, Newcastle upon Tyne, NE1 7RU, UK. E-mail: \url{chris.oates@ncl.ac.uk}} \\
	\small $^1$King's College London, UK \qquad $^2$Alan Turing Institute, UK \\
	\small $^3$University of Sydney, Australia \qquad $^4$Oxford University, UK \\
	\small $^5$Microsoft Research, US \qquad $^6$Newcastle University, UK}

\begin{document}
	
	\maketitle
	
	\begin{abstract}
		The use of heuristics to assess the convergence and compress the output of Markov chain Monte Carlo can be sub-optimal in terms of the empirical approximations that are produced.
		Typically a number of the initial states are attributed to ``burn in'' and removed, whilst the remainder of the chain is ``thinned'' if compression is also required.
		In this paper we consider the problem of retrospectively selecting a subset of states, of fixed cardinality, from the sample path such that the approximation provided by their empirical distribution is close to optimal.
		A novel method is proposed, based on greedy minimisation of a kernel Stein discrepancy, that is suitable when the gradient of the log-target can be evaluated and approximation using a small number of states is required.
		Theoretical results guarantee consistency of the method and its effectiveness is demonstrated in the challenging context of parameter inference for ordinary differential equations.
		Software is available in the \verb+Stein Thinning+ package in \verb+Python+, \verb+R+ and \verb+MATLAB+.
		
		\vspace{5pt}
		\noindent
		Keywords: Bayesian computation, greedy optimisation, Markov chain Monte Carlo, reproducing kernel, Stein's method
	\end{abstract}	
	
	\section{Introduction}
	
	The most popular computational tool for non-conjugate Bayesian inference is Markov chain Monte Carlo (MCMC).
	Introduced to statistics from the physics literature in \cite{hastings1970,geman1984stochastic,tanner1987calculation,gelfand1990sampling}, an enormous amount of research effort has since been expended in the advancement of MCMC methodology.
	Such is the breadth of this topic that we do not attempt a survey here, but instead refer the reader to \cite{robert2013monte,green2015bayesian} and the references therein to more advanced material.
	This paper is motivated by the fact that the approaches used for convergence assessment and to post-process the output of MCMC can strongly affect the estimates that are produced. 
	
	Let $P$ be a distribution on a measurable space $\mathcal{X}$ and let $(X_i)_{i \in \mathbb{N}}$ be a Markov chain that is $P$-invariant.
	The Markov chain sample path provides an empirical approximation 
	\begin{align}
	\frac{1}{n} \sum_{i=1}^n \delta(X_i)  \label{eq: MCMC estimator}
	\end{align}
	to $P$, where $\delta(x)$ denotes a point mass centred at $x \in \mathcal{X}$. 
	Our discussion supposes that a practitioner is prepared to simulate a Markov chain up to a maximum number of iterations, $n$, and that simulating further iterations is not practical; a scenario that is often encountered (e.g. see \Cref{subsec: cardiac}).
	In this setting it is common (and indeed recommended) to replace \eqref{eq: MCMC estimator} with an alternative estimator 
	\begin{align}
	\frac{1}{m} \sum_{j=1}^m \delta(X_{\pi(j)}) \label{eq: subset estimator}
	\end{align}
	that is based on a subset of the total MCMC output.
	The $m$ indices $\pi(j) \in \{1,\dots,n\}$ indicate which states are retained and the identification of a suitable index set $\pi$ is informed by the following considerations:
	
	\vspace{5pt}
	\noindent
	\textbf{Removal of Initial Bias:} 
	The distribution of the initial states of the Markov chain may be quite different to $P$.
	To mitigate this, it is desirable to identify a ``burn-in'' $(X_i)_{i=1}^b$ which is then discarded.
	The burn-in period $b$ is typically selected using convergence diagnostics \citep{cowles1996}.
	These are primarily based on the empirical distribution of simple moment, quantile or density estimates across independent chains and making a judgement as to whether the ensemble of chains has converged to the distributional target. 
	The main limitation of convergence diagnostics, as far as we are concerned in this work, is that in taking $b$ large enough to make bias negligible, the number $n-b$ of remaining samples may be rather small, such that the statistical efficiency of the estimator in \eqref{eq: subset estimator} is sub-optimal as an approximation of $P$. 
	Nonetheless, a considerable portion of Bayesian pedagogy is devoted to the identification of the burn-in period,  as facilitated using diagnostic tests that are built into commercial-grade software such as \verb+WinBUGS+ \citep{lunn2000winbugs}, \verb+JAGS+ \citep{plummer2003jags}, \verb+R+ \citep{Rproject}, and \verb+Stan+ \citep{carpenter2017stan}.

	\vspace{5pt}
	\noindent
	\textbf{Increased Statistical Efficiency:}
	It is often stated that discarding part of the MCMC output leads to a reduction in the statistical efficiency of the estimator \eqref{eq: subset estimator} compared to \eqref{eq: MCMC estimator}.
	This argument, made e.g. in 
	\cite{geyer1992practical}, applies only when the procedure used to discard part of the MCMC output does not itself depend on the MCMC output and when the length $n$ of the MCMC output is fixed.
	That estimation efficiency can be \emph{improved} by discarding a portion of the samples in a way that depends on the samples themselves is in fact well-established \citep[see e.g.][]{dwivedi2019power}.

	\vspace{5pt}
	\noindent
	\textbf{Compression of MCMC Output:}
	A third motivation for estimators of the form \eqref{eq: subset estimator} is to control the cost of subsequent computation involving the MCMC output.
	Examples include approximating the expectation of a function $f$, where either evaluation of $f$ or storage of its output is associated with a computational cost, and \textit{Monte Carlo Maximum Likelihood}, where one constructs an approximate likelihood using MCMC, then performs optimisation on this approximate likelihood \citep{geyer1992constrained}.
	In such situations one may want to control the cardinality $m$ of the index set $\pi$ and to use $(X_{\pi(j)})_{j=1}^m$ as an experimental design on which $f$ is evaluated.
	The most popular solution is to retain only every $t^{\text{th}}$ state visited by the Markov chain, a
	procedure known as ``thinning'' of the MCMC output.
	See also the more sophisticated approach in \cite{paige2016super}.

	\vspace{5pt}
	
	Taking these considerations into account, the most common approach used to select an index set $\pi$ is based on the identification of a suitable burn-in period $b$ and/or a suitable thinning frequency $t$, leading to an approximation of the form 
	\begin{align}
	\frac{1}{\lfloor(n-b)/t\rfloor} \sum_{i=1}^{\lfloor(n-b)/t\rfloor} \delta(X_{b+it}) .  \label{eq: std post process}
	\end{align}
	Here $\lfloor r \rfloor$ denotes the integer part of $r$.
	This corresponds to a set of indices $\pi$ in \eqref{eq: subset estimator} that discards the burn-in states and retains only every $t^{\text{th}}$ iteration from the remainder of the MCMC output.
	It includes the case where no states are removed when $b =0$ and $t=1$. 
	Despite their widespread usage, the interplay between the Markov chain sample path and the heuristics used to select $b$ and $t$ is not widely appreciated.
	In general it is unclear how much bias may be introduced by employing a post-processing heuristic that is itself based on the MCMC output.
	Indeed, even the basic question of when the post-processed estimator in \eqref{eq: std post process} is consistent when $b$ and $t$ are chosen based on the MCMC output appears not to have been studied.
	
	In this paper we propose a novel method, called \verb+Stein Thinning+, that selects an index set $\pi$, of specified cardinality $m$, such that the associated discrete approximation in \eqref{eq: subset estimator} is close to optimal among all approximations supported on the MCMC output.
	The method is designed to ensure that \eqref{eq: subset estimator} is a consistent approximation of $P$. 
	This includes situations when the Markov chain on which it is based is not $P$-invariant, but we do of course require that the regions of high probability under $P$ are explored.
	To achieve this we adopt a kernel Stein discrepancy as our optimality criterion.
	The minimisation of kernel Stein discrepancy is performed using a greedy sequential algorithm and the main contribution of our theoretical analysis is to study the interplay of the greedy algorithm with the randomness inherent to the MCMC output.
	The proposed \verb+Stein Thinning+ method is simple (see \Cref{alg: the method}), applicable to most problems where gradients of the log-posterior density can be computed, and implemented as convenient \verb+Python+ and \verb+MATLAB+ packages that require no additional user input other than the number $m$ of states to be selected (see \Cref{subsec: software}).

	\subsection{Related Work}
	
	Our work contributes to an active area of research that attempts to cast post-processing of MCMC as an optimisation problem.
	\cite{Mak2018} proposed a method, called \texttt{Support Points}, which selects a small number of states in order that an empirical measure supported on those states minimises an ``energy distance'' to $P$.
	However, computation of the energy distance requires access to $P$, and minimisation of energy distance requires a challenging non-convex optimisation problem to be solved, meaning that in practice approximations are required.
	Stein discrepancy provides a computable alternative, which was used in 
	\cite{liu2016black}  to optimally weight an arbitrary set $(X_i)_{i=1}^n \subset \mathbb{R}^d$ of states in an manner loosely analogous to importance sampling, at a computational cost of $O(n^3)$.
	The combined effect of applying the approach of \cite{liu2016black} to MCMC output was analysed in \cite{Hodgkinson2020}, who established situations in which the overall procedure will be consistent.

	If a compressed representation of the posterior $P$ is required, but one is not wedded to the use of MCMC for generation of candidate states, then several other methods can be used.
	\cite{Joseph2015,Joseph2017} proposed a criterion to capture how well an empirical measure based on a point set approximates $P$ and applied repeated numerical optimisation over $\mathcal{X}$ to arrive at a suitable point set.
	A similar approach was taken in \cite{Chen2018SteinPoints}, where a Stein discrepancy was numerically minimised.
	The reliance of both of these algorithms on non-convex numerical optimisation over $\mathcal{X}$ renders their implementation and analysis difficult.
	\cite{Chen2019} considered using Markov chains to approximately perform numerical optimisation, allowing a tractable analytic treatment at the expense of a sub-optimal compression of $P$. 
	An elegant alternative approach is to formulate a convex optimisation problem on the set of probability distributions on $\mathcal{X}$.
	In this spirit, \cite{Liu2016SVGD,Liu2017GradientFlow} identified a gradient flow with $P$ as a fixed point that can be approximately simulated using a particle method.
	At convergence, one obtains a compressed representation of $P$, however the theoretical analysis of this approach remains an open and active research topic \citep[see e.g.][]{duncan2019geometry}.
	
	The present paper differs from the contributions cited, in that (1) our algorithm requires only the output from one run of MCMC, which is a realistic requirement in many situations, and (2) we are able to provide a finite sample size error bound (\Cref{thm: main theorem}) and a consistency guarantee (\Cref{cor: bias}) for \verb+Stein Thinning+, that cover precisely the algorithm that we implement.

	\subsection{Outline of the Paper}
	
	The paper proceeds, in \Cref{sec: methods}, to recall the construction of a kernel Stein discrepancy and to present \verb+Stein Thinning+.
	Then in \Cref{sec: theory} we establish a finite sample size error bound, as well as a widely-applicable consistency result that does not require the Markov chain to be $P$-invariant.
	In \Cref{sec: empirical} we present an empirical assessment of \verb+Stein Thinning+ in the context of parameter inference for ordinary differential equation models.
	Conclusions are contained in \Cref{sec: conclusion}.

	\section{Methods} \label{sec: methods}
	
	In this section we introduce and analyse \texttt{Stein Thinning}.
	First, in \Cref{subsec: KSD}, we recall the construction of a kernel Stein discrepancy and its theoretical properties. The \verb+Stein Thinning+ method is presented in \Cref{subsec: greedy alg}, whilst \Cref{subsec: kernel choice} is devoted to implementational detail.

	Before we proceed, we introduce a piece of notation that will often be used and recall the mathematical definition of a reproducing kernel:
	
	\vspace{5pt}
	\noindent
	\textbf{Notation:}
	Let $\mathcal{P}$ denote the set of probability distributions $P$ that admit a positive density $p$, with $\nabla \log p$ Lipschitz on $\mathbb{R}^d$.
	
	\vspace{5pt}
	\noindent
	\textbf{Reproducing Kernel:}
	A \textit{reproducing kernel Hilbert space} (RKHS) of functions on a set $\mathcal{X}$ is a Hilbert space, denoted $\mathcal{H}(k)$, equipped with a function $k: \mathcal{X} \times \mathcal{X} \rightarrow \mathbb{R}$, called a \emph{kernel}, such that $\forall x \in \mathcal{X}$ we have $k(\cdot,x) \in \mathcal{H}(k)$ and $\forall x \in \mathcal{X}, h \in \mathcal{H}(k)$ we have $h(x) = \langle h, k(\cdot,x)\rangle_{\mathcal{H}(k)}$.
	In this paper $\langle\cdot,\cdot\rangle_{\mathcal{H}(k)}$ denotes the inner product in $\mathcal{H}(k)$ and the induced norm will be denoted $\|\cdot\|_{\mathcal{H}(k)}$.
	For further details, see \cite{Berlinet2004}.
	
	\subsection{Kernel Stein Discrepancy} \label{subsec: KSD}
	
	To construct a criterion for the selection of states from the MCMC output we require a notion of optimal approximation for probability distributions.
	To this end, recall that an \textit{integral probability metric} (IPM) \citep{Muller1997}, based on a set $\mathcal{F}$ of measure-determining functions on a measurable space $\mathcal{X}$, is defined as
	\begin{equation}\label{eq:IPMs}
	D_{\mathcal{F}} (P,Q)
	\; := \;
	\sup_{f \in \mathcal{F}} \left| \int_{\mathcal{X}} f \mathrm{d}P - \int_{\mathcal{X}} f \mathrm{d}Q \right|  .
	\end{equation}
	The fact that $\mathcal{F}$ is measure-determining means that $D_{\mathcal{F}}(P,Q) = 0$ if and only if $P=Q$ is satisfied.
	Standard choices for $\mathcal{F}$, e.g. that recover Wasserstein distance as the IPM, cannot be used in the Bayesian context due to the need to compute integrals with respect to $P$ in \eqref{eq:IPMs}.
	
	In the remainder of \Cref{subsec: KSD} we restrict attention to the setting $P \in \mathcal{P}$.
	To circumvent intractability of \eqref{eq:IPMs}, the notion of a \emph{Stein discrepancy} was proposed in \cite{Gorham2015}.
	This was based on Stein's method \citep{Stein1972}, which consists of finding a set $\mathcal{G}$ of sufficiently differentiable $d$-dimensional vector fields and a differential operator $\mathcal{A}_P$, depending on $P$ and acting on elements of $\mathcal{G}$, such that $ \int_{\mathbb{R}^d} \mathcal{A}_P g \; \mathrm{d}P = 0$ for all $g \in \mathcal{G}$. 
	The proposal of \cite{Gorham2015} was to take $\mathcal{F} = \mathcal{A}_P \mathcal{G}$ to be the image of $\mathcal{G}$ under $\mathcal{A}_P$ in \eqref{eq:IPMs}, leading to the \textit{Stein discrepancy}
	\begin{equation}\label{eq:stein_discrepancy}
	D_{\mathcal{A_PG}}(P,Q)
	\; = \;
	\sup_{g \in \mathcal{G}} \left| \int_{\mathbb{R}^d} \mathcal{A}_P g \; \mathrm{d}Q \right|  .
	\end{equation}
	Theoretical analysis had led to sufficient conditions for $\mathcal{A}_P \mathcal{G}$ to be measure-determining \citep{Gorham2015}.
	In this paper we focus on a particular form of \eqref{eq:stein_discrepancy} due to \cite{Liu2016,Chwialkowski2016,Gorham2017}, called a \emph{kernel} Stein discrepancy (KSD).
	In this case, $\mathcal{A}_P$ is the \emph{Langevin Stein operator} $\mathcal{A}_P g \coloneqq p^{-1} \nabla \cdot (pg)$ derived in \citet{Gorham2015},
	where $\nabla \cdot$ denotes the divergence operator in $\mathbb{R}^d$ and $\mathcal{G} \coloneqq \{g : \mathbb{R}^d \rightarrow \mathbb{R}^d \vert \sum_{i=1}^d \|g_i\|_{\mathcal{H}(k)}^2 \leq 1\}$ is the unit ball in a Cartesian product of RKHS.
	It follows from construction that the set $\mathcal{A}_P \mathcal{G}$ is the unit ball of another RKHS, denoted $\mathcal{H}(k_P)$, whose kernel is 
	\begin{align}
	k_P(x,y) & \coloneqq \nabla_x \cdot \nabla_y k(x,y) + \left\langle \nabla_x k(x,y) , \nabla_y \log p(y) \right\rangle \nonumber \\
	& \quad + \left\langle \nabla_y k(x,y) , \nabla_x \log p(x) \right\rangle + k(x,y) \left\langle \nabla_{x} \log p(x) , \nabla_y \log p(y) \right\rangle , \label{eq:stein_kernel}
	\end{align}
	where $\langle \cdot , \cdot \rangle$ denotes the standard Euclidean inner product, $\nabla$ denotes the gradient operator and subscripts have been used to indicate the variables being acted on by the differential operators \citep{Oates2017}.
	Thus KSD is recognised as a maximum mean discrepancy in $\mathcal{H}(k_P)$ \citep{song2008learning} and is fully characterised by the kernel $k_P$; we therefore adopt the shorthand notation $D_{k_P}(Q)$ for $D_{\mathcal{A}_P \mathcal{G}}(P,Q)$.
	
	In the remainder of this section we recall the main properties of KSD.
	The first is a condition on the kernel $k$ that guarantees elements of $\mathcal{H}(k_p)$ have zero mean with respect to $P$.
	In what follows $\|x\| = \langle x , x \rangle^{1/2}$ denotes the Euclidean norm on $\mathbb{R}^d$.
	It will be convenient to abuse operator notation, writing $\nabla_x \nabla_y^\top k$ for the Hessian matrix of a bivariate function $(x,y) \mapsto k(x,y)$.
	
	\begin{proposition}[Proposition 1 of \cite{Gorham2017}]
		Let $P \in \mathcal{P}$ and assume that $\int_{\mathbb{R}^d} \| \nabla \log p \| \mathrm{d}P < \infty$.
		Let $(x,y) \mapsto \nabla_x \nabla_y^\top k(x,y)$ be continuous and uniformly bounded on $\mathbb{R}^d$.
		Then $\int_{\mathbb{R}^d} h \mathrm{d}P = 0$ for all $h \in \mathcal{H}(k_P)$, where $k_P$ is defined in \eqref{eq:stein_kernel}.
	\end{proposition}
	
	The second main property of KSD that we will need is that it can be explicitly computed for an empirical measure $Q = \frac{1}{n} \sum_{i=1}^n \delta(x_i)$, supported on states $x_i \in \mathbb{R}^d$:
	
	\begin{proposition}[Proposition 2 of \cite{Gorham2017}]
		Let $P \in \mathcal{P}$ and let $(x,y) \mapsto \nabla_{x} \nabla_{y}^\top k(x,y)$ be continuous on $\mathbb{R}^d$.
		Then
		\begin{align}
		D_{k_P}\left( \frac{1}{n} \sum_{i=1}^n \delta(x_i) \right) & = \sqrt{\frac{1}{n^2}\sum_{i,j = 1}^n k_P(x_i,x_j)} , \label{eq: KSD first}
		\end{align}
		where $k_P$ was defined in \eqref{eq:stein_kernel}.
	\end{proposition}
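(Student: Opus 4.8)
The plan is to read the definition \eqref{eq:stein_discrepancy} through the lens of the construction recalled above, namely that $\mathcal{A}_P \mathcal{G}$ is precisely the unit ball of the RKHS $\mathcal{H}(k_P)$. Consequently the Stein discrepancy is literally a maximum mean discrepancy,
\begin{equation*}
D_{k_P}(Q) \;=\; \sup\Bigl\{ \Bigl| \textstyle\int_{\mathbb{R}^d} h \,\mathrm{d}Q \Bigr| \;:\; h \in \mathcal{H}(k_P),\ \|h\|_{\mathcal{H}(k_P)} \le 1 \Bigr\},
\end{equation*}
and I would take this identity as the starting point. The hypothesis that $(x,y)\mapsto \nabla_x \nabla_y^\top k(x,y)$ is continuous, combined with the standing assumption (which makes $\nabla\log p$ well-defined and continuous), ensures that the right-hand side of \eqref{eq:stein_kernel} defines a finite, symmetric, positive semi-definite function $k_P$, hence a genuine reproducing kernel; in particular $k_P(\cdot,x)\in\mathcal{H}(k_P)$ for every $x\in\mathbb{R}^d$, so that finite linear combinations of such functions lie in $\mathcal{H}(k_P)$ as well.

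From here the argument is the standard MMD computation. First I would introduce the kernel mean embedding of the empirical measure, $\mu_Q \coloneqq \frac{1}{n}\sum_{i=1}^n k_P(\cdot,x_i) \in \mathcal{H}(k_P)$. By the reproducing property, for any $h\in\mathcal{H}(k_P)$,
\begin{equation*}
\int_{\mathbb{R}^d} h \,\mathrm{d}Q = \frac{1}{n}\sum_{i=1}^n h(x_i) = \frac{1}{n}\sum_{i=1}^n \langle h, k_P(\cdot,x_i)\rangle_{\mathcal{H}(k_P)} = \langle h, \mu_Q\rangle_{\mathcal{H}(k_P)} .
\end{equation*}
An application of the Cauchy--Schwarz inequality then gives $|\langle h, \mu_Q\rangle_{\mathcal{H}(k_P)}| \le \|\mu_Q\|_{\mathcal{H}(k_P)}$ whenever $\|h\|_{\mathcal{H}(k_P)}\le 1$, with equality attained at $h = \mu_Q/\|\mu_Q\|_{\mathcal{H}(k_P)}$ when $\mu_Q\neq 0$ (and the claimed identity being trivial when $\mu_Q = 0$). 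Hence $D_{k_P}(Q) = \|\mu_Q\|_{\mathcal{H}(k_P)}$, and it remains only to expand the squared norm bilinearly and use the reproducing property once more:
\begin{align*}
\|\mu_Q\|_{\mathcal{H}(k_P)}^2
&= \Bigl\langle \tfrac{1}{n}\textstyle\sum_{i=1}^n k_P(\cdot,x_i),\ \tfrac{1}{n}\sum_{j=1}^n k_P(\cdot,x_j) \Bigr\rangle_{\mathcal{H}(k_P)} \\
&= \frac{1}{n^2}\sum_{i,j=1}^n \langle k_P(\cdot,x_i), k_P(\cdot,x_j)\rangle_{\mathcal{H}(k_P)}
= \frac{1}{n^2}\sum_{i,j=1}^n k_P(x_i,x_j) ,
\end{align*}
so that taking square roots yields \eqref{eq: KSD first}.

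The only genuinely delicate point, and the one I would flag as the main obstacle, is the identification of $\mathcal{A}_P\mathcal{G}$ with the unit ball of $\mathcal{H}(k_P)$: that the image under the Langevin Stein operator of the unit ball of the product RKHS of vector fields is itself an RKHS unit ball, with reproducing kernel given by applying $\mathcal{A}_P$ in each argument to the matrix-valued kernel $k\cdot I_d$ and collapsing to the scalar kernel \eqref{eq:stein_kernel}. This requires checking that $\mathcal{A}_P$ acts continuously on the product RKHS and that the resulting bivariate function inherits the reproducing property, which is precisely the content of the construction cited from \cite{Oates2017,Gorham2017}; I would invoke it rather than reprove it. Granting that, everything above is the routine Cauchy--Schwarz and reproducing-property calculation, and no integrability of $\nabla\log p$ against $P$ is needed since \eqref{eq:stein_discrepancy} makes no reference to $P$ beyond the kernel $k_P$ itself.
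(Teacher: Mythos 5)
The paper does not actually prove this proposition; it imports it verbatim as Proposition~2 of \citet{Gorham2017} and uses it as a black box. Your argument is the standard maximum-mean-discrepancy computation that proves it: identify $\mathcal{A}_P\mathcal{G}$ with the unit ball of $\mathcal{H}(k_P)$, represent $\int h\,\mathrm{d}Q$ as $\langle h,\mu_Q\rangle_{\mathcal{H}(k_P)}$ via the reproducing property, saturate Cauchy--Schwarz, and expand $\|\mu_Q\|_{\mathcal{H}(k_P)}^2$ bilinearly. This is correct and is essentially the proof in the cited reference. You are also right to flag the Stein-RKHS identification as the one nontrivial ingredient (and right that it is appropriate to cite \citet{Oates2017,Gorham2017} for it rather than reprove it), and your observation that the $P$-integrability of $\nabla\log p$ plays no role here, unlike in Proposition~1, is accurate: the hypothesis only guarantees that $k_P$ is a finite, continuous positive semi-definite kernel so that $k_P(\cdot,x_i)\in\mathcal{H}(k_P)$ and the sums above make sense. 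No gaps.
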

	
	The third main property is that KSD provides convergence control.
	Let $Q_n \Rightarrow P$ denote weak convergence of a sequence $(Q_n)$ of measures to $P$.
	Theoretical analysis in \cite{Gorham2017,Chen2018SteinPoints,Huggins2018,Chen2019,Hodgkinson2020,Gorham2020} established sufficient conditions for when convergence of \eqref{eq: KSD first} to zero implies $\frac{1}{n} \sum_{i=1}^n \delta(x_i) \Rightarrow P$. 
	For our purposes we present one such result, from \cite{Chen2019}.

	\begin{proposition}[Theorem 4 in \cite{Chen2019}] \label{prop: convergence control}
		Let $P \in \mathcal{P}$ be \emph{distantly dissipative}, meaning that $\liminf_{r \rightarrow \infty} \kappa(r) > 0$ where
		$$
		\kappa(r) := \inf\left\{ -2 \frac{ \langle \nabla \log p(x) - \nabla \log p(y) , x - y \rangle }{\|x-y\|^2} : \|x-y\| = r \right\} .
		$$
		Consider the kernel $k(x,y) = (c^2 + \|\Gamma^{-1/2} (x-y)\|^2)^\beta$ for some fixed $c > 0$, a fixed positive definite matrix $\Gamma$ and a fixed exponent $\beta \in (-1,0)$.
		Then $D_{k_P}\left( \frac{1}{n} \sum_{i=1}^n \delta(x_i) \right) \rightarrow 0$ implies $\frac{1}{n} \sum_{i=1}^n \delta(x_i) \Rightarrow P$,
		where $k_P$ is defined in \eqref{eq:stein_kernel}.
	\end{proposition}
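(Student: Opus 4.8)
This is Theorem~4 of \cite{Chen2019}, which adapts the kernel Stein discrepancy theory of \cite{Gorham2017}; here is how I would organise the argument. Since $Q_n:=\frac1n\sum_{i=1}^n\delta(x_i)\Rightarrow P$ is equivalent to convergence of $\int h\,\mathrm{d}Q_n\to\int h\,\mathrm{d}P$ for every bounded Lipschitz $h:\mathbb{R}^d\to\mathbb{R}$, the plan is to establish (i) that the sequence $(Q_n)$ is tight, and (ii) that every weak subsequential limit of $(Q_n)$ equals $P$; Prokhorov's theorem then delivers the conclusion. (An alternative route, closer in spirit to \cite{Gorham2015}, is to dominate a classical non-kernelised Langevin Stein discrepancy --- already known to metrise weak convergence when $P$ is distantly dissipative --- by a constant multiple of $D_{k_P}$, using that Stein-equation solutions, regularised by the overdamped Langevin semigroup, lie in $\mathcal{H}(k)$ with controlled norm; this shares the same technical core as step (i) below.)

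For step (ii), I would first note that for $g\in\mathcal{G}$ the reproducing property gives $|g_i(x)|\le\|g_i\|_{\mathcal{H}(k)}\,k(x,x)^{1/2}$ and a similar bound on $\|\nabla g_i(x)\|$ in terms of $\|g_i\|_{\mathcal{H}(k)}$ and a bound on $\nabla_x\nabla_y^\top k$, both uniform in $x$ for the stated kernel. Hence $\mathcal{A}_P g=\nabla\cdot g+\langle g,\nabla\log p\rangle$ is continuous and, using tightness from (i) together with a uniform-integrability argument to control the (possibly unbounded) term $\langle g,\nabla\log p\rangle$, the map $Q\mapsto\int\mathcal{A}_P g\,\mathrm{d}Q$ passes to weak limits; taking the supremum over $g\in\mathcal{G}$ shows $D_{k_P}$ is weakly lower semicontinuous along $(Q_n)$, so any subsequential limit $Q$ satisfies $D_{k_P}(Q)\le\liminf_n D_{k_P}(Q_n)=0$. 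I would then invoke that the inverse multiquadric base kernel is integrally strictly positive definite on $\mathbb{R}^d$ and that, under the standing assumption on $p$, this renders $k_P$ characteristic, so $D_{k_P}(Q)=0$ forces $Q=P$.

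The main obstacle is step (i), and it is here that the slow, \emph{polynomial} decay of the inverse multiquadric kernel --- guaranteed precisely by $\beta\in(-1,0)$, and which would fail for a Gaussian kernel or for $\beta\le-1$ --- together with distant dissipativity of $P$, does the essential work. Using the explicit form \eqref{eq:stein_kernel} of $k_P$ one computes that the diagonal satisfies
\[
D_{k_P}(Q_n)^2=\frac{1}{n^2}\sum_{i,j=1}^n k_P(x_i,x_j),\qquad k_P(x,x)=C_1+C_2\,\|\nabla\log p(x)\|^2\quad(C_1,C_2>0),
\]
and distant dissipativity forces $\|\nabla\log p(x)\|\to\infty$ as $\|x\|\to\infty$, so any state far from the modal region of $p$ contributes a large amount to the diagonal sum, while the off-diagonal terms $k_P(x,y)$ decay only polynomially in $\|x-y\|$ and so cannot cancel this growth. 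Converting this heuristic into a quantitative lower bound --- carefully balancing the growing diagonal, the slowly decaying cross-terms, and the behaviour of $k_P$ near the modes of $p$ --- shows that if a non-vanishing fraction of the mass of $(Q_n)$ escaped to infinity along a subsequence then $\liminf_n D_{k_P}(Q_n)>0$, contradicting the hypothesis; hence $(Q_n)$ is tight. With (i), (ii) and Prokhorov's theorem the proof is complete.
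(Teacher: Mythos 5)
The paper does not prove this proposition itself; it imports it verbatim as Theorem~4 of \cite{Chen2019}, which in turn adapts Theorem~8 of \cite{Gorham2017} to a preconditioned kernel. There is therefore no in-paper proof to compare against, and the right comparison is with the proof in those sources, which proceeds by \emph{dominating} a classical (non-kernelised) Langevin Stein discrepancy --- one already shown to metrise weak convergence for distantly dissipative $P$ --- by a constant multiple of $D_{k_P}$, via an explicit construction of Stein test functions inside the IMQ Stein ball. Your subsequential-limit architecture (tightness plus identification via Prokhorov) is a genuinely different scaffolding, and you do flag the cited route parenthetically.

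The real gap is step~(i), and I do not think the quadratic-form heuristic survives scrutiny. The diagonal $k_P(x_i,x_i)=-\Delta\phi(0)+\phi(0)\|\nabla\log p(x_i)\|^2$ does grow, but it enters $D_{k_P}(Q_n)^2=n^{-2}\sum_{i,j}k_P(x_i,x_j)$ with weight $n^{-2}$. Taking $x_i=i^{1/4}$, say, gives $Q_n(B_R^c)\to 1$ for every $R$ while the diagonal contribution $n^{-2}\sum_i k_P(x_i,x_i)\sim n^{-1/2}\to 0$: escaping mass is entirely compatible with a vanishing diagonal term. (This is precisely the regime of Gorham and Mackey's counterexample showing the \emph{Gaussian} kernel lacks convergence control.) Thus it is the off-diagonal terms, not a diagonal blow-up, that have to do the work for the IMQ --- and those terms are sign-indefinite, since the dominant factor $k(x,y)\langle\nabla\log p(x),\nabla\log p(y)\rangle$ flips sign with the geometry of the escaping configuration (two antipodally placed escaping clusters produce same-order positive intra-cluster and negative cross-cluster contributions). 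Your ``slow decay $\Rightarrow$ cannot cancel'' heuristic is therefore not merely unproved; I believe it points at the wrong mechanism. What actually separates IMQ from Gaussian is a Fourier-side richness property of $\mathcal{H}(k)$: because the IMQ spectral density decays only exponentially, the Stein ball $\mathcal{G}$ contains, uniformly in $R$, witnesses $g_R$ whose Stein images $\mathcal{A}_P g_R$ behave like smoothed indicators of $B_R^c$; the Gaussian RKHS, despite its \emph{faster} pointwise decay, contains no such witnesses. Without such a construction, or the domination argument of the cited proof, tightness is not established.

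Step~(ii) also rests on an unproven lemma. The lower semicontinuity you assert requires a uniform-integrability estimate for the unbounded integrands $\mathcal{A}_P g=\nabla\cdot g+\langle g,\nabla\log p\rangle$ (distant dissipativity forces $\|\nabla\log p(x)\|$ to grow at least linearly in $\|x\|$), which you acknowledge but do not supply; it is entangled with the same technical core missing from step~(i). The characteristic-kernel endgame --- ISPD base kernel plus the standing assumption on $p$ implying $D_{k_P}(Q)=0\Rightarrow Q=P$ --- is fine and is the safe part of the argument.
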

	
	\noindent The properties just described ensure that KSD is a suitable optimality criterion to consider for the post-processing of MCMC output.
	However, all discrepancies are associated with finite sample size pathologies; see \citet[][Section 3.4]{matsubara2021robust} for a discussion of the pathologies of KSD.
	Our attention turns next to the development of algorithms for minimisation of KSD.

	\subsection{Greedy Minimisation of KSD} \label{subsec: greedy alg}
	
	The convergence control afforded by \Cref{prop: convergence control} motivates the design of methods that select points $(x_i)_{i=1}^n$ such that \eqref{eq: KSD first} is approximately minimised.
	Continuous optimisation algorithms were proposed for this task in \cite{Chen2018SteinPoints} and \cite{Chen2019}.
	In \cite{Chen2018SteinPoints}, deterministic optimisation techniques were considered for low-dimensional problems, whereas in \cite{Chen2019} a Markov chain was used to provide more a practical optimisation strategy when the state space is high-dimensional.
	In each case greedy sequential strategies were considered, wherein at iteration $n$ a new state $x_n$ is appended to the current sequence $(x_1,\dots,x_{n-1})$.
	\cite{Chen2018SteinPoints} also considered the use of conditional gradient algorithms (so-called \emph{Frank-Wolfe}, or \emph{kernel herding} algorithms) but found that greedy algorithms provided better performance across a range of experiments and therefore we focus on greedy algorithms in this manuscript.
	
	The present paper is distinguished from earlier work in that we do not attempt to solve a continuous optimisation problem for selection of the next point $x_n \in \mathcal{X}$. 
	Such optimisation problems are fundamentally difficult and can at best be approximately solved.
	Instead, we exactly solve the discrete optimisation problem of selecting a suitable element $x_n$ from supplied MCMC output.
	In this sense we expect our findings will be more widely applicable than previous work, since we are simply performing post-processing of MCMC output and there exists a variety of commercial-grade software for MCMC.
	The method that we propose, called \verb+Stein Thinning+, is straight-forward to implement, and is stated in \Cref{alg: the method} for a distribution $P$ on a general measurable space $\mathcal{X}$.
	(The convention $\sum_{i=1}^0 = 0$ is employed.)
	
	\begin{algorithm}[h!]
		\KwData{The output $(x_i)_{i=1}^n$ from an MCMC method, a kernel $k_P$ for which the conclusion of \Cref{prop: convergence control} holds, and a desired cardinality $m \in \mathbb{N}$.}
		\KwResult{The indices $\pi$ of a sequence $(x_{\pi(j)})_{j=1}^m \subset \{x_i\}_{i=1}^n$ where the $\pi(j)$ are elements of $\{1,\dots,n\}$. }
		\For{$j = 1,\dots, m$}{
			$\pi(j) \; \in \; \displaystyle\argmin_{i = 1,\dots,n} \; \frac{k_P(x_{i},x_{i})}{2} + \sum_{j'=1}^{j-1} k_P(x_{\pi(j')},x_{i}) $\;
		}
		\cprotect\caption{The proposed method; \verb+Stein Thinning+.}
		\label{alg: the method}
	\end{algorithm}
	
	The algorithm is illustrated on a simple bivariate Gaussian mixture in Figure \ref{fig: illustration}.
	Observe in this figure that the points selected by the \verb+Stein Thinning+ do not belong to the burn-in period (which is visually clear), and that although the MCMC spent a disproportionate amount of time in one of the mixture components, the number of points selected by \verb+Stein Thinning+ is approximately equal across the two components of the target.
	The accuracy of the approximation produced by \texttt{Stein Thinning} is, nevertheless, gated by the quality of the MCMC output to which it is applied.
	A detailed empirical assessment is presented in \Cref{sec: empirical}.
	
	\begin{remark}[Tie-breaking]
		In the event of a tie, a tie-breaking rule should be used to select the next index.
		For example, if the minimum in \Cref{alg: the method} is realised by multiple candidate values $\Pi(j) \subseteq \{1,\dots,n\}$, one could adopt a tie-breaking rule that selects the smallest element of $\Pi(j)$ as the value that is assigned to $\pi(j)$.
		The rule that is used has no bearing on our theoretical analysis in \Cref{sec: theory}.
	\end{remark}
	
	\begin{remark}[Complexity]
		The computation associated with iteration $j$ of \Cref{alg: the method} is $O(n r_j)$ where $r_j \leq \min(j,n)$ is the number of distinct indices in $\{\pi(1),\dots,\pi(j-1)\}$; the computational complexity of \Cref{alg: the method} is therefore $O(n \sum_{j=1}^m r_j)$.
		For typical MCMC algorithms the computational complexity is $O(n)$, so the complexity of \verb+Stein Thinning+ is equal to that for MCMC when $m$ is fixed and higher when $m$ is increasing with $n$, being at most $O(nm^2)$.
	\end{remark}
	
	\begin{remark}[Re-sampling]
		In general the indices in $\pi$ need not be distinct.
		That is, \Cref{alg: the method} may prefer to include a duplicate state rather than to include a state which is not useful for representing~$P$.
		Indeed, if $m > n$ then the sequence $(x_{\pi(j)})_{j=1}^m$ must contain duplicates entries.
		\Cref{prop: optimal converge} in \Cref{sec: theory} clarifies this behaviour.
	\end{remark}
	
	\begin{remark}[Finite sample error bound] \label{rem: error bound}
		The approximation produced by \texttt{\emph{Stein Thinning}} satisfies a finite sample error bound
		$$
		\left| \frac{1}{m} \sum_{j=1}^m f(x_{\pi(j)}) - \int_{\mathbb{R}^d} f(x) \mathrm{d}P(x) \right| \leq D_{k_P}\left( \frac{1}{m} \sum_{j=1}^m \delta(x_{\pi(j)}) \right) \left\| f - \int_{\mathbb{R}^d} f(x) \mathrm{d}P(x) \right\|_{\mathcal{H}(k_P)} 
		$$
		following \citet[][Equation (1.14) applied to $f - \int f \mathrm{d}P$]{hickernell1998generalized}.
		This can be contrasted with the typically asymptotic analysis of MCMC.
		The practical estimation of the final term in this bound was discussed in Section 4 of \cite{south2020semi}.
	\end{remark}

	\begin{figure}[t!]
		\centering
		\begin{subfigure}[t]{0.328\textwidth}
			\centering
			\includegraphics[width=\textwidth,clip,trim = 4.27cm 3.4cm 1.2cm 1.4cm]{./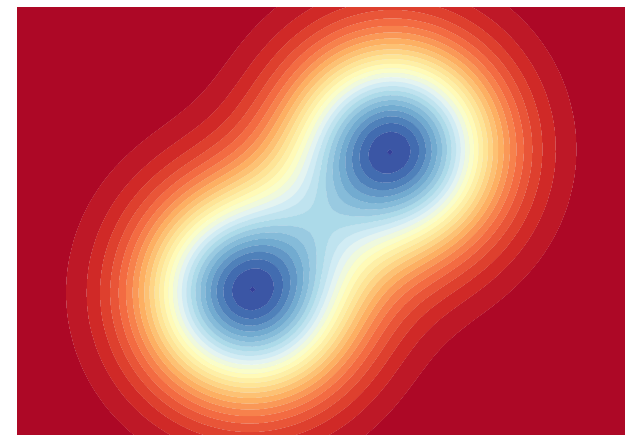}
			\caption{}
		\end{subfigure}%
		\begin{subfigure}[t]{0.33\textwidth}
			\centering
			\includegraphics[width=\textwidth,clip,trim = 3.75cm 3.37cm 1cm 1.1cm]{./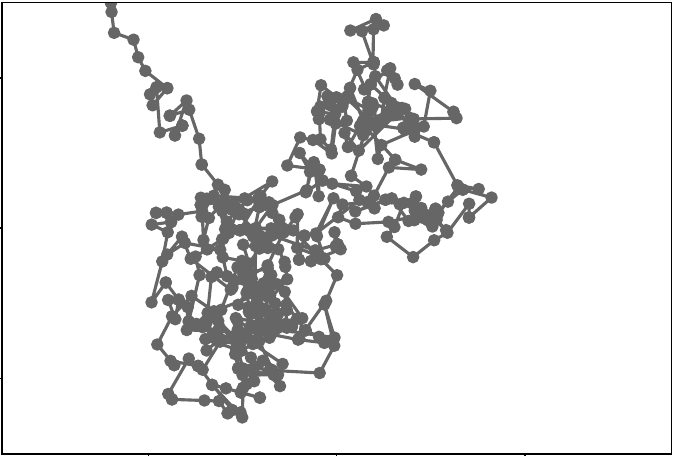}
			\caption{}
		\end{subfigure}%
		\begin{subfigure}[t]{0.33\textwidth}
			\centering
			\includegraphics[width=\textwidth,clip,trim = 3.75cm 3.37cm 1cm 1.1cm]{./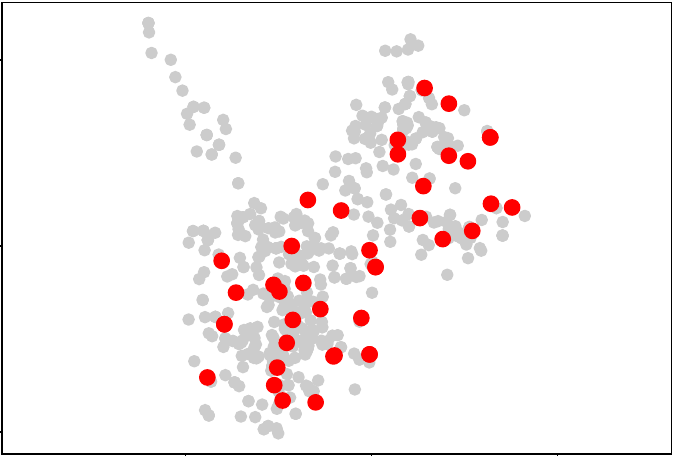}
			\caption{}
		\end{subfigure}%
		\cprotect\caption{Illustration of \verb+Stein Thinning+:
			(a)~Contours of the distributional target $P$.
			(b)~Markov chain Monte Carlo (MCMC) output, limited to 500 iterations to mimic a challenging computational context, exhibiting burn-in and autocorrelation that must be identified and mitigated.
			(c)~A subset of $m=40$ states from the MCMC output selected using \verb+Stein Thinning+, which correctly ignores the burn-in period and stratifies states approximately equally across the two components of the target.
		}
		\label{fig: illustration}
	\end{figure}

		\subsection{Choice of Kernel} \label{subsec: kernel choice}
	
	The suitability of KSD to quantify how well $Q$ approximates $P$ is determined by the choice of the kernel $k$ in \eqref{eq:stein_kernel}.
	Several choices are possible and for $P \in \mathcal{P}$, based on \Cref{prop: convergence control} together with extensive empirical assessment, \cite{Chen2019} advocated the pre-conditioned inverse multi-quadric kernel $	k(x,y) \coloneqq (1 + \|\Gamma^{-1/2}(x-y)\|^2 )^{-1/2}$
	where, compared to \Cref{prop: convergence control}, we have fixed $c=1$ (without loss of generality) and $\beta = -1/2$.
	The suitability of these choices for \texttt{Stein Thinning} is verified in \Cref{ap: hyper param vary}.
	The positive definite matrix $\Gamma$ remains to be specified and it is natural to take a data-driven approach where the MCMC output is used to select $\Gamma$.
	Provided that a fixed number $n_0 \in \mathbb{N}$ of the states $(X_i)_{i = 1}^{n_0}$ from the MCMC output are used in the construction of $\Gamma$, the consistency results for \verb+Stein Thinning+ that we establish in \Cref{sec: theory} are not affected.
	To explore different strategies for the selection of $\Gamma$, we focus on the following candidates:
	\begin{itemize}
		\item \textbf{Median (\texttt{med}):} The scaled identity matrix $\Gamma = \ell^{2} I$, where $\ell = \text{med} := \text{median}\{\|X_i-X_j\| : 1 \leq i < j \leq n_0 \}$	is the median Euclidean distance between states \citep{Garreau2018}.
		In the rare case that $\text{med} = 0$, an exception should be used, such as $\ell = 1$, to ensure a positive definite $\Gamma$ is used.
		\item \textbf{Scaled median (\texttt{sclmed}):} The scaled identity matrix $\Gamma = \ell^2 I$, where $\ell = \text{med} / \sqrt{\log(m)}$.
		This was proposed in \cite{Liu2016SVGD} and can be motivated using the approximation $\sum_{j'=1}^m k_P(x_{\pi(j)},x_{\pi(j')}) \approx m \exp(- \ell^{-2} \text{med}^2 ) = 1$.
		Note the dependence on $m$ means that the preceding theoretical analysis does not apply when this heuristic is used.
		\item \textbf{Sample covariance (\texttt{smpcov}):} The matrix $\Gamma$ can be taken as a sample covariance matrix 
		$$
		\Gamma = \frac{1}{n_0-1} \sum_{i=1}^{n_0} \left( X_i - \bar{X} \right) \left( X_i - \bar{X} \right)^\top, \qquad \bar{X} := \frac{1}{n_0} \sum_{i=1}^{n_0} X_i ,
		$$
		provided that this matrix is non-singular.
	\end{itemize}
	
	\noindent The experiments in \Cref{sec: empirical} shed light on which of these settings is the most effective, but we acknowledge that many other settings could also be considered.
	In what follows, we set $n_0 = \min(n,10^3)$ for the \texttt{med} and \texttt{sclmed} settings, to avoid an $O(n^2)$ cost of computing $\ell$, and otherwise set $n_0 = n$, so that the whole of the MCMC output is used to select $\Gamma$.
	\verb+Python+, \verb+R+ and \verb+MATLAB+ packages are provided and their usage is described in \Cref{subsec: software}.

	\section{Theoretical Assessment} \label{sec: theory}
	
	The theoretical analysis in this section clarifies the limiting behaviour of \verb+Stein Thinning+ as $m,n \rightarrow \infty$.
	Our first main result concerns the behaviour of \verb+Stein Thinning+ on a fixed sequence $(x_i)_{i=1}^n$:
	
	\begin{theorem} \label{prop: optimal converge}
		Let $\mathcal{X}$ be a measurable space and let $P$ be a probability distribution on $\mathcal{X}$.
		Let $k_P : \mathcal{X} \times \mathcal{X} \rightarrow \mathbb{R}$ be a reproducing kernel with $\int_{\mathcal{X}} k_P(x,\cdot) \mathrm{d}P = 0$ for all $x \in \mathcal{X}$.
		Let $(x_i)_{i=1}^n \subset \mathcal{X}$ be fixed and consider an index sequence $\pi$ of length $m$ produced by \Cref{alg: the method}.
		Then we have the bound
		\begin{align*}
		D_{k_P}\left( \frac{1}{m} \sum_{j=1}^m \delta(x_{\pi(j)}) \right)^2 \leq D_{k_P}\left( \sum_{i=1}^n w_i^* \delta(x_i) \right)^2 + \left( \frac{1 + \log(m)}{m} \right) \max_{i=1,\dots,n} k_P(x_i,x_i) ,
		\end{align*}
		where the weights $w^* = (w_1^*,\dots,w_n^*)$ in the first term satisfy
		\begin{align}
		w^* \in \argmin_{\substack{1_n^\top w = 1 \\ w \geq 0}} D_{k_P} \left(\sum_{i=1}^n w_i \delta(x_i)\right)  \label{eq: optimal weights}
		\end{align}
		where $1_n^\top = (1,\dots,1)$ and $w \geq 0$ indicates that $w_i \geq 0$ for $i = 1,\dots,n$.
	\end{theorem}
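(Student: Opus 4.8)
The plan is to lift the problem into the reproducing kernel Hilbert space $\mathcal{H}(k_P)$ and to recognise \verb+Stein Thinning+ as an exact greedy minimisation of a squared Hilbert-space norm over the discrete candidate set. Write $\mu_i := k_P(\cdot, x_i) \in \mathcal{H}(k_P)$, $g_0 := 0$, and $g_j := \sum_{j'=1}^{j} \mu_{\pi(j')}$. Because KSD is a maximum mean discrepancy in $\mathcal{H}(k_P)$ and the hypothesis $\int_{\mathcal{X}} k_P(x,\cdot)\,\mathrm{d}P(x) = 0$ makes the kernel mean embedding of $P$ vanish, one has $D_{k_P}(\tfrac{1}{m}\sum_{j=1}^m \delta(x_{\pi(j)}))^2 = m^{-2}\,\|g_m\|_{\mathcal{H}(k_P)}^2$ and $D_{k_P}(\sum_{i=1}^n w_i^*\,\delta(x_i))^2 = \|g^*\|_{\mathcal{H}(k_P)}^2$, where $g^* := \sum_{i=1}^n w_i^*\,\mu_i$. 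Using the reproducing property to rewrite $k_P(x_{\pi(j')},x_i) = \langle \mu_{\pi(j')},\mu_i\rangle_{\mathcal{H}(k_P)}$ and $k_P(x_i,x_i) = \|\mu_i\|_{\mathcal{H}(k_P)}^2$, the objective minimised at iteration $j$ of \Cref{alg: the method} equals $\tfrac12\|\mu_i\|^2 + \langle g_{j-1},\mu_i\rangle$, which differs from $\tfrac12\|g_{j-1}+\mu_i\|^2$ only by the $i$-independent constant $\tfrac12\|g_{j-1}\|^2$; hence $\pi(j) \in \argmin_{i=1,\dots,n}\|g_{j-1}+\mu_i\|^2$ and therefore $\|g_j\|^2 = \min_{i=1,\dots,n}\|g_{j-1}+\mu_i\|^2$.

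Step two is to feed the probability vector $w^*$ into this minimum. Since $w^* \ge 0$ and $1_n^\top w^* = 1$, the minimum over a finite set is at most the $w^*$-weighted average, and expanding the square gives the one-step bound
\begin{align*}
\|g_j\|^2 \;\le\; \sum_{i=1}^n w_i^*\,\|g_{j-1}+\mu_i\|^2 \;=\; \|g_{j-1}\|^2 + 2\langle g_{j-1},g^*\rangle + \sum_{i=1}^n w_i^*\,\|\mu_i\|^2 \;\le\; \|g_{j-1}\|^2 + 2 D^*\|g_{j-1}\| + K ,
\end{align*}
where $D^* := \|g^*\|_{\mathcal{H}(k_P)}$ and $K := \max_{i=1,\dots,n} k_P(x_i,x_i)$, the last inequality using Cauchy--Schwarz together with $\sum_i w_i^*\|\mu_i\|^2 \le K$. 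Setting $a_j := \|g_j\|_{\mathcal{H}(k_P)}$, this is the scalar recursion $a_j^2 \le a_{j-1}^2 + 2 D^* a_{j-1} + K$, with $a_1^2 = \min_{i} k_P(x_i,x_i) \le K$ since $g_0 = 0$.

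The last — and most delicate — part is to solve this recursion with the \emph{sharp} logarithmic constant, which I would do by proving, by induction on $j$, that $a_j^2 \le j^2 (D^*)^2 + j\,(1+\log j)\,K$. The base case $j=1$ is immediate from $a_1^2 \le K$. For the inductive step ($j \ge 2$) I would split the cross term with a Young inequality tuned to the index, $2D^* a_{j-1} \le (j-1)(D^*)^2 + (j-1)^{-1} a_{j-1}^2$, so that $a_j^2 \le \tfrac{j}{j-1}\,a_{j-1}^2 + (j-1)(D^*)^2 + K$; substituting the inductive hypothesis, the $(D^*)^2$-coefficient becomes $j(j-1) + (j-1) = j^2 - 1 \le j^2$ and the $K$-coefficient becomes $j(1+\log(j-1)) + 1$, which is $\le j(1+\log j)$ precisely because $-\log(1-j^{-1}) \ge j^{-1}$. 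Dividing the conclusion $\|g_m\|^2 \le m^2 (D^*)^2 + m(1+\log m)K$ by $m^2$ gives the statement. The main obstacle I anticipate is exactly this calibration: working with $a_j$ directly (via $\sqrt{a+b}\le\sqrt a+\sqrt b$) introduces a spurious factor $2$ in front of $\log m$, so the recursion must be handled on the squared quantities with the $j$-dependent split above; everything else is bookkeeping in $\mathcal{H}(k_P)$.
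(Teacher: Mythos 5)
Your proof is correct and follows the same two-stage architecture as the paper's: first derive the one-step recursion $\|g_j\|^2 \leq \|g_{j-1}\|^2 + 2D^*\|g_{j-1}\| + K$, then solve it by induction. The differences are in the implementation of each stage, and both of yours are slightly cleaner. For the recursion, the paper bounds $k_P(x_{\pi(m)},x_{\pi(m)}) \leq K$ directly and then controls only the cross term $2\min_y f_{m-1}(y)$ by noting that the convex hull of the $\mu_i$ contains $g^*$ and applying Cauchy--Schwarz; you instead complete the square to recognise the greedy objective as exactly $\tfrac12\|g_{j-1}+\mu_i\|^2$ up to a constant, then bound $\min_i$ by the $w^*$-weighted average — a single averaging argument that handles both the diagonal and cross terms at once (the two are equivalent since $w^*$ lives in the simplex, but yours needs no separate step for the diagonal term and arrives at $\sum_i w_i^*\|\mu_i\|^2 \le K$ rather than $k_P(x_{\pi(j)},x_{\pi(j)}) \le K$). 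For the recursion-solving step, the paper constructs a telescoping sequence $C_m$ whose increments cancel exactly and verifies the induction with the auxiliary inequality $2a\sqrt{a^2+b}\le 2a^2+b$; you instead posit the closed form $a_j^2 \le j^2(D^*)^2 + j(1+\log j)K$ directly and close the induction with the $j$-calibrated Young split $2D^*a_{j-1} \le (j-1)(D^*)^2 + a_{j-1}^2/(j-1)$, reducing the check to $-\log(1-1/j)\ge 1/j$. This is a genuinely different (and shorter) way to extract the $1+\log m$ constant, at the cost of a marginally looser intermediate bound ($K$ in place of the paper's $K - (D^*)^2$), which does not affect the stated theorem.
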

	
	\noindent The proof of \Cref{prop: optimal converge} is provided in \Cref{ap: thm 1 proof}.
	Its implication is that, given a sequence $(x_i)_{i=1}^n$, \verb+Stein Thinning+ produces an empirical distribution that converges in KSD to the optimal weighted empirical distribution $\sum_{i=1}^n w_i^* \delta(x_i)$ based on that sequence.
	Properties of such optimally weighted empirical measures were studied in \cite{liu2016black,Hodgkinson2020}, and are not the focus of the present paper, where the case $m \ll n$ is of principal interest.
	
	The role of \Cref{prop: optimal converge} is to study the interaction between the greedy algorithm and a given sequence $(x_i)_{i=1}^n$, and this bound is central to our proof of \Cref{thm: main theorem} which deals with the case where $(x_i)_{i=1}^n$ is replaced by MCMC output.
	\Cref{fig:thoreom1} illustrates the terms involved in \Cref{prop: optimal converge}. 
	It is clear that a reduction in KSD is achieved by \texttt{Stein Thinning} of the MCMC output. 
	
	\begin{remark}[Optimal weights] \label{rem: optimal weights}
	    To further improve the empirical approximation, we can consider an optimally-weighted sum $\sum_{j=1}^m w_{m,j}^* \delta(x_{\pi(j)})$ where the $w_{m,j}^*$ solve a convex optimisation problem analogous to \eqref{eq: optimal weights}.
	    Such weights minimise a quadratic function subject to a linear and a non-negativity constraint and can therefore be precisely computed.
		If the non-negativity constraint is removed and the indices in $\pi$ are distinct then
		\begin{align*}
		v_m^* := \argmin_{1_m^\top v = 1} D_{k_P} \left( \sum_{j=1}^m v_j \delta(x_{\pi(j)}) \right) = \frac{K_P^{-1} 1_m}{1_m^\top K_P^{-1} 1_m}, \qquad (K_P)_{i,j} \coloneqq k_P(x_{\pi(i)},x_{\pi(j)}),
		\end{align*}
		as derived in \cite{Oates2017}. 
		\Cref{fig:thoreom1} indicates that the benefit of applying weights $w_m^*$ (red curve) to the output of \verb+Stein Thinning+ (black curve) is limited, likely because the $x_{\pi(j)}$ were selected in a way that avoids redundancy in the point set.
		A larger improvement is provided by the weights $v_m^*$ (blue curve), but in this case the associated empirical measure may not be a probability distribution.
	\end{remark}

	\begin{remark}
		The use of a conditional gradient algorithm, instead of a greedy algorithm, in this context amounts to simply removing the term $k_P(x_{\pi(j)}, x_{\pi(j)})$ in \Cref{alg: the method}.
		As discussed in \cite{Chen2018SteinPoints}, this term can be thought of as a regulariser that lends stability to the algorithm, avoiding selection of $x_i$ that are far from the effective support of $P$.
	\end{remark}
	
	\begin{remark}
		\Cref{prop: optimal converge} is formulated at a high level of generality and can be applied on non-Euclidean domains $\mathcal{X}$.
		In \cite{Barp2018RS,liu2018riemannian,xu2020stein,le2020diffusion} the authors proposed and discussed Stein operators $\mathcal{A}_P$ for the non-Euclidean context.
	\end{remark}

	\begin{figure}[t!]
		\centering
		\includegraphics[width=0.7\textwidth,clip,trim = 0cm 0.5cm 0cm 0cm]{./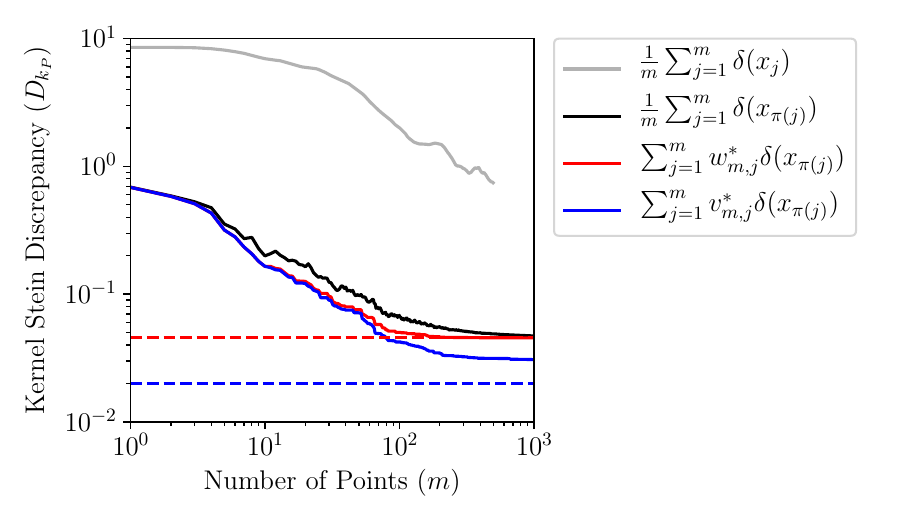}
		\cprotect\caption{Illustration of \Cref{prop: optimal converge}: 
			The gray curve represents the unprocessed output from MCMC in the example of \Cref{fig: illustration}.
			The black curve represents \texttt{Stein Thinning} applied to this same output and, in addition, weighted output of \texttt{Stein Thinning} is shown for weights $w_m^*$ subject subject to $\sum w^*_{m,j} = 1$ and $w^*_{m,j} \geq 0$ (solid red) and weights $v_m^*$ subject only to $\sum v_{m,j}^* = 1$ (solid blue).
			The dashed horizontal lines are the limiting values of their corresponding solid lines as the number $m$ is increased. 
		}
		\label{fig:thoreom1}
	\end{figure}

	Next we consider the properties of \verb+Stein Thinning+ applied to MCMC output.
	Let $V$ be a function $V: \mathcal{X} \to [1,\infty)$ and, for a function $f : \mathcal{X} \rightarrow \mathbb{R}$ and a measure $\mu$ on $\mathcal{X}$, let $\|f\|_V := \sup_{x \in \mathcal{X}} \frac{|f(x)|}{V(x)}$, $\|\mu\|_V := \sup_{\|f\|_V \leq 1} \left| \int_{\mathcal{X}} f \mathrm{d}\mu \right|$.
	Recall that a $\psi$-irreducible and aperiodic Markov chain $(X_i)_{i \in \mathbb{N}} \subset \mathcal{X}$ with $n^{\text{th}}$ step transition kernel $\mathrm{P}^n$ is \emph{$V$-uniformly ergodic} \citep[see Theorem 16.0.1 of][]{Meyn2012} if and only if $\exists R \in [0,\infty), \rho \in (0,1)$ such that 
	\begin{align}
	\| \mathrm{P}^n(x,\cdot)-P \|_{V} \leq R V(x) \rho^n \label{eq: V unif ergod}
	\end{align}
	for all initial states $x \in \mathcal{X}$ and all $n \in \mathbb{N}$.
	The notation $\mathbb{E}$ will be used to denote expectation with respect to the law of the Markov chain in the sequel.
	\Cref{thm: main theorem} establishes a finite sample size error bound for \verb+Stein Thinning+ applied to MCMC output:
	
	\begin{theorem} \label{thm: main theorem}
		Let $\mathcal{X}$ be a measurable space and let $P$ be a probability distribution on $\mathcal{X}$.
		Let $k_P : \mathcal{X} \times \mathcal{X} \rightarrow \mathbb{R}$ be a reproducing kernel with $\int_{\mathcal{X}} k_P(x,\cdot) \mathrm{d}P = 0$ for all $x \in \mathcal{X}$.
		Consider a $P$-invariant, time-homogeneous Markov chain $(X_i)_{i \in \mathbb{N}} \subset \mathcal{X}$ generated using a $V$-uniformly ergodic transition kernel, such that \eqref{eq: V unif ergod} is satisfied with $V(x) \geq \sqrt{k_P(x,x)}$ for all $x \in \mathcal{X}$.
		Suppose that, for some $\gamma > 0$,
		\begin{align*}
		b := \sup_{i \in \mathbb{N}} \mathbb{E} \left[ e^{\gamma k_P(X_i,X_i)} \right] < \infty, \qquad M := \sup_{i \in \mathbb{N}} \mathbb{E}\left[\sqrt{k_P(X_i,X_i)} V(X_i)\right] < \infty .
		\end{align*}
		Let $\pi$ be an index sequence of length $m$ produced by \Cref{alg: the method} applied to the Markov chain output $(X_i)_{i=1}^n$.
		Then, with $C = \frac{2 R \rho}{1-\rho}$, we have that
		\begin{align}
		\mathbb{E}\left[ D_{k_P}\left( \frac{1}{m} \sum_{j=1}^m \delta(X_{\pi(j)}) \right)^2 \right] \leq \frac{\log(b)}{\gamma n} + \frac{CM}{n} + \left( \frac{1 + \log(m)}{m} \right) \frac{\log(nb)}{\gamma} .
		\label{eq: main bound theorem}
		\end{align}
	\end{theorem}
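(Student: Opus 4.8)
The plan is to combine the deterministic greedy bound of \Cref{prop: optimal converge} with a probabilistic control of the two random quantities that appear when the fixed sequence is replaced by the Markov chain output: the term $D_{k_P}(\sum_i w_i^* \delta(X_i))^2$, which we can upper-bound by the unweighted empirical KSD $D_{k_P}(\frac{1}{n}\sum_i \delta(X_i))^2$ since the uniform weights are feasible in \eqref{eq: optimal weights}; and the maximum diagonal $\max_{i=1,\dots,n} k_P(X_i,X_i)$, which under the sub-exponential moment assumption $b = \sup_i \mathbb{E}[e^{\gamma k_P(X_i,X_i)}] < \infty$ concentrates at the $O(\gamma^{-1}\log(nb))$ scale. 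Taking expectations in \Cref{prop: optimal converge} then gives
\begin{align*}
\mathbb{E}\left[ D_{k_P}\left( \tfrac{1}{m} \sum_{j=1}^m \delta(X_{\pi(j)}) \right)^2 \right]
\leq \mathbb{E}\left[ D_{k_P}\left( \tfrac{1}{n}\sum_{i=1}^n \delta(X_i) \right)^2 \right]
+ \left( \tfrac{1+\log(m)}{m} \right) \mathbb{E}\left[ \max_{i=1,\dots,n} k_P(X_i,X_i) \right],
\end{align*}
and the proof reduces to bounding the two expectations on the right.

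For the second expectation, I would use the standard maximal-inequality trick: by Jensen's inequality applied to the convex map $t \mapsto e^{\gamma t}$,
\begin{align*}
\mathbb{E}\left[ \max_{i=1,\dots,n} k_P(X_i,X_i) \right]
\leq \tfrac{1}{\gamma} \log \mathbb{E}\left[ \max_{i=1,\dots,n} e^{\gamma k_P(X_i,X_i)} \right]
\leq \tfrac{1}{\gamma} \log\!\left( \sum_{i=1}^n \mathbb{E}\left[ e^{\gamma k_P(X_i,X_i)} \right] \right)
\leq \tfrac{\log(nb)}{\gamma},
\end{align*}
which produces exactly the last term of \eqref{eq: main bound theorem}. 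This part is routine and does not use reversibility or ergodicity — only the uniform moment bound.

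For the first expectation, by \Cref{prop: convergence control}'s companion formula \eqref{eq: KSD first},
$D_{k_P}(\frac{1}{n}\sum_i \delta(X_i))^2 = \frac{1}{n^2}\sum_{i,j=1}^n k_P(X_i,X_j)$, and since $\int k_P(x,\cdot)\,\mathrm{d}P = 0$ and the chain is $P$-invariant, each off-diagonal expectation $\mathbb{E}[k_P(X_i,X_j)]$ for $i \ne j$ is a covariance-type term that decays geometrically in $|i-j|$ by $V$-uniform ergodicity: writing $\mathbb{E}[k_P(X_i,X_j)] = \mathbb{E}\big[\int k_P(X_i,\cdot)\,\mathrm{d}(\mathrm{P}^{|i-j|}(X_i,\cdot) - P)\big]$ and using $|k_P(x,y)| \le \sqrt{k_P(x,x)}\sqrt{k_P(y,y)}$ together with $\|k_P(x,\cdot)\|_V \le \sqrt{k_P(x,x)}$ (which follows from $V \ge \sqrt{k_P(\cdot,\cdot)}$), this is at most $R\rho^{|i-j|}\,\mathbb{E}[\sqrt{k_P(X_i,X_i)}\,V(X_i)] \le R M \rho^{|i-j|}$. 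The diagonal terms contribute $\frac{1}{n^2}\sum_i \mathbb{E}[k_P(X_i,X_i)] \le \frac{1}{n}\cdot\frac{\log b}{\gamma}$ (again by Jensen applied to one coordinate), and summing the geometric off-diagonal bounds gives $\frac{2}{n^2}\sum_{i<j} R M \rho^{j-i} \le \frac{2RM}{n}\cdot\frac{\rho}{1-\rho} = \frac{CM}{n}$. Adding the diagonal and off-diagonal contributions yields $\mathbb{E}[D_{k_P}(\frac{1}{n}\sum_i\delta(X_i))^2] \le \frac{\log b}{\gamma n} + \frac{CM}{n}$, and combining with the second-expectation bound completes \eqref{eq: main bound theorem}.

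The main obstacle I anticipate is the off-diagonal estimate: one must legitimately pass from $\mathbb{E}[k_P(X_i,X_j)]$ to a $\|\cdot\|_V$-weighted total-variation bound, which requires conditioning on $X_i$, invoking the Markov property, and controlling the inner function $x \mapsto k_P(X_i, x)$ in the $V$-norm uniformly — hence the hypothesis $V(x) \ge \sqrt{k_P(x,x)}$ and the cross-moment bound $M < \infty$. Some care is also needed to ensure Fubini/Tonelli applies when interchanging the expectation over $X_i$ with the $\mathrm{P}^{|i-j|}$-integral, which the integrability afforded by $b, M < \infty$ should secure. Everything else — the two Jensen steps, the geometric sum, and the appeal to \Cref{prop: optimal converge} — is mechanical.
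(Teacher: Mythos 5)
Your proposal is correct and follows the paper's decomposition step for step: take expectations in \Cref{prop: optimal converge}, replace the optimally weighted KSD by the unweighted one via feasibility of uniform weights in \eqref{eq: optimal weights}, split $\frac{1}{n^2}\sum_{i,j}\mathbb{E}[k_P(X_i,X_j)]$ into diagonal and off-diagonal parts, control the diagonal and the maximum by the Jensen--log-sum-exp trick, and control the off-diagonal sum by geometric decay with constant $C=\tfrac{2R\rho}{1-\rho}$. The one genuine difference is in how the key off-diagonal estimate $\mathbb{E}[k_P(X_i,X_{i+r})\mid X_i]\le R\,\sqrt{k_P(X_i,X_i)}\,V(X_i)\,\rho^{r}$ is obtained. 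The paper proves this as a separate lemma via conditional mean embeddings: it expresses $\mathbb{E}[k_P(X,\cdot)\mid Y]$ as an element of $\mathcal{H}(k_P)$, applies Cauchy--Schwarz in the RKHS to peel off $\sqrt{k_P(Y,Y)}$, and then observes that unit-norm RKHS functions satisfy $|h|\le\sqrt{k_P(\cdot,\cdot)}\le V$ so that $V$-uniform ergodicity applies. You instead use the $\|\cdot\|_V$ duality directly: since $|k_P(x,y)|\le\sqrt{k_P(x,x)}\sqrt{k_P(y,y)}$ and $\sqrt{k_P(\cdot,\cdot)}\le V$, you get $\|k_P(x,\cdot)\|_V\le\sqrt{k_P(x,x)}$, and then $\bigl|\int k_P(X_i,\cdot)\,\mathrm{d}(\mathrm{P}^{r}(X_i,\cdot)-P)\bigr|\le \|k_P(X_i,\cdot)\|_V\,\|\mathrm{P}^{r}(X_i,\cdot)-P\|_V\le\sqrt{k_P(X_i,X_i)}\,R\,V(X_i)\,\rho^{r}$. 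The two routes produce identical bounds; your direct duality argument is slightly leaner, as it sidesteps the need to invoke existence of the conditional mean embedding (itself an integrability condition) and works purely with pointwise kernel bounds. You correctly flag the Fubini/measurability points, which are the only places where the argument requires care and which the moment conditions $b,M<\infty$ cover.
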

	
	\noindent The proof of \Cref{thm: main theorem} is provided in Appendix \ref{ap: proof of main theorem}.
	
	\begin{remark}
		The upper bound in \eqref{eq: main bound theorem} is asymptotically minimised when (up to log factors) $m$ is proportional to $n$.
		In practice we are interested in the case $m \ll n$, so we may for example set $m = \lfloor \frac{n}{1000} \rfloor$ if we aim for substantial compression.
		It is not claimed that the bound in \eqref{eq: main bound theorem} is tight and indeed empirical results in \Cref{sec: empirical} endorse the use of \verb+Stein Thinning+ in the small $m$ context.
	\end{remark}
	
	\begin{remark}  \label{rem: iso kernel rem}
		For $P \in \mathcal{P}$ and $k_P$ in \eqref{eq:stein_kernel}, based on a radial kernel $k$, meaning that $k(x,y) = \phi(x-y)$ for some function $\phi : \mathbb{R}^d \rightarrow \mathbb{R}$ satisfying $\nabla \phi(0) = 0$, we have that $k_P(x,x) = - \Delta \phi(0) + \phi(0) \|\nabla \log p(x) \|^2$.
		The function $x \mapsto \sqrt{k_P(x,x)}$ appearing in the preconditions of \Cref{thm: main theorem} can therefore be understood in terms of $\|\nabla \log p(x)\|$.
		Further discussion of the preconditions of \Cref{thm: main theorem} is provided in \Cref{ap: satisfy conditions}. 
	\end{remark}
	
	Since convergence in mean-square does not in general imply almost sure convergence, we next strengthen the conclusions of \Cref{thm: main theorem}.
	Our final result, \Cref{cor: bias}, therefore establishes an almost sure convergence guarantee for \verb+Stein Thinning+.
	Furthermore, the result that follows applies also in the ``biased sampler'' case, where $(X_i)_{i \in \mathbb{N}}$ is a $Q$-invariant Markov chain and $Q$ need not equal $P$:
	
	\begin{theorem} \label{cor: bias}
		Let $Q$ be a probability distribution on $\mathcal{X}$ with $P$ absolutely continuous with respect to $Q$.
		Consider a $Q$-invariant, time-homogeneous Markov chain $(X_i)_{i \in \mathbb{N}} \subset \mathcal{X}$ generated using a $V$-uniformly ergodic transition kernel, such that $V(x) \geq \frac{\mathrm{d}P}{\mathrm{d}Q}(x) \sqrt{k_P(x,x)}$.
		Suppose that, for some $\gamma > 0$,
		\begin{align*}
		b := \sup_{i \in \mathbb{N}} \mathbb{E} \left[ e^{\gamma \max\left( 1 ,  \frac{\mathrm{d}P}{\mathrm{d}Q}(X_i)^2 \right) k_P(X_i,X_i)} \right] < \infty, \quad M := \sup_{i \in \mathbb{N}} \mathbb{E}\left[ \frac{\mathrm{d}P}{\mathrm{d}Q}(X_i) \sqrt{k_P(X_i,X_i)} V(X_i)\right] < \infty .
		\end{align*}
		Let $\pi$ be an index sequence of length $m$ produced by \Cref{alg: the method} applied to the Markov chain output $(X_i)_{i=1}^n$.
		If $m \leq n$ and the growth of $n$ is limited to at most $\log(n) = O(m^{\beta/2})$ for some $\beta < 1$, then $D_{k_P} \big( \frac{1}{m} \sum_{j=1}^m \delta(X_{\pi(j)}) \big) \rightarrow 0$
		almost surely as $m,n \rightarrow \infty$.
		Furthermore, if the preconditions of \Cref{prop: convergence control} are satisfied, then $\frac{1}{m} \sum_{j=1}^m \delta(X_{\pi(j)}) \Rightarrow P$ 		almost surely as $m,n \rightarrow \infty$.
	\end{theorem}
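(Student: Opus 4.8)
The plan is to deduce the biased-sampler statement from \Cref{prop: optimal converge} by feeding it a well-chosen \emph{sub-optimal} weight vector that plays the role of a self-normalised importance sampler, and then to re-use the variance and maximal-inequality estimates from the proof of \Cref{thm: main theorem}. Concretely, writing $r := \frac{\mathrm{d}P}{\mathrm{d}Q}$ (which exists since $P$ is absolutely continuous with respect to $Q$) and $Z_n := \sum_{i=1}^n r(X_i)$, I would apply \Cref{prop: optimal converge} with the feasible probability weights $w_i = r(X_i)/Z_n$. Using the identity $D_{k_P}(\sum_i w_i \delta(X_i))^2 = \| \sum_i w_i k_P(X_i,\cdot) \|_{\mathcal{H}(k_P)}^2$, this yields
\[
D_{k_P}\!\left(\tfrac{1}{m}\sum_{j=1}^{m} \delta(X_{\pi(j)})\right)^{2} \;\le\; A_n + B_{m,n}, \qquad A_n := \frac{1}{Z_n^{2}}\left\| \sum_{i=1}^{n} r(X_i)\, k_P(X_i,\cdot) \right\|_{\mathcal{H}(k_P)}^{2},
\]
with $B_{m,n} := \big(\tfrac{1+\log m}{m}\big)\max_{i \le n} k_P(X_i,X_i)$; it then suffices to show that $A_n \to 0$ and $B_{m,n} \to 0$ almost surely as $m,n \to \infty$.

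For the first term I would set $h(x) := r(x)\, k_P(x,\cdot) \in \mathcal{H}(k_P)$ and observe that the mean-zero property $\int k_P(x,\cdot)\,\mathrm{d}P(x) = 0$ (inherited, as in \Cref{thm: main theorem}) forces $\int h\,\mathrm{d}Q = \int k_P(y,\cdot)\,\mathrm{d}P(y) = 0$, so that $A_n = (n/Z_n)^{2}\,\|\bar h_n\|_{\mathcal{H}(k_P)}^{2}$ where $\bar h_n := \frac{1}{n}\sum_{i=1}^n h(X_i)$ is a centred, Hilbert-space-valued ergodic average. The Markov-chain law of large numbers gives $Z_n/n \to \int r \,\mathrm{d}Q = 1$ almost surely, so the task reduces to proving $\|\bar h_n\|_{\mathcal{H}(k_P)} \to 0$ almost surely. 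For this I would reproduce the computation behind \eqref{eq: main bound theorem}: expand $\|\bar h_n\|^{2} = \frac{1}{n^{2}}\sum_{i,j}\langle h(X_i),h(X_j)\rangle_{\mathcal{H}(k_P)}$, condition on the earlier of the two states, and note that for any $g$ in the unit ball of $\mathcal{H}(k_P)$ one has $|r(y)g(y)| \le r(y)\sqrt{k_P(y,y)} \le V(y)$; then \eqref{eq: V unif ergod} gives $|\mathbb{E}\langle h(X_i),h(X_j)\rangle| \le R\rho^{|i-j|}M$, while Jensen's inequality applied to the definition of $b$ yields the diagonal bound $\mathbb{E}\|h(X_i)\|^{2} = \mathbb{E}[r(X_i)^{2}k_P(X_i,X_i)] \le \mathbb{E}[\max(1,r(X_i)^2)k_P(X_i,X_i)] \le \tfrac{\log b}{\gamma}$. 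Summing produces $\mathbb{E}\|\bar h_n\|^{2} \le \tfrac{1}{n}\big(\tfrac{\log b}{\gamma}+CM\big)$, i.e.\ the first two terms of \eqref{eq: main bound theorem}. Evaluating this along $n_\ell = \ell^{2}$ and invoking Borel--Cantelli gives $\|\bar h_{n_\ell}\| \to 0$ a.s., and the intermediate indices $n_\ell \le n < n_{\ell+1}$ are absorbed using $\|\bar h_n\| \le \|\bar h_{n_\ell}\| + \tfrac{1}{n_\ell}\sum_{n_\ell < i \le n_{\ell+1}}\|h(X_i)\|$, the last term of which has second moment $O(\ell^{-2})$ (the block contains $O(\ell)$ terms with $\sup_i\mathbb{E}\|h(X_i)\|^{2}<\infty$) and hence also vanishes a.s.\ by Borel--Cantelli. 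Thus $A_n\to 0$ a.s.

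For $B_{m,n}$ I would note that $\max(1,r^2)\ge 1$ gives $\mathbb{E}[e^{\gamma k_P(X_i,X_i)}]\le b$, so a union bound yields $\mathbb{P}\big(\max_{i\le n}k_P(X_i,X_i) > \tfrac{3\log n + \log b}{\gamma}\big) \le n^{-2}$; evaluating along $n_\ell=\ell^2$, applying Borel--Cantelli, and then using that $n\mapsto\max_{i\le n}k_P(X_i,X_i)$ is nondecreasing to fill the gaps shows $\max_{i\le n}k_P(X_i,X_i)=O(\log n)$ a.s. The growth restriction $\log n = O(m^{\beta/2})$ with $\beta<1$ then turns this into $B_{m,n} = O\big((\log m)(\log n)/m\big) = O\big((\log m)\,m^{\beta/2-1}\big) \to 0$ a.s. Combining the two bounds gives $D_{k_P}(\tfrac1m\sum_j\delta(X_{\pi(j)})) \to 0$ a.s., and the final weak-convergence claim follows immediately by \Cref{prop: convergence control} whenever its preconditions hold. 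The step I expect to be the main obstacle is making the Hilbert-space mixing argument for $A_n$ go through with a possibly \emph{unbounded} importance ratio $r$: this is precisely what dictates the strengthened hypotheses relative to \Cref{thm: main theorem}, namely inserting $\max(1,r^2)$ into the exponential-moment constant $b$ and demanding $V(x) \ge r(x)\sqrt{k_P(x,x)}$, so that $y\mapsto r(y)g(y)$ is dominated by $V(y)$ uniformly over the unit ball of $\mathcal{H}(k_P)$ and the diagonal term retains a finite exponential moment. A second, more routine hurdle is the gap-filling in the subsequence/Borel--Cantelli argument, which needs a genuine \emph{second}-moment estimate on the block increments $\sum_{n_\ell<i\le n_{\ell+1}}\|h(X_i)\|$ rather than just a first-moment one; beyond these points the argument is the bookkeeping already carried out for \Cref{thm: main theorem}.
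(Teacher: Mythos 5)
Your proposal is correct and follows essentially the same route as the paper's proof: apply \Cref{prop: optimal converge} with the self-normalised weights $w_i = r(X_i)/Z_n$, recognise the resulting quantity as a KSD of the importance-corrected empirical measure (the paper's $k_Q(x,y) = r(x)k_P(x,y)r(y)$ is precisely your $\langle h(x),h(y)\rangle_{\mathcal{H}(k_P)}$), kill it via a mean-square bound plus Borel--Cantelli along the squares $n_\ell = \ell^2$ with the LLN handling $Z_n/n \to 1$, and control the regulariser term via the exponential moment condition and the growth restriction $\log n = O(m^{\beta/2})$. The only deviations are cosmetic: for the oscillation between successive squares you use a second-moment estimate on the block increments of $\|\bar h_n\|$ where the paper bounds $|D_n^2 - D_{r^2}^2|$ via the maximal-term bound, and for the max term you first establish the a.s.\ bound $\max_{i\le n} k_P(X_i,X_i) = O(\log n)$ before plugging in, whereas the paper argues directly about $\tfrac{\log m}{m}\max_{i\le n}k_P(X_i,X_i)$ in its Lemma S4 --- both variants are valid and carry the same content.
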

	\noindent The proof of \Cref{cor: bias} is provided in \Cref{app: cor bias proof}. 
	The interpretation of \Cref{cor: bias} is that one may sample states from a Markov chain that is not $P$-invariant and yet, under the stated assumptions (which ensure that regions of high probability under $P$ are explored), one can use \texttt{Stein Thinning} to still obtain a consistent approximation of $P$.
	This can be contrasted, for example, with the \texttt{Support Points} method of \cite{Mak2018}, which relies on $P$ being well-approximated by the MCMC output.
	This completes our theoretical analysis of \texttt{Stein Thinning}.

	\section{Empirical Assessment} \label{sec: empirical}
	
	In this section we compare the performance of \verb+Stein Thinning+ with existing methods for post-processing MCMC output.
	Our motivation derives from a problem in which we must infer a 38-dimensional parameter in a calcium signalling model defined by a stiff system of 6 coupled ordinary differential equations (ODEs).
	Posterior uncertainty is required to be propagated through a high-fidelity simulation in a multi-scale and multi-physics model $f$ of the human heart.
	Here, compression of the MCMC output can be used to construct an approximately optimal experimental design on which $f$ can be evaluated.
	The calcium model is, however, unsuitable for conducting a thorough {\it in silico} assessment due to its associated computational cost.
	Therefore in \Cref{subsec: Goodwin} we first consider a simpler ODE model, where $P$ can be accurately approximated.
	Then, as an intermediate example, in \Cref{subsec: Lotka} we consider an ODE model that induces stronger correlations among the parameters in $P$, before addressing the calcium model in \Cref{subsec: cardiac}.
	
	In \Cref{ap: MCMC methods} we describe the generic structure of a parameter inference problem for ODEs.
	In all instances the aim is to post-process the output from MCMC, in order to produce an accurate empirical approximation of the posterior supported on a small number $m \ll n$ of the states that were visited.
	The following methods were compared:
	\begin{itemize}
		\item The standard approach, which estimates a burn-in period using either the \textit{GR diagnostic} $\hat{b}^{\text{GR},L}$, $L > 1$, of \cite{gelman1992, brooks1998general, gelman2014} or the more sophisticated \textit{VK diagnostic} $\hat{b}^{\text{VK},L}$, $L \geq 1$, of \cite{vats2018revisiting}, in each case based on $L$ independent chains as described in \Cref{section: background on diagnostics}, followed by thinning as per \eqref{eq: std post process}.
		\item The \texttt{Support Points} algorithm proposed in \cite{Mak2018}, implemented in the \verb+R+ package \verb+support+. 
		\item The \verb+Stein Thinning+ algorithm that we have proposed, with each of the kernel choices described in \Cref{subsec: kernel choice}.
	\end{itemize}
	
	To ensure that our empirical findings are not sensitive to the choice of MCMC method, we implemented four Metropolis--Hastings samplers that differ qualitatively according to the sophistication of their proposal.
    These were: (i) the Gaussian random walk (\texttt{RW}); (ii) the adaptive Gaussian random walk (\texttt{ADA-RW}), which uses an estimate of the covariance of the target \citep{haario1999adaptive}; (iii) the Metropolis-adjusted Langevin algorithm (\texttt{MALA}), which takes a step in the direction of increasing Euclidean gradient, perturbed by Gaussian noise \citep{Roberts1996}; (iv)  the preconditioned version of \texttt{MALA} (\texttt{P-MALA}), which employs a preconditioner based on the Fisher information matrix \citep{Girolami2011}. 
    Full details are in \Cref{ap: MCMC methods}.
    Metropolis--Hastings algorithms were selected on the basis that we were able to successfully implement them on the challenging calcium signalling model in \Cref{subsec: cardiac}, which required manually interfacing with the numerical integrator to produce reliable output.

	\subsection{Goodwin Oscillator} \label{subsec: Goodwin}
	
	The first example that we consider is a negative feedback oscillator due to \cite{Goodwin1965}. 
	The ODE model and the associated $d=4$ dimensional inference problem are described in \Cref{app:addtional_Goodwin}, where one trace plot for each MCMC method, of length $n=2 \times 10^6$, are presented in \Cref{fig:Goodwin_traceplots_z}.

	First we consider the standard approach to post-processing MCMC output, as per \eqref{eq: std post process}.
	From the trace plots in \Cref{fig:Goodwin_traceplots_z}, it is clear that a burn-in period $b > 0$ is required. 
	For each method we therefore computed the GR and VK diagnostics, to arrive at candidate values $b$ for the burn-in period. 
	Default settings were used for all diagnostics, which were computed both for the multivariate $d$-dimensional state vector and for the univariate marginals, as reported in \Cref{app:addtional_Goodwin}.
	The GR diagnostics were computed using $L=6$ independent chains and the VK diagnostics were computed using both $L=1$ and $L=6$ independent chains; note that when $L > 1$, these diagnostics have access to more information in comparison with \verb+Stein Thinning+, in terms of the number of samples that are available to the method.  
	The estimated values for the burn-in period are reported in \Cref{app:addtional_Goodwin}, \Cref{table:burnin-Goodwin}.
	For all MCMC methods, neither the univariate nor the multivariate GR diagnostics were satisfied, so that $\hat{b}^{\text{GR},6} > n$ and estimation using \eqref{eq: std post process} cannot proceed.
	The VK diagnostic produced values $\hat{b}^{\text{VK},L} < n$, which typically led to about half of the MCMC output being discarded.
	Although well-suited for their intended task of minimising bias in MCMC output, the smaller number of states left after burn-in removal may lead to inefficient approximation of $P$ and derived quantities of interest, strikingly so in the case of the GR diagnostic.
	The use of an optimality criterion enables \verb+Stein Thinning+ to directly address this bias-variance trade-off.
	Of course, one can in principle run more iterations of MCMC to provide more diversity in the remainder of the sample path after burn-in is removed, but in applications such as the calcium model of \Cref{subsec: cardiac} the computational cost associated with each iteration presents a practical limitation in running more iterations of an MCMC method.
	Effective methods to post-process limited output (or, equivalently, a long output from a poorly mixing Markov chain) are therefore important.
	
		\begin{figure}[t!]
		\centering
		\includegraphics[width = 0.4\textwidth]{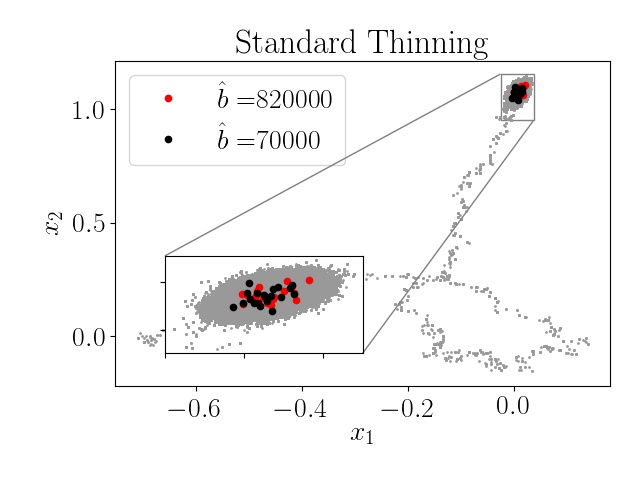}		
		\includegraphics[width = 0.4\textwidth]{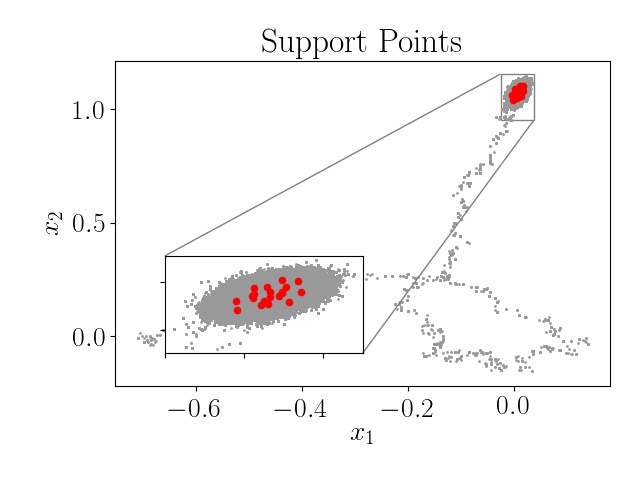}	
		\includegraphics[width = 0.4\textwidth]{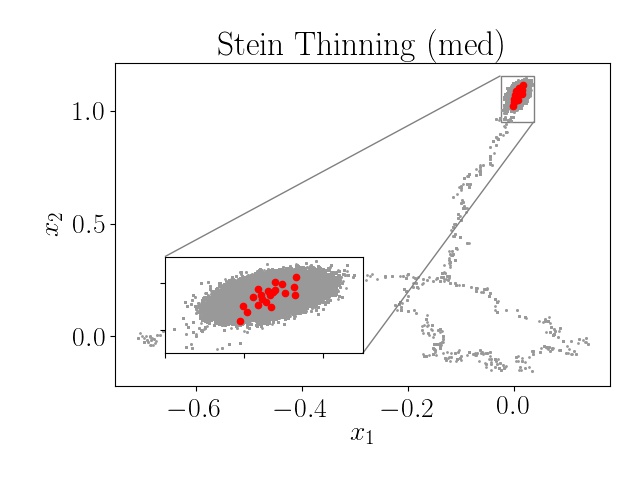}
		\includegraphics[width = 0.4\textwidth]{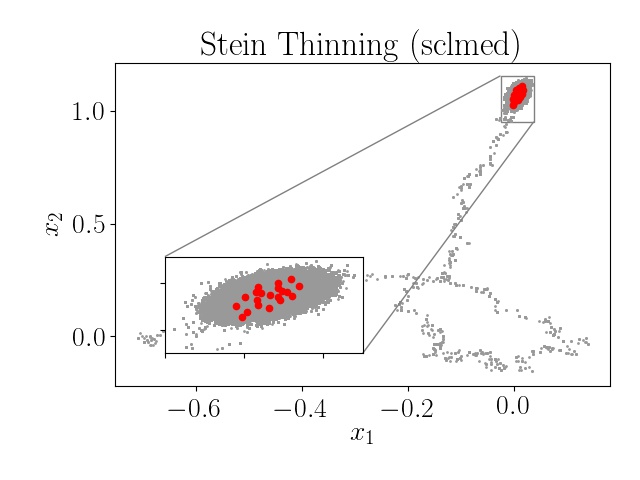}		
		\includegraphics[width = 0.4\textwidth]{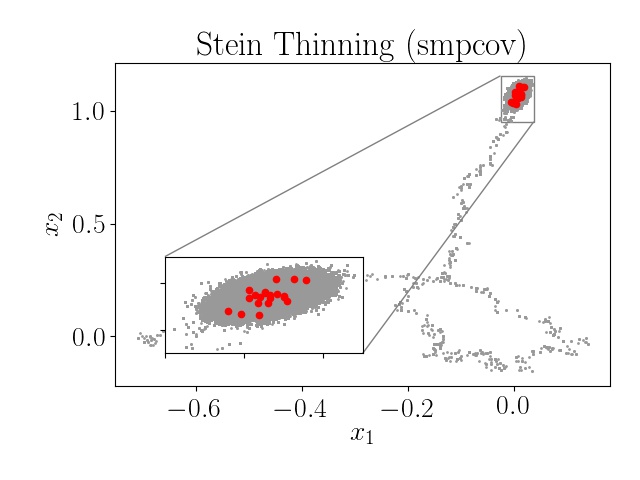}	
		
		\caption{Projections on the first two coordinates of the \texttt{RW} MCMC output from the Goodwin oscillator (gray dots), together with the first $m=20$ points selected using: the standard approach of discarding burn-in and thinning the remainder (the estimated burn in period is indicated in the legend);  the \texttt{Support Points} method; \texttt{Stein Thinning}, for each of the settings \texttt{med}, \texttt{sclmed}, \texttt{smpcov}.
		}
		\label{fig:Goodwin_SP_RW}
	\end{figure}

	Having identified a burn-in period, the standard approach thins the remainder of the sample path according to \eqref{eq: std post process}. 
	In the experiments that follow we focus on the VK diagnostic and consider both the smallest and largest estimates obtained for the burn-in period.
	The resulting index sets $\pi$ are displayed, for $m = 20$ and \texttt{RW} (the simplest MCMC method) in \Cref{fig:Goodwin_SP_RW} (top left panel), and in \Cref{app:addtional_Goodwin}, Figures \ref{fig:Goodwin_SP_ADA-RW} (\texttt{ADA-RW}), \ref{fig:Goodwin_SP_MALA} (\texttt{MALA}), \ref{fig:Goodwin_SP_MALA_PRECOND} (\texttt{P-MALA}).
	In the same figures (top right panel) we show the set of \texttt{Support Points} obtained using algorithm proposed by \cite{Mak2018}.
	The remaining panels display the output from \verb+Stein Thinning+.
	Compared to the standard approach, \texttt{Support Points} and \verb+Stein Thinning+ produce sets that are more structured.

	\begin{figure}
		\centering
		\includegraphics[width = 1
		\textwidth]{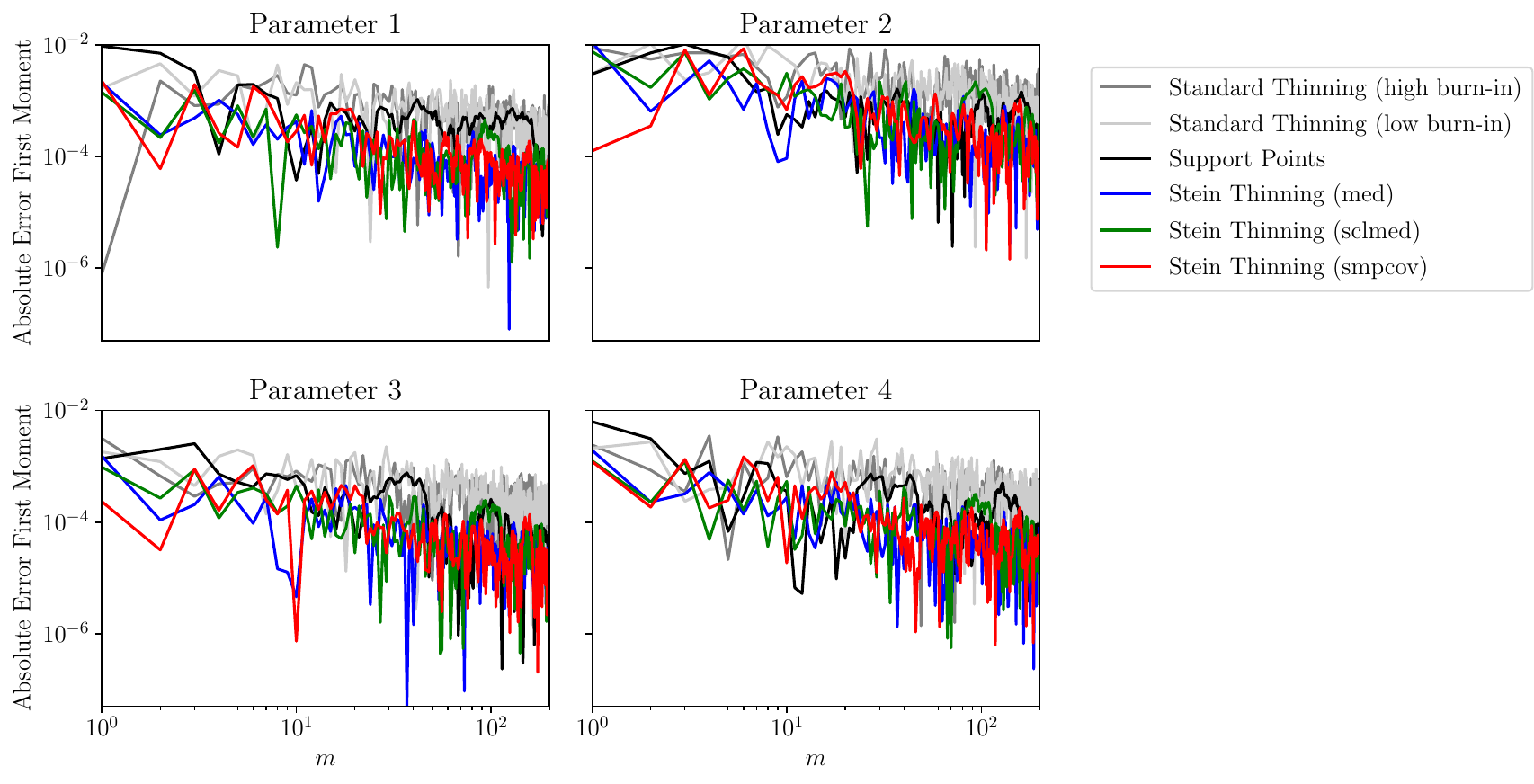}
		\caption{Absolute error of estimates for the posterior mean of each parameter in the Goodwin oscillator, based on output from \texttt{RW} MCMC.
		}
		\label{fig:Goodwin_moments_RW}
	\end{figure}
	
	\begin{figure}
		\centering
		\includegraphics[width = 1\textwidth]{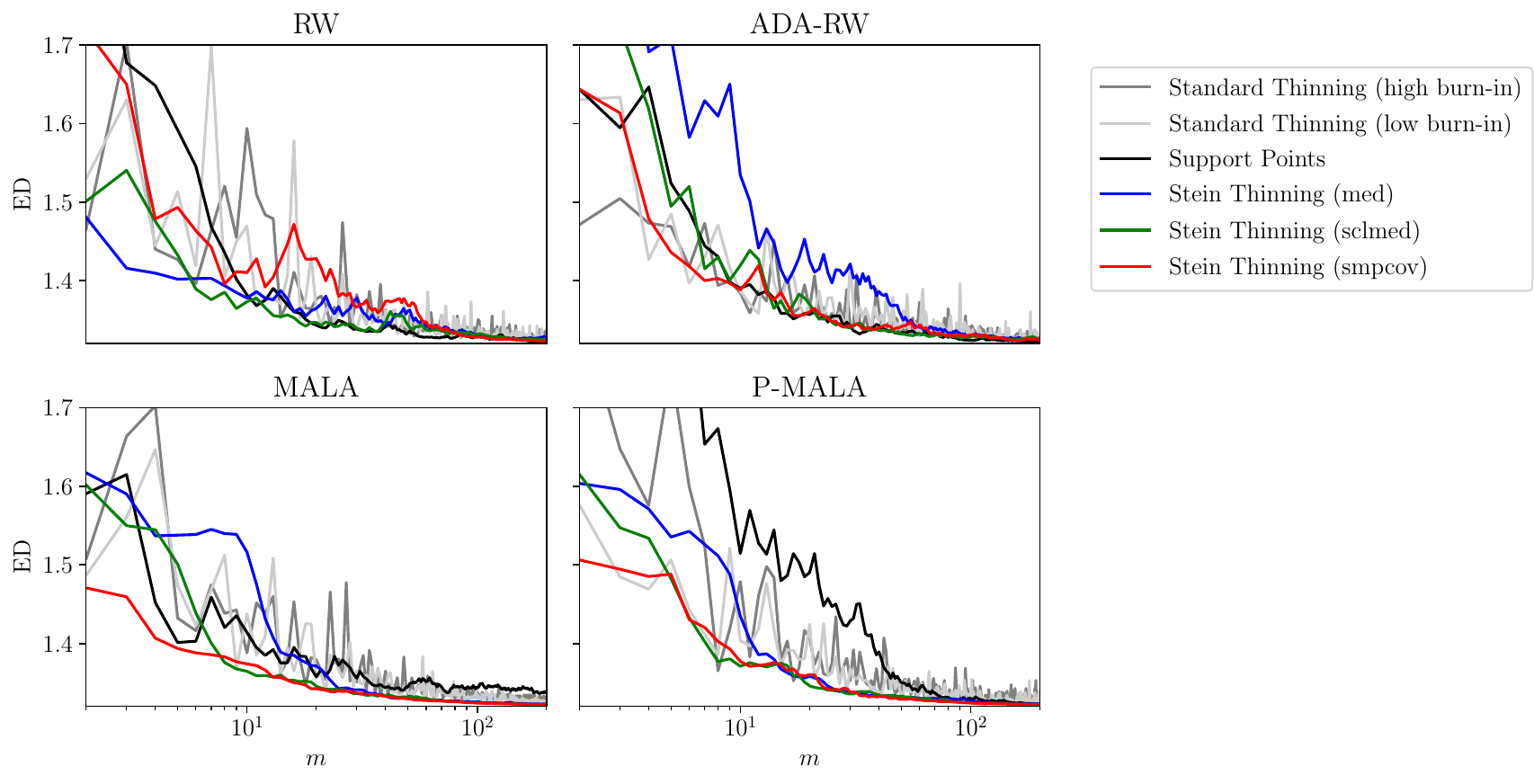}
	
		\caption{Goodwin oscillator: Energy distance (ED) to the posterior, as per \eqref{eq: ED}, for empirical distributions obtained through traditional burn-in and thinning (grey lines), \texttt{Support Points} (black line) and \texttt{Stein Thinning} (colored lines), based on output from four different MCMC methods. 
		}
		\label{fig:Goodwin_MED}
	\end{figure}
	
	\begin{figure}
		\centering
		\includegraphics[width = 1\textwidth]{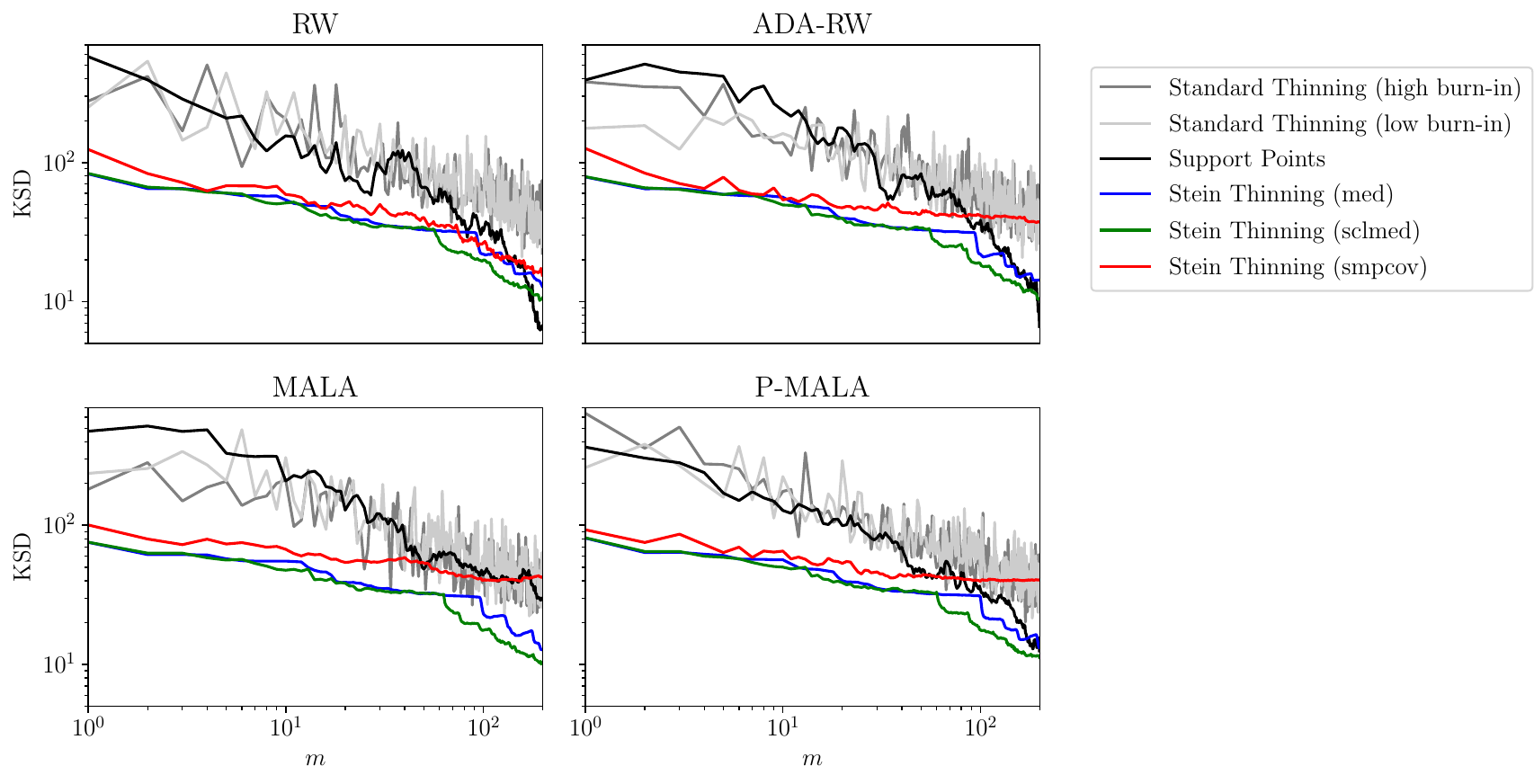}

		\caption{Goodwin oscillator: Kernel Stein discrepancy (KSD) based on \texttt{med}, for empirical distributions obtained through traditional burn-in and thinning (grey lines), \texttt{Support Points} (black line) and \texttt{Stein Thinning} (colored lines), based on output from four different MCMC methods. 
		}
		\label{fig:Goodwin_KSD}
	\end{figure}

	To assess the performance of these competing methods, we first considered the toy problem of approximating the posterior mean of each parameter in the Goodwin oscillator as an average of $m$ points selected from the MCMC output.
	\Cref{fig:Goodwin_moments_RW} displays absolute errors for each method, based on \texttt{RW}; our ground truth was provided by an extended run of MCMC.
	Results for the other MCMC methods are provided in \Cref{app:addtional_Goodwin}, Figures \ref{fig:Goodwin_moments_ADA_RW} (\texttt{ADA-RW}), \ref{fig:Goodwin_moments_MALA} (\texttt{MALA}), \ref{fig:Goodwin_moments_PRECOND-MALA} (\texttt{P-MALA}).
	Broadly speaking, \texttt{Stein Thinning} tends to provide more accurate estimators compared to the alternatives considered.
	From \Cref{fig:Goodwin_moments_RW} it is difficult to see any difference in performance between \texttt{med}, \texttt{sclmed} and \texttt{smpcov}.
	To gain more insight, in \Cref{app:addtional_Goodwin} we plot marginal density estimates in Figures \ref{fig:Goodwin_densities_RW} (\texttt{RW}), \ref{fig:Goodwin_densities_ADA-RW} (\texttt{ADA-RW}), \ref{fig:Goodwin_densities_MALA} (\texttt{MALA}), \ref{fig:Goodwin_densities_PRECOND-MALA} (\texttt{P-MALA}).
	It is apparent that \texttt{Stein Thinning} improves on the standard approach, whilst \texttt{med} and \texttt{sclmed} performed slightly better than \texttt{smpcov}.
	This may be because in \texttt{smpcov} there are more degrees of freedom in $\Gamma$ that must be estimated.
	\texttt{Support Points} performed on a par with \texttt{Stein Thinning} based on \texttt{smpcov}.

	To facilitate a more principled assessment, we computed two quantitative measures for how well the resulting empirical distributions approximate the posterior. 
	These were (a) the energy distance \citep[ED;][]{szekely2004testing,baringhaus2004new}, given up to an additive constant by
	\begin{align}
	\text{ED} := \frac{2}{m}  \sum_{j=1}^m \int  \| x - x_{\pi(j)} \|_{\Sigma} \; \mathrm{d}P(x) - \frac{1}{m^2} \sum_{j,j' = 1}^m \|x_{\pi(j)} - x_{\pi(j')} \|_{\Sigma}, \label{eq: ED}
	\end{align}
	where in this paper we used the norm $\|x\|_\Sigma := \|\Sigma^{-1/2} x\|$ induced by the covariance matrix of $P$, with both $\Sigma$ and \eqref{eq: ED} being estimated from MCMC output, and (b) the KSD based on \texttt{med}, the simplest setting for $\Gamma$. 
	ED serves as an objective performance measure, being closely related to the quantity that \texttt{Support Points} attempts to minimise (\cite{Mak2018} used the $\|\cdot\|$ norm in place of $\|\cdot\|_\Sigma$), while KSD is the performance measure that is being directly optimised in \verb+Stein Thinning+.
	Our decision to include KSD in the assessment is motivated by three factors; (i) ED is somewhat insensitive to detail, making it difficult to rank competing methods; (ii) the empirical approximation of ED in \eqref{eq: ED} relies on access to high-quality MCMC output, but this will not be available in \Cref{subsec: cardiac}; (iii) Stein discrepancies are the only computable performance measures in the Bayesian context, to the best of our knowledge, that have been proven to provide convergence control.

	The results for ED are shown in \Cref{fig:Goodwin_MED}.
	Here \verb+Stein Thinning+ based on \texttt{sclmed} performed at least as well as the other methods considered and, surprisingly, out-performed \texttt{Support Points} when applied to \texttt{MALA} and \texttt{P-MALA} output.
	This may be because \texttt{MALA} and \texttt{P-MALA} provided worse approximations to $P$ compared with \texttt{RW} and \texttt{ADA-RW} (recall that \texttt{Support Points} relies on the MCMC output providing an accurate approximation of $P$).
	Note that neither ED nor KSD values will tend to 0 as $m \rightarrow \infty$ in this experiment, since the number $n$ of MCMC iterations was fixed.
	The corresponding results for KSD are presented in \Cref{fig:Goodwin_KSD} and show a clearer performance ordering of the competing methods, with \verb+Stein Thinning+ based on \texttt{med} and \texttt{sclmed} out-performing all other methods for all but the largest values of $m$ considered.
	The \texttt{smpcov} setting performed well for small $m$ but for large $m$ its performance degraded.
	The performance ordering under KSD was identical across the different MCMC output.

	\subsection{Lotka--Volterra} \label{subsec: Lotka}
	
	\begin{figure}
		\centering
		\includegraphics[width = 1\textwidth]{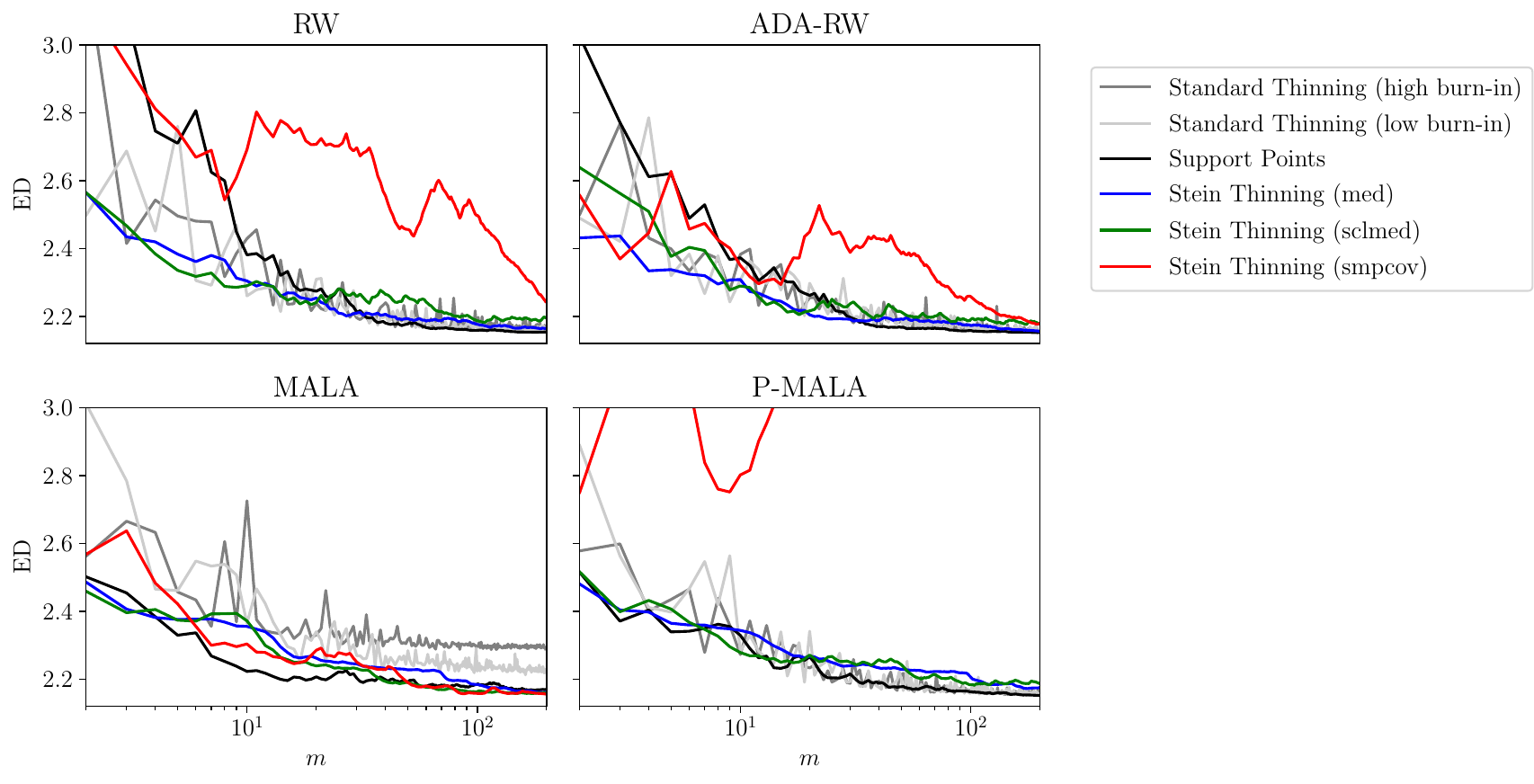}
		\caption{Lotka--Volterra model: Energy distance (ED) to the posterior, as per \eqref{eq: ED}, for empirical distributions obtained through traditional burn-in and thinning (grey lines), \texttt{Support Points} (black line) and \texttt{Stein Thinning} (colored lines), based on output from four different MCMC methods. 
		}
		\label{fig:Lotka_ED}
	\end{figure}
	
	The second example that we consider is the predator-prey model of \cite{Lotka1926} and \cite{Volterra1926}.
	A description of the $d=4$ dimensional inference task, the output from MCMC methods and the implementation of thinning procedures is reserved for \Cref{app:addtional_Lotka}. 
	Compared to the Goodwin oscillator, the Lotka--Volterra posterior $P$ exhibits stronger correlation among parameters.
	This has consequences for our assessment, since now all MCMC methods, and in particular \texttt{MALA}, mixed less well compared to corresponding results for the Goodwin oscillator, as can be seen from the trace plots in \Cref{app:addtional_Lotka}, \Cref{fig:Lotka_traceplots_z}.
	Results are reported for ED in \Cref{fig:Lotka_ED}.
	It can be seen that \texttt{Stein Thinning} based on \texttt{med} and \texttt{sclmed} performed comparably with \texttt{Support Points}, being better for small $m$ in the case of \texttt{RW} and \texttt{ADA-RW} and marginally worse for large $m$ in \texttt{RW}, \texttt{ADA-RW} and \texttt{P-MALA}.
	Interestingly, the setting \texttt{smpcov} was associated with poor performance on output from \texttt{RW}, \texttt{ADA-RW} and especially \texttt{P-MALA}.
	This may be because, when $\Gamma$ is poorly conditioned, any error in an estimate for $\Gamma$ will be amplified when computing $\Gamma^{-1}$.
	However, in the case of \texttt{MALA}, which mixed poorly, the standard approach of burn-in removal and thinning performed poorly and all settings of \texttt{Stein Thinning} provided an improvement.

	Results for KSD are reported in \Cref{fig:Lotka_KSD}. 
    The performance ordering of competing methods under KSD is similar to that reported in \Cref{subsec: Goodwin}, except for the \texttt{smpcov} setting which appears to improve the performance of \texttt{Stein Thinning} for larger values of $m$ in the context of \texttt{MALA}.
    This may be because \texttt{smpcov} serves to ``whiten'' the correlation structure in $P$, such that the resulting geometry is more favourable for the construction of an empirical approximation.
    However, this improved performance was not seen on \texttt{P-MALA}.
	In all cases \texttt{Stein Thinning} out-performed \texttt{Support Points}.

	\begin{figure}[t!]
		\centering
		\includegraphics[width = 1\textwidth]{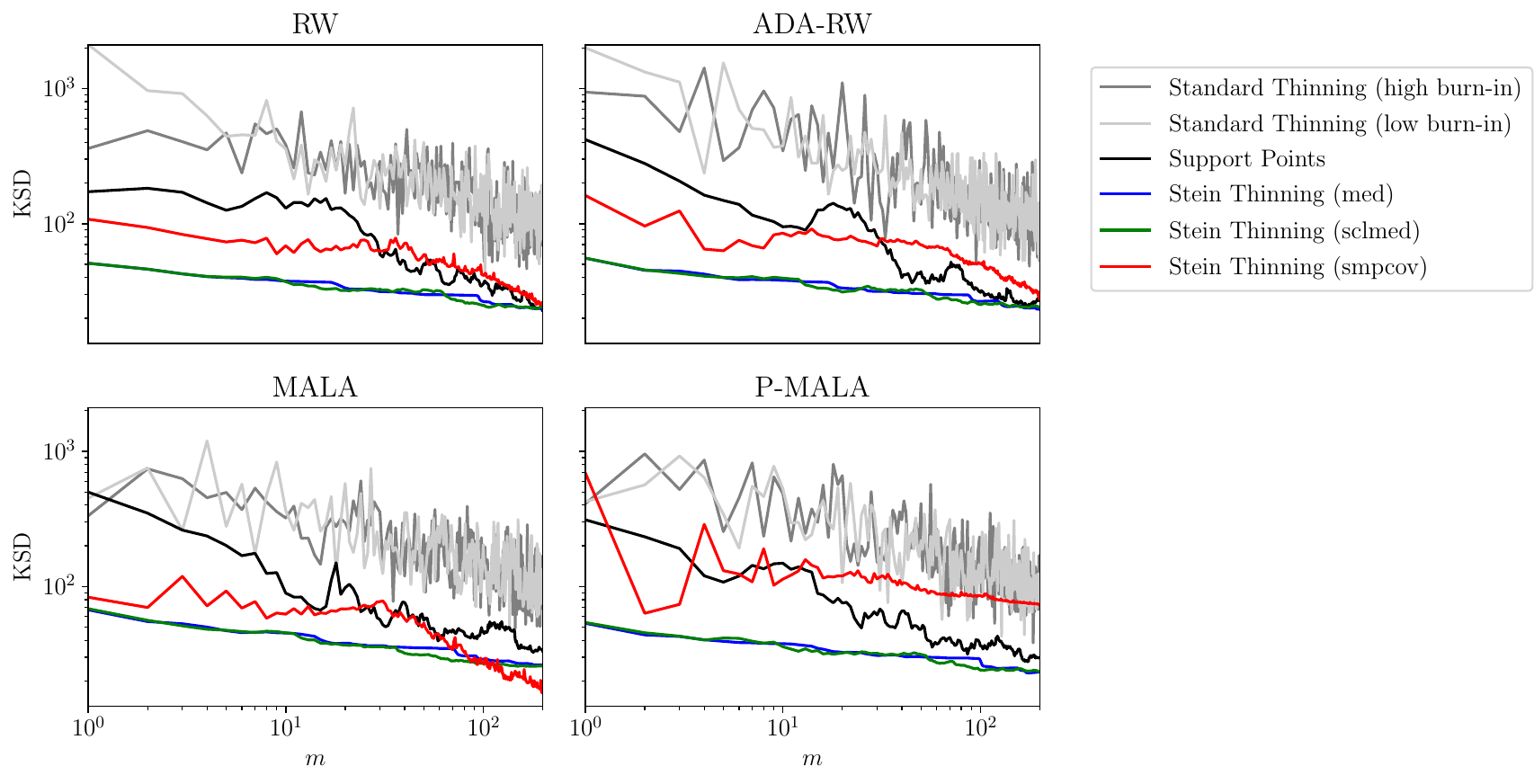}
		
		\caption{Lotka-Volterra model: Kernel Stein discrepancy (KSD) based on \texttt{med}, for empirical distributions obtained through traditional burn-in and thinning (grey lines), \texttt{Support Points} (black line) and \texttt{Stein Thinning} (colored lines), based on output from four different MCMC methods. 
		}
		\label{fig:Lotka_KSD}
	\end{figure}


		\subsection{Calcium Signalling Model} \label{subsec: cardiac}

		Our final example a model for calcium signalling in cardiac cells, illustrated in \Cref{Appendix:Hinch}, \Cref{fig:cell_scheme}.
		The model describes an electrically-activated intracellular calcium signal that in turn activates the sub-cellular sarcomere, causing the muscle cell to contract and the heart to beat. 
		The intracellular calcium signal is crucial for healthy cardiac function.
		However, under pathological conditions, dysregulation of this intra-cellular signal can play a central role in the initiation and sustenance of life-threatening arrhythmias. Computational models are increasingly being applied to study this highly-orchestrated multi-scale signalling cascade to determine how changes in cell-scale calcium regulation, encoded in calcium model parameters, impact whole-organ cardiac function \citep{campos2015stochastic,niederer2019computational,colman2019arrhythmia}. 
		The computational cost of simulating from tissue-scale and organ-scale models is high, with single simulations taking thousands of CPU hours \citep{niederer2011simulating,augustin2016anatomically,strocchi2020simulating}. 
		This limits the capacity to propagate uncertainty in calcium signalling model parameters up to organ-scale simulations, so that at present it remains unclear how uncertainty in calcium signalling parameters impacts the predictions made by a whole-organ model.
		Our motivation for developing \verb+Stein Thinning+ was to obtain a compressed representation of the posterior distribution for the $d=38$ dimensional parameter of a calcium signalling model, based on a cell-scale experimental dataset, which can subsequently be used as an experimental design to propagate uncertainty through a whole-organ model.
		
		This motivating problem entails a second complication in that, compared to the example in \Cref{subsec: Goodwin} and even the example in \Cref{subsec: Lotka}, the development of an efficient MCMC method appears to be difficult.
		Thus, in the experiment that follows, we cannot rely on any of the MCMC methods that we described at the start of \Cref{sec: empirical} to provide anything more than a crude approximation of the posterior, at best.
		This is evidenced by the non-overlapping approximations to the posterior marginals produced when different random seeds are used; see \Cref{fig:Hinch_kdeA,fig:Hinch_kdeB,fig:Hinch_kdeC,fig:Hinch_kdeD}.
		(Of course, it is possible that a more sophisticated sampling method may be designed for this task, but our aim here is not to develop a new sampling method.)
		Tempering of the likelihood provides a straightforward route to improve the mixing of MCMC, but the invariant distribution $Q$ will then no longer equal $P$. 
		Here we explore the potential for \texttt{Stein Thinning} to perform bias-correction for such $Q$-invariant MCMC output, in the spirit of \Cref{cor: bias}. 
		
			\begin{figure}[t!]
			\centering
			\includegraphics[width = 0.7
			\textwidth,clip,trim = 0cm 0cm 0cm 0.6cm]{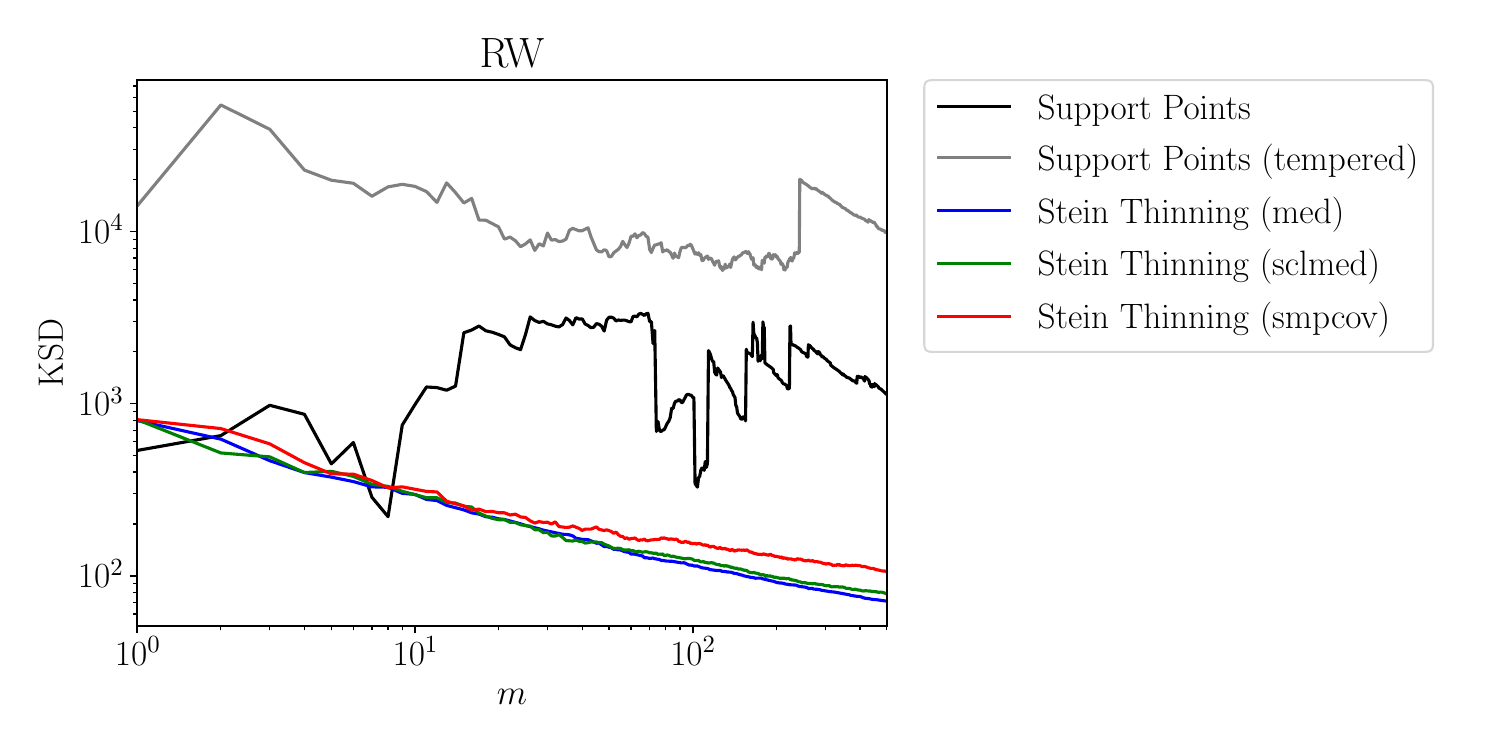}
			\caption{Calcium signalling model. Kernel Stein discrepancy (KSD) based on \texttt{med}, for empirical distributions obtained using \texttt{Support Points} and \texttt{Stein Thinning}, based on output from \texttt{RW} MCMC applied to either $P$ or a tempered version of $P$.
			}
			\label{fig:Hinch_KSD}
		\end{figure}
		
		Our focus in the remainder is on output from the \texttt{RW} MCMC method.
		This MCMC method was selected since (a) gradient-free methods can be easier to tune when the posterior is concentrated \citep{livingstone2019robustness}, and (b) once the sample path has been computed, the associated gradients can be computed in parallel.
		Both standard and tempered MCMC were performed; in the latter case the likelihood was tempered so that the (biased) target $Q$ was just about tractable for MCMC (see \Cref{Appendix:Hinch}).		
		In each case a total of $n = 4 \times 10^6$ iterations of MCMC were preformed, representing two weeks' CPU time. 
		
		\Cref{fig:Hinch_KSD} reports the KSD based on \texttt{med}, for index sets of cardinality up to $m=500$; see \Cref{Appendix:Hinch} for results for KSD based on \texttt{sclmed} (\Cref{fig:Hinch_KSD_sclmed}) and \texttt{smpcov} (\Cref{fig:Hinch_KSD_smpcov}). 
		Considering first the tempered MCMC output, the lower values of KSD achieved by \texttt{Stein Thinning} are consistent with fact that \texttt{Stein Thinning} corrects for bias due to tempering, while \texttt{Support Points} does not.
		Furthermore, \texttt{Stein Thinning} of tempered MCMC results in lower values of KSD compared to \texttt{Support Points} applied to standard MCMC output, with the latter being negatively affected by the non-convergence of the MCMC.
		Inspection of the univariate marginals demonstrates that the combination of tempering and \texttt{Stein Thinning} produces approximations that are robust to changes in the random seed, while the approximations produced by standard MCMC with an equivalent computational budget are not; see \Cref{fig:Hinch_kdeA,fig:Hinch_kdeB,fig:Hinch_kdeC,fig:Hinch_kdeD}.

	\section{Conclusion} \label{sec: conclusion}
	
	In this paper, standard approaches used to post-process and compress output from MCMC were identified as being sub-optimal when one considers the approximation quality of the empirical distribution that is produced.
	A novel method, \verb+Stein Thinning+, was proposed that seeks a subset of the MCMC output, of fixed cardinality, such that the associated empirical approximation is close to optimal.
	The theoretical analysis that we have provided for \verb+Stein Thinning+ handles the effect of the post-processing procedure jointly with the randomness involved in simulating from the Markov chain, such that consistency of the overall estimator is established.
	
	Although we focused on MCMC, the proposed method can be applied to any computational method that provides a collection of states as output.
	These include approximate (biased) MCMC methods,
	where \verb+Stein Thinning+ may be able to provide bias correction in the spirit of \Cref{cor: bias}.
	On the other hand, the main limitation of \texttt{Stein Thinning} is that it requires gradients of the log-target to be computed, which is not always practical.
	
	Our research was motivated by challenging parameter inference problems that arise in ODEs, in particular in cardiac modelling where one is interested in propagating calcium signalling parameter uncertainty through a whole-organ simulation -- a task that would na\"{i}vely be impractical or impossible using the full MCMC output.
	Our ongoing research is exploiting \verb+Stein Thinning+ in this context and is enabling us to perform scientific investigations that were not feasible beforehand.
	Furthermore, in a sequel we demonstrate that approximate implementations of \verb+Stein Thinning+ can massively reduced its implementation cost \citep{teymur2020optimal}.

	\paragraph*{Acknowledgements}
	
	The authors are grateful for support from the Lloyd's Register Foundation programme on data-centric engineering and the programme on health and medical sciences at the Alan Turing Institute.
	MR, SN and CJO were supported by the British Heart Foundation (BHF; SP/18/6/33805).
	JC was supported by the UKRI Strategic Priorities Fund (EP/T001569/1).
	PS was supported by the BHF (RG/15/9/31534).
	SN was supported by the EPSRC (EP/P01268X/1, NS/A000049/1, EP/M012492/1), the BHF (PG/15/91/31812, FS/18/27/33543), the NIHR (II-LB-1116-20001) and the Wellcome Trust (WT 203148/Z/16/Z).
	The authors thank Matthew Graham, Liam Hodgkinson, Rob Salomone, and the anonymous Editor, Associate Editor, and Reviewers, for helpful comments on the manuscript.

	\FloatBarrier

	\newpage
	\setcounter{page}{1}
	\appendix
    \beginsupplement
    \section*{Supplementary Material}
    
    This electronic supplement contains details for our \verb+Python+, \verb+R+ and \verb+MATLAB+ implementations of \verb+Stein Thinning+, proofs for all novel theoretical results reported in \Cref{sec: theory} of the main text, as well as additional material relating to the experimental assessment in \Cref{sec: empirical} of the main text.
    It is structured as follows:
    
    \startcontents[sections]
    \printcontents[sections]{l}{1}{\setcounter{tocdepth}{2}}

    \section{Software} \label{subsec: software}
	
	To assist with applications of \verb+Stein Thinning+ we have provided code in \verb+Python+, \verb+R+ and \verb+MATLAB+.
	The code is available at:
	\begin{center}
	    \url{http://stein-thinning.org/}
	\end{center}
	\noindent In this section we demonstrate \verb+Stein Thinning+ in \verb+Python+, but the syntax for \verb+Stein Thinning+ in \verb+R+ and in \verb+MATLAB+ is almost identical.
	As an illustration of how \verb+Stein Thinning+ can be used to post-process output from \verb+Stan+, consider the following simple \verb+Stan+ script that produces 1000 correlated samples from a bivariate Gaussian model:
	
	\begin{lstlisting}
	from pystan import StanModel
	mc = """
	parameters {vector[2] x;}
	model {x ~ multi_normal([0, 0], [[1, 0.8], [0.8, 1]]);}
	"""
	sm = StanModel(model_code=mc)
	fit = sm.sampling(iter=1000)
	\end{lstlisting}
	
	\noindent The bivariate Gaussian model is used for illustration, but regardless of the complexity of the model being sampled the output of \verb+Stan+ will be a \verb+fit+ object.
	The sampled points $x_i$ and the gradients $\nabla \log p(x_i)$ can be extracted from the returned \verb+fit+ object:
	
	\begin{lstlisting}
	import numpy as np
	smpl = fit['x']
	grad = np.apply_along_axis(fit.grad_log_prob, 1, smpl)
	\end{lstlisting}
	
	\noindent One can then perform \verb+Stein Thinning+ to obtain a subset of $m=40$ states by running the following code:
	
	\begin{lstlisting}
	from stein_thinning.thinning import thin
	idx = thin(smpl, grad, 40)
	\end{lstlisting}
	
	\noindent The \verb+thin+ function returns a \verb+NumPy+ array containing the row indices in \verb+smpl+ (and \verb+grad+) of the selected points. 
	The default usage requires no additional user input and is based on the \texttt{sclmed} setting from \Cref{subsec: kernel choice}, informed by the empirical analysis of \Cref{sec: empirical}.
	Alternatively, the user can choose to specify which setting to use for computing the preconditioning matrix $\Gamma$ by setting the option string \verb+pre+ to either \verb+'med'+,  \verb+'sclmed'+, or \verb+'smpcov'+. 
	For example, the default setting corresponds to
	\begin{lstlisting}
	idx = thin(smpl, grad, 40, pre='sclmed')
	\end{lstlisting}
	The ease with which \verb+Stein Thinning+ can be used makes it possible to consider a wide variety of applications, including the ODE models that we considered in \Cref{sec: empirical} of the main text.
		
		\section{Proofs} \label{ap: theory}
		
		This appendix contains detailed proofs for all novel theoretical results in the main text.
		
		\subsection{Proof of \Cref{prop: optimal converge}} \label{ap: thm 1 proof}
		
		First we state and prove two elementary results that will be useful:
		
		\begin{lemma} \label{lem: basic inequality}
			For all $a,b \geq 0$ it holds that $2 a \sqrt{a^2 + b} \leq 2a^2 + b$.
		\end{lemma}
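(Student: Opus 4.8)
The plan is to reduce the claimed inequality to the elementary identity $(x-y)^2 \ge 0$, i.e. to the arithmetic–geometric mean bound $2xy \le x^2 + y^2$. First I would observe that since $a \ge 0$ and $b \ge 0$ we have $a^2 + b \ge 0$, so $\sqrt{a^2+b}$ is a well-defined nonnegative real. Then I would apply $2xy \le x^2 + y^2$ with the substitution $x = a$ and $y = \sqrt{a^2+b}$, which gives directly
\[
2a\sqrt{a^2+b} \;\le\; a^2 + \bigl(a^2+b\bigr) \;=\; 2a^2 + b,
\]
which is exactly the assertion. This requires no case analysis because both $x$ and $y$ are nonnegative by construction.

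As an alternative route, I would note that both sides of the target inequality are nonnegative (the left because $a \ge 0$ and $\sqrt{a^2+b} \ge 0$, the right because $2a^2 + b \ge 0$), so squaring is order-preserving; the inequality is then equivalent to $4a^2(a^2+b) \le (2a^2+b)^2 = 4a^4 + 4a^2 b + b^2$, i.e. to $0 \le b^2$, which is trivially true. I would prefer to present the AM–GM version in the write-up since it is a one-line deduction and sidesteps the need to justify the monotonicity of squaring. There is no genuine obstacle in this lemma; the only minor point worth stating explicitly is that $a^2 + b \ge 0$ so that the square root makes sense, which is immediate from the hypotheses $a, b \ge 0$.
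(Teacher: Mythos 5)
Your proof is correct, and you have in fact given two valid arguments. Your preferred one — applying $2xy \le x^2 + y^2$ with $x = a$, $y = \sqrt{a^2+b}$ — is a genuinely different and slightly cleaner route than the paper's: it is a one-line consequence of AM--GM and never needs to invoke the monotonicity of $t \mapsto t^2$ on $[0,\infty)$. Your alternative route (square both sides, expand, reduce to $0 \le b^2$) is exactly the paper's proof; the paper notes that both sides are nonnegative so that squaring preserves the inequality, then cancels to $0 \le b^2$. The AM--GM version buys a shorter derivation and avoids the (admittedly trivial) justification for squaring, while the squaring version is perhaps more mechanical and requires no prior identity to be recalled. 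Either is entirely adequate here.
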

		\begin{proof}
			Since all quantities are non-negative, we may square both sides to get an equivalent inequality $4 a^2 (a^2 + b) \leq (2a^2+b)^2$.
			Expanding the brackets and cancelling terms leads to $0 \leq b^2$, which is guaranteed to hold.
		\end{proof}
		
		\begin{lemma} \label{lem: harmonic series}
			For all $m \in \mathbb{N}$ it holds that $\sum_{j=1}^m \frac{1}{j} \leq 1 + \log(m)$.
		\end{lemma}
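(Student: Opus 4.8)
The plan is to bound the harmonic sum by comparison with the integral of the decreasing function $x \mapsto 1/x$. First I would dispose of the trivial case $m = 1$, where the left-hand side equals $1$ and the right-hand side equals $1 + \log(1) = 1$, so the inequality holds with equality. For $m \geq 2$, I would isolate the first term and write $\sum_{j=1}^m \frac{1}{j} = 1 + \sum_{j=2}^m \frac{1}{j}$, reducing the task to showing $\sum_{j=2}^m \frac{1}{j} \leq \log(m)$.

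The key step is the pointwise bound: since $x \mapsto 1/x$ is decreasing on $(0,\infty)$, for each integer $j \geq 2$ and each $x \in [j-1, j]$ we have $\frac{1}{j} \leq \frac{1}{x}$, hence $\frac{1}{j} = \int_{j-1}^{j} \frac{1}{j} \, \mathrm{d}x \leq \int_{j-1}^{j} \frac{1}{x} \, \mathrm{d}x$. Summing this over $j = 2, \dots, m$ and using additivity of the integral over the partition $[1, m] = \bigcup_{j=2}^m [j-1, j]$ gives $\sum_{j=2}^m \frac{1}{j} \leq \int_{1}^{m} \frac{1}{x} \, \mathrm{d}x = \log(m)$. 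Adding back the isolated term yields $\sum_{j=1}^m \frac{1}{j} \leq 1 + \log(m)$, as required.

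There is essentially no obstacle here; the argument is a standard integral comparison. The only point requiring a modicum of care is to start the integral comparison at $j = 2$ (so that the lower limit of integration is $1$ and $\log$ is evaluated at points where it is well-behaved) rather than trying to bound the $j = 1$ term by an integral over $[0,1]$, which would diverge; isolating that term at the outset avoids the issue cleanly.
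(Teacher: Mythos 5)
Your proof is correct and follows essentially the same route as the paper: isolate the $j=1$ term and bound $\sum_{j=2}^m \frac{1}{j}$ by $\int_1^m \frac{1}{x}\,\mathrm{d}x = \log(m)$ via integral comparison. The only cosmetic difference is that you justify the comparison by the monotonicity of $x \mapsto 1/x$ (which is the cleaner and more directly relevant reason), whereas the paper cites convexity; both are valid.
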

		\begin{proof}
			Since $x \mapsto \frac{1}{x}$ is convex on $x \in (0,\infty)$, we have that the Riemann sum $\sum_{j=2}^m \frac{1}{j}$ is a lower bound for the Riemann integral $\int_1^m \frac{1}{x} \mathrm{d}x = \log(m)$.
			Thus $\sum_{j=1}^m \frac{1}{j} = 1 + \sum_{j=2}^m \frac{1}{j} \leq 1 + \log(m)$, as required.
		\end{proof}
		
		Now we present the proof of \Cref{prop: optimal converge}:
		
		\begin{proof}[Proof of \Cref{prop: optimal converge}]
			
			Let $a_m := m^2 D_{k_P}(\frac{1}{m} \sum_{j=1}^m \delta(x_{\pi(j)}) )^2$, $f_m := \sum_{j=1}^m k_P(x_{\pi(j)}, \cdot)$ and also let $S^2 := \max_{i=1,\dots,n} k_P(x_i,x_i)$, so that
			\begin{align*}
			a_m = \sum_{j=1}^m \sum_{j'=1}^m k_P(x_{\pi(j)},x_{\pi(j')}) &= a_{m-1} + k_P(x_{\pi(m)},x_{\pi(m)}) + 2 \sum_{j=1}^{m-1} k_P(x_{\pi(j)},x_{\pi(m)}) \\
			&\leq a_{m-1} + S^2 + 2 \min_{y \in \{x_i\}_{i=1}^n} f_{m-1}(y) .
			\end{align*}
			Recall that $\mathcal{H}(k_P)$ denotes the reproducing kernel Hilbert space of the kernel $k_P$ and pick an element $h^* \in \mathcal{H}(k_P)$ of the form $h^* := \sum_{i=1}^n w_i^* k_P(x_i,\cdot)$, where the weight vector $w^*$ satisfies \eqref{eq: optimal weights}.
			From this definition it follows that $\|h^*\|_{\mathcal{H}(k_P)} = D_{k_P}(\sum_{i=1}^n w_i^* \delta(x_i))$, which is the minimal KSD attainable under the constraint \eqref{eq: optimal weights}.
			Now, let $\mathcal{M}$ denote the convex hull of $\{k_P(x_i,\cdot)\}_{i=1}^n$, so that $h^* \in \mathcal{M} \subset \mathcal{H}(k_P)$ and therefore
			\begin{align}
			\min_{y \in \{x_i\}_{i=1}^n} f_{m-1}(y) = \inf_{h \in \mathcal{M}} \langle f_{m-1} , h \rangle_{\mathcal{H}(k_P)} \leq \langle f_{m-1} , h^* \rangle_{\mathcal{H}(k_P)} . \label{eq: convex hull}
			\end{align}
			Noting that $a_m = \|f_m\|_{\mathcal{H}(k_P)}^2$, we have from \eqref{eq: convex hull} and Cauchy-Schwarz that
			\begin{align*}
			\min_{y \in \{x_i\}_{i=1}^n} f_{m-1}(y) \leq \sqrt{a_{m-1}} \|h^*\|_{\mathcal{H}(k_P)}
			\end{align*}
			and therefore
			\begin{align}
			a_m \leq a_{m-1} + S^2 + 2 \sqrt{a_{m-1}} \|h^*\|_{\mathcal{H}(k_P)} . \label{eq: initial bound}
			\end{align}
			Letting 
			\begin{align}
			C_m := \frac{1}{m}\left( S^2- \|h^*\|_{\mathcal{H}(k_P)}^2 \right) \sum_{j=1}^m \frac{1}{j} ,  \label{eq: C def}
			\end{align}
			we will establish by induction that
			\begin{align}
			a_m \leq m^2 \left( \|h^*\|_{\mathcal{H}(k_P)}^2 + C_m \right) . \label{eq: induction}
			\end{align}
			This will in turn prove the result, since
			\begin{align*}
			D_{k_P}\left(\frac{1}{m} \sum_{j=1}^m \delta(x_{\pi(j)}) \right)^2 = \frac{a_m}{m^2} & \leq \|h^*\|_{\mathcal{H}(k_P)}^2 + C_m \\
			& = D_{k_P}\left(\sum_{i=1}^n w_i^* \delta(x_i) \right)^2 + C_m \\
			& \leq D_{k_P}\left(\sum_{i=1}^n w_i^* \delta(x_i) \right)^2 + \left( \frac{1 + \log(m)}{m} \right) S^2 ,
			\end{align*}
			where the upper bound on $C_m$ follows from the fact that $\|h^*\|_{\mathcal{H}(k_0)} \leq S$, combined with \Cref{lem: harmonic series}.
			
			The remainder of the proof is dedicated to establishing the induction in \eqref{eq: induction}.
			The base case $m=1$ is satisfied since $a_1 = D_{k_P}(\delta(x_{\pi(1)})) = k_P(x_{\pi(1)},x_{\pi(1)}) \leq S^2$ and $C_1 = S^2 - \|h^*\|_{\mathcal{H}(k_P)}^2$, so that $a_1 \leq \|h^*\|_{\mathcal{H}(k_P)}^2 + C_1$.
			For the inductive step, we assume that \eqref{eq: induction} holds when $m$ is replaced by $m-1$ and aim to derive \eqref{eq: induction}.
			From \eqref{eq: initial bound} and the inductive assumption, we have that
			\begin{align}
			a_m &\leq a_{m-1} + S^2 + 2 \sqrt{a_{m-1}} \|h^*\|_{\mathcal{H}(k_P)} \nonumber \\
			&\leq (m-1)^2 \left( \|h^*\|_{\mathcal{H}(k_P)}^2 + C_{m-1} \right) + S^2 + 2 (m-1) \sqrt{ \|h^*\|_{\mathcal{H}(k_P)}^2 + C_{m-1} } \|h^*\|_{\mathcal{H}(k_P)} \nonumber \\
			&= m^2 \left( \|h^*\|_{\mathcal{H}(k_P)}^2 + C_m \right) + R_m \label{eq: almost}
			\end{align}
			where
			\begin{align*}
			R_m & := (m-1)^2 C_{m-1} - m^2 C_m + (1-2m) \|h^*\|_{\mathcal{H}(k_P)}^2 + S^2 \\
			& \hspace{200pt} + 2 (m-1) \sqrt{ \|h^*\|_{\mathcal{H}(k_P)}^2 + C_{m-1}} \|h^*\|_{\mathcal{H}(k_P)} 
			\end{align*}
			The induction \eqref{eq: induction} will therefore follow from \eqref{eq: almost} if $R_m \leq 0$.
			Now, $R_m \leq 0$ if and only if
			\begin{align*}
			2 \sqrt{ \|h^*\|_{\mathcal{H}(k_P)}^2 + C_{m-1}} \|h^*\|_{\mathcal{H}(k_P)} \leq \frac{ m^2 C_m - (m-1)^2 C_{m-1} }{m-1} - \frac{ S^2 - \|h^*\|_{\mathcal{H}(k_P)}^2 }{m-1} + 2\|h^*\|_{\mathcal{H}(k_P)}^2 .
			\end{align*}
			From \Cref{lem: basic inequality} it must hold that
			\begin{align*}
			2 \sqrt{ \|h^*\|_{\mathcal{H}(k_P)}^2 + C_{m-1}} \|h^*\|_{\mathcal{H}(k_P)} \leq 2 \|h^*\|_{\mathcal{H}(k_P)}^2 + C_{m-1} ,
			\end{align*}
			meaning it is sufficient to show that
			\begin{align}
			2 \|h^*\|_{\mathcal{H}(k_P)}^2 + C_{m-1} \leq \frac{ m^2 C_m - (m-1)^2 C_{m-1} }{m-1} - \frac{ S^2 - \|h^*\|_{\mathcal{H}(k_P)}^2 }{m-1} + 2\|h^*\|_{\mathcal{H}(k_P)}^2 . \label{eq: ineq to prove}
			\end{align}
			Algebraic simplification of \eqref{eq: ineq to prove} reveals that \eqref{eq: ineq to prove} is equivalent to
			\begin{align}
			m C_m - (m-1) C_{m-1} \leq \frac{1}{m} \left( S^2 - \|h^*\|_{\mathcal{H}(k_P)}^2 \right)  \label{eq: final check}
			\end{align}
			and, using \eqref{eq: C def}, we verify that \eqref{eq: final check} is satisfied as an equality. 
			This completes the inductive argument.
		\end{proof}
		
	\begin{remark}
	    Our results can also be applied to maximum mean discrepancies, where $D_{\mathcal{F}}$ in \eqref{eq:IPMs} is given by $\mathcal{F} = \{f \in \mathcal{H}(k) : \|f\|_{\mathcal{H}(k)} \leq 1 \}$ and $k : \mathcal{X} \times \mathcal{X} \rightarrow \mathbb{R}$ denotes a reproducing kernel \citep{song2008learning}.
	    Indeed, we can set
	    $$
	    k_P(x,y) = k(x,y) - \int_{\mathcal{X}} k(x,y) \mathrm{d}P(x) - \int_{\mathcal{X}} k(x,y) \mathrm{d}P(y) + \int_{\mathcal{X}} \int_{\mathcal{X}} k(x,y) \mathrm{d}P(x) \mathrm{d}P(y)
	    $$
	    in order for $\int_{\mathcal{X}} k_P(x,\cdot) \mathrm{d}P = 0$ to be satisfied for all $x \in \mathcal{X}$, and observe that this construction ensures $D_{\mathcal{F}}$ and $D_{k_P}$ are identical.
	    \cite{teymur2020optimal} explores these consequences of our results in detail.
	\end{remark}

		\subsection{Proof of \Cref{thm: main theorem}} \label{ap: proof of main theorem}
		
		First we state and prove a technical lemma that will be useful:
		
		\begin{lemma} 
			\label{Lemma-kernel-bounds}
			Let $\mathcal{X}$ be a measurable space and let $Q$ be a probability distribution on $\mathcal{X}$.
			Let $k_Q : \mathcal{X} \times \mathcal{X} \rightarrow \mathbb{R}$ be a reproducing kernel with $\int_{\mathcal{X}} k_Q(x,\cdot) \mathrm{d}Q = 0$ for all $x \in \mathcal{X}$.
			Consider a $Q$-invariant, time-homogeneous Markov chain $(X_i)_{i \in \mathbb{N}} \subset \mathcal{X}$ generated using a $V$-uniformly ergodic transition kernel, such that $V(x) \geq \sqrt{k_Q(x,x)}$ for all $x \in \mathcal{X}$, with parameters $R \in [0,\infty)$ and $\rho \in (0,1)$ as in \eqref{eq: V unif ergod}. 
			Then with $C = \frac{2 R \rho}{1 - \rho}$ we have that
			$$
			\sum_{i=1}^{n} \sum_{r \in \{1,\dots,n\} \setminus \{i\}}\mathbb{E} \left [k_Q(X_i,X_r) \right] 
			\; \leq \;
			C \sum_{i = 1}^{n - 1} \mathbb{E} \left[ \sqrt{k_Q(X_i,X_i)} V(X_i) \right].
			$$
		\end{lemma}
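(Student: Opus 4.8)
The plan is to bound each individual cross-term $\mathbb{E}[k_Q(X_i,X_r)]$ using the Markov property together with the zero-mean property $\int_{\mathcal{X}} k_Q(x,\cdot)\,\mathrm{d}Q(x)=0$, and then to sum a geometric series in the ergodicity rate $\rho$. First I would record the elementary pointwise estimate $|k_Q(x,y)|\le \sqrt{k_Q(x,x)}\sqrt{k_Q(y,y)}$, which follows from the reproducing identity $k_Q(x,y)=\langle k_Q(x,\cdot),k_Q(y,\cdot)\rangle_{\mathcal{H}(k_Q)}$ and Cauchy--Schwarz. Combined with the hypothesis $V(y)\ge\sqrt{k_Q(y,y)}$, this shows that, for each fixed $x$, the function $y\mapsto k_Q(x,y)$ has $V$-norm $\|k_Q(x,\cdot)\|_V\le \sqrt{k_Q(x,x)}$.

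Next, fix a pair $i<r$ and condition on $X_i$. By the Markov property, $\mathbb{E}[k_Q(X_i,X_r)\mid X_i=x]=\int_{\mathcal{X}}k_Q(x,y)\,\mathrm{P}^{r-i}(x,\mathrm{d}y)$, and since $\int_{\mathcal{X}}k_Q(x,\cdot)\,\mathrm{d}Q=0$ this equals $\int_{\mathcal{X}}k_Q(x,y)\,[\mathrm{P}^{r-i}(x,\mathrm{d}y)-Q(\mathrm{d}y)]$. Using the definition of $\|\cdot\|_V$ as a dual norm, namely $|\int f\,\mathrm{d}\mu|\le\|f\|_V\|\mu\|_V$, followed by the $V$-uniform ergodicity bound \eqref{eq: V unif ergod}, gives $|\mathbb{E}[k_Q(X_i,X_r)\mid X_i=x]|\le \|k_Q(x,\cdot)\|_V\,\|\mathrm{P}^{r-i}(x,\cdot)-Q\|_V\le \sqrt{k_Q(x,x)}\,RV(x)\rho^{r-i}$. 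Taking expectations over $X_i$ and applying the tower property yields $|\mathbb{E}[k_Q(X_i,X_r)]|\le R\rho^{r-i}\,\mathbb{E}[\sqrt{k_Q(X_i,X_i)}V(X_i)]$.

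Finally, since $k_Q$ is symmetric the double sum over $r\neq i$ equals $2\sum_{1\le i<r\le n}\mathbb{E}[k_Q(X_i,X_r)]\le 2\sum_{1\le i<r\le n}|\mathbb{E}[k_Q(X_i,X_r)]|$; substituting the per-term bound, summing the geometric series $\sum_{r=i+1}^{n}\rho^{r-i}=\sum_{s=1}^{n-i}\rho^s\le \rho/(1-\rho)$, and noting that $i=n$ contributes no pair, produces exactly $C\sum_{i=1}^{n-1}\mathbb{E}[\sqrt{k_Q(X_i,X_i)}V(X_i)]$ with $C=2R\rho/(1-\rho)$, as claimed.

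The argument is essentially routine; the points needing care are the correct use of the dual-norm inequality together with the identification of $\|k_Q(x,\cdot)\|_V$ (which is where the assumption $V\ge\sqrt{k_Q(\cdot,\cdot)}$ enters), and the attendant measurability and integrability bookkeeping, so that the conditional expectations and Fubini-type interchanges are justified — the chain of inequalities is valid in $[0,\infty]$ in any case, and the right-hand side is finite precisely under the moment condition $M<\infty$ assumed in \Cref{thm: main theorem}. Note that reversibility of the chain, though available, is not used in this estimate: conditioning on the earlier index $X_i$ already yields exactly the summand appearing on the right-hand side, so there is no need to run the chain backwards.
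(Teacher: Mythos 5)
Your proof is correct and is, at bottom, the same argument as the paper's: condition on the earlier state, exploit the zero-mean property of $k_Q$ to replace $\mathrm{P}^{r-i}(x,\cdot)$ by the signed measure $\mathrm{P}^{r-i}(x,\cdot)-Q$, bound the resulting integral by the $V$-norm and the $V$-uniform-ergodicity estimate, and sum the geometric series. The only real difference is in the packaging: the paper routes the bound through the conditional mean embedding $\mathbb{E}[k_Q(X,\cdot)\mid Y]\in\mathcal{H}(k_Q)$ and a supremum over the RKHS unit ball, noting that any unit-norm $h$ satisfies $|h(x)|\le\sqrt{k_Q(x,x)}\le V(x)$; you short-circuit this by observing directly from Cauchy--Schwarz that $\|k_Q(x,\cdot)\|_V\le\sqrt{k_Q(x,x)}$ and then applying the dual-norm inequality $|\int f\,\mathrm{d}\mu|\le\|f\|_V\|\mu\|_V$. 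Your version is a bit more elementary (no appeal to the conditional mean embedding, whose well-definedness needs care) and arguably cleaner, but it buys the same bound by the same mechanism. Your closing remark that reversibility is not actually used here is accurate and worth noting.
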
 
		\begin{proof}
			First recall that given random variables $X,Y$ taking values in $\mathcal{X}$, the conditional mean embedding of the distribution $\mathbb{P}[X|Y=y]$ is the function $\mathbb{E}[ k_Q(X,\cdot) | Y=y] \in \mathcal{H}(k_Q)$ \citep{Song2009}. 
			By the reproducing property we have
			$\mathbb{E}[k_Q(X,y)|Y=y] = \langle k_Q(y,\cdot) , \mathbb{E}[ k_Q(X,\cdot) | Y=y] \rangle_{\mathcal{H}(k_Q)}$, hence
			$\mathbb{E}[k_Q(X,Y)|Y] = \langle k_Q(Y,\cdot) , \mathbb{E}[ k_Q(X,\cdot) | Y] \rangle_{\mathcal{H}(k_Q)}$. 
			Thus
			\begin{align*}
			\mathbb{E}[k_Q(X,Y)|Y] &= \langle k_Q(Y,\cdot) , \mathbb{E}[ k_Q(X,\cdot) | Y] \rangle_{\mathcal{H}(k_Q)} \\
			&= \|k_Q(Y,\cdot)\|_{\mathcal{H}(k_Q)} \left\langle \frac{k_Q(Y,\cdot)}{\|k_Q(Y,\cdot)\|_{\mathcal{H}(k_Q)}} , \mathbb{E}[ k_Q(X,\cdot) | Y] \right\rangle_{\mathcal{H}(k_Q)}  \\
			& \leq  \|k_Q(Y,\cdot)\|_{\mathcal{H}(k_Q)} \sup_{\| h \|_{\mathcal{H}(k_Q)} =1} \langle h , \mathbb{E}[ k_Q(X,\cdot) | Y] \rangle_{\mathcal{H}(k_Q)} .
			\end{align*}
			In what follows it is convenient to introduce a new random variable $Z$, independent from the Markov chain, such that $Z \sim Q$.
			Then, since $\mathbb{E}[k_Q(Z,\cdot)]=0$, we have $\mathbb{E}[h(Z)]=0$ for any $h\in \mathcal{H}(k_Q)$. 
			Hence we have that
			$$
			\mathbb{E}[k_Q(X,Y)|Y] 
			\; \leq \; 
			\sqrt{k_Q(Y,Y)}\sup_{\| h \|_{\mathcal{H}(k_Q)} =1} \left( \langle h , \mathbb{E}[ k_Q(X,\cdot) | Y] \rangle - \mathbb{E}[h(Z)] \right).
			$$
			Note $| h(x) | \leq \| h \|_{\mathcal{H}(k_Q)} \sqrt{k_Q(x,x)}$, so $\| h \|_{\mathcal{H}(k_Q)} = 1$ implies $| h(x)| \leq \sqrt{k_Q(x,x)}  $, thus 
			$$
			\mathbb{E}[k_Q(X,Y)|Y] \leq \sqrt{k_Q(Y,Y)}\sup_{| h(x) | 
				\; \leq \;
				\sqrt{k_Q(x,x)} } \left( \mathbb{E}[ h(X) | Y]- \mathbb{E}[h(Z)] \right).
			$$
			Let $\mathrm{Q}^n$ denote the $n^{\text{th}}$ step transition kernel of the Markov chain.
			From $V$-uniform ergodicity it follows that 
			$$ | \mathbb{E}[ h(X_n) | X_0=y]- \mathbb{E}[h(Z)] |= \left| \int h(x)  \mathrm{Q}^n(y,\mathrm{d}x) - \int h(x) \mathrm{d}Q(x) \right| \leq R V(y) \rho^n. $$
			Applying this to $Y=X_i$, $X=X_{i+r}$, we find
			$$
			\mathbb{E}[k_Q(X_{i+r},X_i)|X_i] \leq R \sqrt{k_Q(X_i,X_i)} V(X_i) \rho^r 
			$$
			and taking the expectation on both sides yields 
			\begin{align}
			\mathbb{E}[k_Q(X_{i+r},X_i)] 
			\; \leq \;  R \mathbb{E} \left[ \sqrt{k_Q(X_i,X_i)} V(X_i) \right]\rho^r.  \label{eq: ergod bound kir}
			\end{align}
			Finally, we can use \eqref{eq: ergod bound kir} to obtain that
			\begin{align*}
			\sum_{i=1}^{n} \sum_{r \in \{1,\dots,n\} \setminus \{i\}}\mathbb{E}[k_Q(X_{r},X_i)]
			&= 
			2\sum_{i = 1}^{n - 1} \sum_{r = 1}^{n - i}\mathbb{E}[k_Q(X_{i+r},X_i)] \\
			&\leq 
			2 R \sum_{i = 1}^{n - 1} \mathbb{E} \left[ \sqrt{k_Q(X_i,X_i)} V(X_i) \right] \sum_{r = 1}^{n - i} \rho^r.
			\end{align*}
			Thus for $C = 2 R \sum_{r = 1}^\infty \rho^r = \frac{2 R \rho}{1 - \rho}< \infty$, we have that
			$$
			\sum_{i=1}^{n} \sum_{r \in \{1,\dots,n\} \setminus \{i\}}\mathbb{E}[k_Q(X_{r},X_i)] \leq C \sum_{i = 1}^{n - 1} \mathbb{E} \left[ \sqrt{k_Q(X_i,X_i)} V(X_i) \right]
			$$
			as claimed.
		\end{proof}
		
		We can now prove the main result:
		
		\begin{proof}[Proof of \Cref{thm: main theorem}]
			Taking expectations of the bound in \Cref{prop: optimal converge}, we have that
			\begin{align*}
			\mathbb{E}\left[ D_{k_P}\left( \frac{1}{m} \sum_{j=1}^m \delta(X_{\pi(j)}) \right)^2 \right] & \leq \underbrace{\mathbb{E} \left[ D_{k_P}\left( \sum_{i=1}^n w_i^* \delta(X_i) \right)^2 \right]}_{(*)} \\
			& \hspace{60pt} + \left( \frac{1 + \log(m)}{m} \right) \underbrace{\mathbb{E}\left[ \max_{i=1,\dots,n} k_P(X_i,X_i) \right]}_{(**)} .
			\end{align*}
			In what follows we construct bounds for $(*)$ and $(**)$.
			
			\vspace{5pt}
			\noindent
			\textbf{Bounding $(*)$:}
			To bound the term $(*)$, note that 
			\begin{align*}
			D_{k_P}\left( \sum_{i=1}^n w_i^* \delta(X_i) \right) \leq D_{k_P}\left(  \frac{1}{n} \sum_{i=1}^n \delta(X_i) \right), 
			\end{align*}
			due to the optimality property of the weights $w^*$ presented in \eqref{eq: optimal weights}.
			It is therefore sufficient to study the KSD of the un-weighted empirical distribution $\frac{1}{n} \sum_{i=1}^n \delta(X_i)$.
			To this end, we have that
			\begin{align}
			\mathbb{E} \left[ D_{k_P}\left( \frac{1}{n} \sum_{i=1}^n \delta(X_i) \right)^2 \right] 
			& \; = \; 
			\frac{1}{n^2} \sum_{i=1}^{n} \mathbb{E} [k_P(X_i,X_i)] + \frac{1}{n^2} \sum_{i = 1}^{n} \sum_{r \in \{ 1, \dots, n \} \setminus \{i\} } \mathbb{E} \left[k_P(X_i,X_r)\right] . \label{eq: two terms}
			\end{align}
			To bound the first term in \eqref{eq: two terms} we use Jensen's inequality:
			\begin{eqnarray*}
				\frac{1}{n^2} \sum_{i=1}^n \mathbb{E}\left[ k_P(X_i, X_i) \right] & = & \frac{1}{n^2} \sum_{i=1}^n \mathbb{E}\left[ \frac{1}{\gamma} \log e^{ \gamma k_P(X_i, X_i) } \right]  \\
				& \leq & \frac{1}{\gamma n^2} \sum_{i=1}^n \log \left( \mathbb{E}\left[ e^{\gamma k_P(X_i, X_i) } \right] \right) \; \leq \; \frac{\log(b)}{\gamma n} 
			\end{eqnarray*}
			The second term in \eqref{eq: two terms} can be bounded via \Cref{Lemma-kernel-bounds} with $Q = P$:
			\begin{align*}
			\frac{1}{n^2} \sum_{i = 1}^{n} \sum_{r \in \{ 1, \dots, n \} \setminus \{i\} } \mathbb{E} \left[k_P(X_i,X_r)\right] & \leq \frac{C}{n^2} \sum_{i = 1}^{n - 1} \mathbb{E} \left[ \sqrt{k_P(X_i,X_i)} V(X_i) \right] 
			\leq \frac{C M (n-1)}{n^2} 
			\leq \frac{CM}{n} ,
			\end{align*}
			where $C$ is defined in \Cref{Lemma-kernel-bounds}.
			
			\vspace{5pt}
			\noindent
			\textbf{Bounding $(**)$:}
			We proceed as follows:
			\begin{eqnarray}
			\mathbb{E}\left[ \max_{i = 1,\dots, n} k_P(X_i, X_i) \right] & = & \mathbb{E}\left[ \frac{1}{\gamma} \log \max_{i = 1, \dots, n} e^{ \gamma k_P(X_i, X_i) } \right] \nonumber \\
			& \leq & \mathbb{E}\left[ \frac{1}{\gamma} \log \sum_{i = 1}^{n} e^{ \gamma k_P(X_i, X_i) } \right] \nonumber \\
			& \leq & \frac{1}{\gamma} \log \left( \sum_{i = 1}^{n} \mathbb{E}\left[ e^{\gamma k_P(X_i, X_i) } \right] \right) \; = \; \frac{\log(n b)}{\gamma} \label{eq: Sj dev end}
			\end{eqnarray}
			
			\vspace{5pt}
			\noindent
			\textbf{Overall Bound:}
			Combining our bounds on $(*)$ and $(**)$ leads to the overall bound
			\begin{align*}
			\mathbb{E}\left[ D_{k_P}\left( \frac{1}{m} \sum_{j=1}^m \delta(X_{\pi(j)}) \right)^2 \right] & \leq \frac{\log(b)}{\gamma n} + \frac{CM}{n} + \left( \frac{1 + \log(m)}{m} \right) \frac{\log(nb)}{\gamma}
			\end{align*}
			as claimed.
		\end{proof}

		\subsection{Proof of \Cref{cor: bias}} \label{app: cor bias proof}
		
		To facilitate a neat proof of \Cref{cor: bias} we first present two useful lemmas, the first of which establishes almost sure convergence in KSD of the empirical distribution based on the full MCMC output:
		
		\begin{lemma} \label{lem: AS kQ}
			Let $Q$ be a probability distribution on $\mathcal{X}$.
			Let $k_Q : \mathcal{X} \times \mathcal{X} \rightarrow \mathbb{R}$ be a reproducing kernel with $\int_{\mathcal{X}} k_Q(x,\cdot) \mathrm{d}Q = 0$ for all $x \in \mathcal{X}$.
			Consider a $Q$-invariant, time-homogeneous Markov chain $(X_i)_{i \in \mathbb{N}} \subset \mathcal{X}$, generated using a $V$-uniformly ergodic transition kernel such that $V(x) \geq \sqrt{k_Q(x,x)}$ for all $x \in \mathcal{X}$.
			Suppose that, for some $\gamma > 0$,
			\begin{align*}
			b := \sup_{i \in \mathbb{N}} \mathbb{E} \left[ e^{\gamma k_Q(X_i,X_i)} \right] < \infty, \qquad M := \sup_{i \in \mathbb{N}} \mathbb{E}\left[\sqrt{k_Q(X_i,X_i)} V(X_i)\right] < \infty .
			\end{align*}
			Then
			\begin{align*}
			D_{k_Q}\left( \frac{1}{n} \sum_{i=1}^n \delta(X_i) \right) \rightarrow 0
			\end{align*}
			almost surely as $n \rightarrow \infty$.
		\end{lemma}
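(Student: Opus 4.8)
The plan is to promote the mean-square rate that is implicit in the proof of \Cref{thm: main theorem} to an almost sure statement, by a routine subsequence-and-interpolation argument. Write $S_n := \sum_{i=1}^n k_Q(X_i,\cdot) \in \mathcal{H}(k_Q)$, so that, recalling that KSD is a maximum mean discrepancy (\Cref{subsec: KSD}), $D_{k_Q}\left(\frac1n\sum_{i=1}^n \delta(X_i)\right) = \|S_n\|_{\mathcal{H}(k_Q)}/n$. The bound on the KSD of the full empirical measure $\frac1n\sum_{i=1}^n\delta(X_i)$ derived inside the proof of \Cref{thm: main theorem} -- via Jensen's inequality on the diagonal terms and \Cref{Lemma-kernel-bounds} on the off-diagonal terms -- uses only the hypotheses assumed here, with $P$ and $k_P$ replaced by $Q$ and $k_Q$, and therefore gives $\mathbb{E}\left[D_{k_Q}\left(\frac1n\sum_{i=1}^n\delta(X_i)\right)^2\right] \le \frac{A}{n}$, where $A := \frac{\log b}{\gamma} + CM < \infty$ with $C = \frac{2R\rho}{1-\rho}$ and $R \in [0,\infty)$, $\rho \in (0,1)$ the ergodicity constants of the chain.

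First I would extract an almost surely convergent subsequence. Setting $n_k := k^2$, we obtain $\sum_{k=1}^\infty \mathbb{E}\left[D_{k_Q}\left(\frac{1}{n_k}\sum_{i=1}^{n_k}\delta(X_i)\right)^2\right] \le A \sum_{k=1}^\infty k^{-2} < \infty$, so by Tonelli's theorem the random sum $\sum_k D_{k_Q}\left(\frac{1}{n_k}\sum_{i=1}^{n_k}\delta(X_i)\right)^2$ is almost surely finite, whence $\|S_{n_k}\|_{\mathcal{H}(k_Q)}/n_k \to 0$ almost surely as $k \to \infty$.

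Next I would interpolate over the blocks $n_k \le n < n_{k+1}$. Using the triangle inequality in $\mathcal{H}(k_Q)$ together with $\|k_Q(x,\cdot)\|_{\mathcal{H}(k_Q)} = \sqrt{k_Q(x,x)}$, and then $n \ge n_k$,
\begin{align*}
\frac{\|S_n\|_{\mathcal{H}(k_Q)}}{n} \;\le\; \frac{\|S_{n_k}\|_{\mathcal{H}(k_Q)}}{n_k} \;+\; \underbrace{\frac{1}{n_k}\sum_{i=n_k+1}^{n_{k+1}}\sqrt{k_Q(X_i,X_i)}}_{=:\,T_k}.
\end{align*}
The first term tends to $0$ almost surely by the previous step. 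For $T_k$, the Cauchy--Schwarz inequality gives $T_k^2 \le \frac{n_{k+1}-n_k}{n_k^2}\sum_{i=n_k+1}^{n_{k+1}} k_Q(X_i,X_i)$, while Jensen's inequality applied to the exponential moment condition gives $\mathbb{E}[k_Q(X_i,X_i)] \le \frac{\log b}{\gamma}$; since $n_{k+1}-n_k = 2k+1$ and $n_k = k^2$, this yields $\mathbb{E}[T_k^2] = O(k^{-2})$, so $\sum_k \mathbb{E}[T_k^2] < \infty$ and hence $T_k \to 0$ almost surely. Combining the two displays, $D_{k_Q}\left(\frac1n\sum_{i=1}^n\delta(X_i)\right) = \|S_n\|_{\mathcal{H}(k_Q)}/n \to 0$ almost surely as $n \to \infty$, which is the claim.

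The only step requiring care is the interpolation: the mean-square estimate controls $\|S_n\|_{\mathcal{H}(k_Q)}$ only along the sparse grid $\{k^2\}_k$, so one must bound the fluctuation of $\|S_n\|_{\mathcal{H}(k_Q)}/n$ within each block without treating $S_n$ as a sum of mean-zero Hilbert-space increments (the chain need not be started in stationarity, as signalled by the suprema over $i$ in the definitions of $b$ and $M$). The crude estimate $\|S_n - S_{n_k}\|_{\mathcal{H}(k_Q)} \le \sum_{i > n_k}\sqrt{k_Q(X_i,X_i)}$, squared by Cauchy--Schwarz and bounded in expectation by Jensen, is exactly what makes the block errors $\mathbb{E}[T_k^2]$ summable, so no machinery beyond that already used in the proof of \Cref{thm: main theorem} is required.
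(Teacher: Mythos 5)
Your proposal is correct, and the overall architecture matches the paper's: derive the mean-square bound $\mathbb{E}[D_n^2]=O(1/n)$, apply Borel--Cantelli (equivalently Tonelli) along the sparse grid $n_k=k^2$, and then interpolate across blocks. However, the interpolation step is handled by a genuinely different estimate. The paper works with $D_n^2$ directly, decomposes $|D_n^2-D_{r^2}^2|$, bounds the off-grid double sum $\sum_{i,j=r^2+1}^n k_Q(X_i,X_j)$ by $(n-r^2)^2\max_{i\le n}k_Q(X_i,X_i)$, and controls the running maximum in expectation using Jensen's inequality on the exponential moment, which costs a $\log(r)$ factor (producing a block error $c_2(r)=O(\log(r)/r^2)$). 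You instead work with the RKHS norm $\|S_n\|_{\mathcal{H}(k_Q)}/n$, use the triangle inequality in $\mathcal{H}(k_Q)$ to isolate the block increment $T_k=\frac{1}{n_k}\sum_{i=n_k+1}^{n_{k+1}}\sqrt{k_Q(X_i,X_i)}$, and then need only the bound $\mathbb{E}[k_Q(X_i,X_i)]\le\log(b)/\gamma$ rather than any control on the max — giving $\mathbb{E}[T_k^2]=O(k^{-2})$ with no logarithmic loss. This is a modest but genuine streamlining: it avoids the Jensen-on-the-max estimate entirely, and the same exponential moment condition is used but for a weaker purpose (a first moment bound rather than a max bound). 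Both approaches yield summable block errors and hence the same almost sure conclusion; the paper's version is perhaps more parallel to the bound-on-$(**)$ step already used in \Cref{thm: main theorem}, whereas yours is self-contained and slightly tighter. Note that the triangle-inequality step implicitly uses $\|S_{n_k}\|\ge 0$ together with $n\ge n_k$ to get $\|S_{n_k}\|/n\le\|S_{n_k}\|/n_k$, which you should make explicit if this were written up, but there is no gap.
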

		\begin{proof}
			Similarly to the proof of \Cref{thm: main theorem}, we start by bounding
			\begin{align}
			\mathbb{E} \left[ D_{k_Q}\left( \frac{1}{n} \sum_{i=1}^n \delta(X_i) \right)^2 \right] 
			& \; = \; 
			\frac{1}{n^2} \sum_{i=1}^{n} \mathbb{E} [k_Q(X_i,X_i)] + \frac{1}{n^2} \sum_{i = 1}^{n} \sum_{r \in \{ 1, \dots, n \} \setminus \{i\} } \mathbb{E} \left[k_Q(X_i,X_r)\right] . \label{eq: two terms again}
			\end{align}
			To bound the first term in \eqref{eq: two terms again} we use Jensen's inequality:
			\begin{eqnarray*}
				\frac{1}{n^2} \sum_{i=1}^n \mathbb{E}\left[ k_Q(X_i, X_i) \right] & = & \frac{1}{n^2} \sum_{i=1}^n \mathbb{E}\left[ \frac{1}{\gamma} \log e^{ \gamma k_Q(X_i, X_i) } \right]  \\
				& \leq & \frac{1}{\gamma n^2} \sum_{i=1}^n \log \left( \mathbb{E}\left[ e^{\gamma k_Q(X_i, X_i) } \right] \right) \; \leq \; \frac{\log(b)}{\gamma n} 
			\end{eqnarray*}
			The second term in \eqref{eq: two terms again} can be bounded via \Cref{Lemma-kernel-bounds}:
			\begin{align*}
			\frac{1}{n^2} \sum_{i = 1}^{n} \sum_{r \in \{ 1, \dots, n \} \setminus \{i\} } \mathbb{E} \left[k_Q(X_i,X_r)\right] & \leq \frac{C}{n^2} \sum_{i = 1}^{n - 1} \mathbb{E} \left[ \sqrt{k_Q(X_i,X_i)} V(X_i) \right] 
			\leq \frac{C M (n-1)}{n^2} 
			\leq \frac{CM}{n} ,
			\end{align*}
			where $C$ is defined in \Cref{Lemma-kernel-bounds}.
			This establishes that 
			\begin{align}
			\mathbb{E} \left[ D_{k_Q}\left( \frac{1}{n} \sum_{i=1}^n \delta(X_i) \right)^2 \right] & \leq \frac{\log(b)}{\gamma n} + \frac{CM}{n} =: c_1(n) . \label{eq: L2 ksd bound Q}
			\end{align}
			To simplify notation we adopt the shorthand 
			$$
			D_n := D_{k_Q}\left( \frac{1}{n} \sum_{i=1}^n \delta(X_i) \right)
			$$
			in this proof only.
			Fix $\epsilon > 0$.
			If $D_n > \epsilon$ occurs infinitely often (i.o.) then there are infinitely many $r$ such that $\max_{r^2 \leq n < (r+1)^2} D_n^2 > \epsilon$, so that
			\begin{align}
			    \mathbb{P} \left[ D_n^2 > \epsilon \text{ i.o.} \right] \leq \mathbb{P}\left[ \max_{r^2 \leq n < (r+1)^2 } D_n^2 > \epsilon \text{ i.o.} \right]. \label{eq: io bound}
			\end{align}
			Now, consider the bound
			\begin{align*}
			    \sum_{r=1}^\infty \mathbb{P}\left[ \max_{r^2 \leq n < (r+1)^2} D_n^2 > \epsilon \right] \leq  
               \underbrace{ \sum_{r=1}^\infty \mathbb{P}\left[ D_{r^2}^2 > \frac{\epsilon}{2} \right] }_{(*)}
                     + \underbrace{ \sum_{r=1}^\infty \mathbb{P}\left[ \max_{r^2 \leq n < (r+1)^2 } |D_n^2 -  D_{r^2}^2| > \frac{\epsilon}{2} \right] }_{(**)} ,
			\end{align*}
			where the inequality follows from the fact that, for any $a,b \in \mathbb{R}$, if $a > \epsilon$ then either $b > \frac{\epsilon}{2}$ or $|a-b| > \frac{\epsilon}{2}$.
			In the remainder we will show that the sums $(*)$ and $(**)$ are finite, so that from the Borel--Cantelli lemma
			\begin{align}
			\mathbb{P}\left[ \max_{r^2 \leq n < (r+1)^2} D_n^2 > \epsilon \text{ i.o.} \right] = 0 . \label{eq: borel cantelli new}
			\end{align}
			Since \eqref{eq: borel cantelli new} holds for all $\epsilon > 0$, it will follow from \eqref{eq: io bound} that $\mathbb{P}[D_n \rightarrow 0] = 1$, as claimed.
			
			\vspace{5pt}
			\noindent
			\textbf{Bounding $(*)$:}
			From the Markov inequality and \eqref{eq: L2 ksd bound Q} we have that, for any $\epsilon > 0$,
			\begin{align*}
			\mathbb{P}\left[ D_{r^2}^2 > \frac{\epsilon}{2} \right] \leq \frac{ 2 }{ \epsilon } \mathbb{E} [D_{r^2}^2]  \leq \frac{ 2 }{ \epsilon } c_1(r^2)  .
			\end{align*}
			Since $c_1(r^2) = O(1/r^2)$, it follows that
			\begin{align*}
			(*) = \sum_{r=1}^\infty \mathbb{P}\left[ D_{r^2}^2 > \frac{\epsilon}{2} \right] & \leq \frac{2}{\epsilon} \sum_{r=1}^\infty c_1(r^2) < \infty.
			\end{align*}
			
			\vspace{5pt}
			\noindent
			\textbf{Bounding $(**)$:}
			For $r^2 \leq n < (r+1)^2$,
			\begin{align*}
			|D_n^2 - D_{r^2}^2| = \frac{1}{n^2} |n^2 D_n^2 - n^2 D_{r^2}^2| 
			& \leq \frac{1}{n^2} |n^2 D_n^2 - r^4 D_{r^2}^2| + \frac{1}{n^2} |(n^2-r^4) D_{r^2}^2| \\
			& \leq \frac{1}{r^4} |n^2 D_n^2 - r^4 D_{r^2}^2| + \frac{2(r+1)}{r^2} D_{r^2}^2 ,
			\end{align*}
			and also that, using the reproducing property and Cauchy-Schwarz,
			\begin{align*}
			|n^2 D_n^2 - r^4 D_{r^2}^2| & = \left| \sum_{i,j = r^2 + 1}^n k_Q(X_i, X_j) \right| \\
			& \leq \sum_{i,j = r^2 + 1}^n \sqrt{k_Q(X_i, X_i)} \sqrt{k_Q(X_j, X_j)} \\
			& \leq (n-r^2)^2 \max_{1 \leq i \leq n} k_Q(X_i,X_i) \leq 4 r^2 \max_{1 \leq i \leq n} k_Q(X_i,X_i) .
			\end{align*}
			Similarly again to the proof of \Cref{thm: main theorem} we have the bound
			\begin{eqnarray}
			\mathbb{E}\left[ \max_{i = 1,\dots, n} k_Q(X_i, X_i) \right] & = & \mathbb{E}\left[ \frac{1}{\gamma} \log \max_{i = 1, \dots, n} e^{ \gamma k_Q(X_i, X_i) } \right] \nonumber \\
			& \leq & \mathbb{E}\left[ \frac{1}{\gamma} \log \sum_{i = 1}^{n} e^{ \gamma k_Q(X_i, X_i) } \right] \nonumber \\
			& \leq & \frac{1}{\gamma} \log \left( \sum_{i = 1}^{n} \mathbb{E}\left[ e^{\gamma k_Q(X_i, X_i) } \right] \right) \; = \; \frac{\log(n b)}{\gamma} \label{eq: Sj dev end again}
			\end{eqnarray}
			so that, taking expectations, we obtain the bound
			\begin{align*}
			\mathbb{E}\left[ \max_{r^2 \leq n < (r+1)^2} |D_n^2 - D_{r^2}^2| \right] & \leq \frac{4}{r^2} \mathbb{E}\left[ \max_{1 \leq i \leq n} k_Q(X_i,X_i) \right] + \frac{2(r+1)}{r^2} \mathbb{E}\left[ D_{r^2}^2 \right]  \\
			& \leq \frac{4}{r^2} \frac{\log(nb)}{\gamma} + \frac{2(r+1)}{r^2} c_1(r^2)  \\
			& \leq \frac{8}{r^2} \frac{\log((r+1)b)}{\gamma} + \frac{2(r+1)}{r^2} c_1(r^2) =: c_2(r) 
			\end{align*}
			where $c_2(r) = O(\log(r) / r^2)$.
			Using the Markov inequality,
			\begin{align*}
			(**) = \sum_{r=1}^\infty \mathbb{P}\left[ \max_{r^2 \leq n < (r+1)^2} |D_n^2 - D_{r^2}^2| > \frac{\epsilon}{2} \right] & \leq \frac{2}{\epsilon} \sum_{r=1}^\infty \mathbb{E}\left[ \max_{r^2 \leq n < (r+1)^2} |D_n^2 - D_{r^2}^2|  \right] \\
			& \leq \frac{2}{\epsilon} \sum_{r=1}^\infty c_2(r) < \infty .
			\end{align*}
			This completes the proof.
		\end{proof}
		
		Our second lemma is a technical result on almost sure convergence:
		
		\begin{lemma} \label{lem: AS max k Q}
			Let $f$ be a non-negative function on $\mathcal{X}$.
			Consider a sequence of random variables $(X_i)_{i \in \mathbb{N}} \subset \mathcal{X}$ such that, for some $\gamma > 0$,
			\begin{align*}
			b := \sup_{i \in \mathbb{N}} \mathbb{E} \left[ e^{\gamma f(X_i)} \right] < \infty
			\end{align*}
			If $m \leq n$ and the growth of $n$ is limited to at most $\log(n) = O(m^{\beta/2})$ for some $\beta < 1$, then
			\begin{align*}
			\left( \frac{\log(m)}{m} \right) \max_{i=1,\dots,n} f(X_i)  \rightarrow 0
			\end{align*}
			almost surely as $m,n \rightarrow \infty$.
		\end{lemma}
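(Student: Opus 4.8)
The plan is to control $\max_{i=1,\dots,n} f(X_i)$ by combining the exponential moment bound with Markov's inequality and a union bound, and then to invoke the Borel--Cantelli lemma. Since $f$ is non-negative, it suffices to show that for every fixed $\epsilon > 0$ the event $\{\, \tfrac{\log(m)}{m}\max_{i=1,\dots,n} f(X_i) > \epsilon \,\}$ occurs only finitely often, almost surely; taking a countable intersection over $\epsilon = 1/k$, $k \in \mathbb{N}$, then yields the stated almost-sure convergence.

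First I would reduce the joint limit in $(m,n)$ to a limit in $m$ alone. By hypothesis there are a constant $C < \infty$ and an index $m_0$ with $\log(n) \le C m^{\beta/2}$ whenever $m \ge m_0$, and since $m \le n$ also $\log(m) \le C m^{\beta/2}$. Setting $N(m) := \lceil e^{C m^{\beta/2}} \rceil$ we have $n \le N(m)$ for $m \ge m_0$, so that $0 \le \tfrac{\log(m)}{m}\max_{i=1,\dots,n} f(X_i) \le \tfrac{\log(m)}{m}\max_{i=1,\dots,N(m)} f(X_i)$, and it is enough to prove that $\tfrac{\log(m)}{m}\max_{i=1,\dots,N(m)} f(X_i) \to 0$ almost surely as $m \to \infty$ through the integers.

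For fixed $\epsilon > 0$ and $m$ large, Markov's inequality applied to $e^{\gamma f(X_i)}$ gives $\mathbb{P}[f(X_i) > t] \le b\, e^{-\gamma t}$ for every $t > 0$, so a union bound over $i = 1,\dots,N(m)$, taken at $t = \epsilon m / \log(m)$, yields
\begin{align*}
\mathbb{P}\!\left[ \tfrac{\log(m)}{m}\max_{i=1,\dots,N(m)} f(X_i) > \epsilon \right]
&\le N(m)\, b\, e^{-\gamma \epsilon m / \log(m)} \\
&\le 2 b\, \exp\!\left( C m^{\beta/2} - \tfrac{\gamma\epsilon m}{\log(m)} \right) .
\end{align*}
Because $\beta < 1$, both $m^{\beta/2}$ and $\log(m)$ are $o\!\left( m/\log(m) \right)$, so for all sufficiently large $m$ the exponent is at most $-2\log(m)$ and the right-hand side is at most $2b/m^2$. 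Hence the probabilities are summable in $m$, and the Borel--Cantelli lemma gives $\mathbb{P}[\, \tfrac{\log(m)}{m}\max_{i\le N(m)} f(X_i) > \epsilon \text{ i.o.}\,] = 0$, which together with the reduction above and the countable intersection over $\epsilon = 1/k$ completes the argument.

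The hard part will be bookkeeping rather than conceptual: one must pass cleanly from the two-index limit $m,n\to\infty$ to a one-index statement by dominating $n$ with the worst admissible value $N(m) = e^{O(m^{\beta/2})}$, and then check that the super-exponential decay $e^{-\gamma\epsilon m/\log(m)}$ beats this growth --- which is exactly where the assumption $\beta < 1$ (indeed more than enough) is used.
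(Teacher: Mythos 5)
Your proof is correct, and it takes a genuinely different route from the paper's. The paper controls $\mathbb{E}\bigl[\max_{i\le n} f(X_i)\bigr]$ via the log-sum-exp trick (bounding it by $\log(nb)/\gamma$), then applies Markov's inequality to $E_m := \tfrac{\log m}{m}\max_{i\le n} f(X_i)$, obtaining a bound of order $\log(m)\,m^{\beta/2}/m$. Because that bound is only polynomially small and is \emph{not} summable in $m$, the paper then passes to the subsequence $m = r^2$ and controls the oscillation of $E_m$ over blocks $r^2 \le m < (r+1)^2$ (terms $(*)$ and $(**)$ in their proof) so that the series becomes summable before Borel--Cantelli can be applied. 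You instead exploit the exponential moment condition more aggressively: applying Markov's inequality to $e^{\gamma f(X_i)}$ gives the Chernoff-type tail $\mathbb{P}[f(X_i) > t] \le b\,e^{-\gamma t}$, and a union bound over $i \le N(m)$ with $t = \epsilon m/\log m$ yields a probability that decays like $\exp\bigl(Cm^{\beta/2} - \gamma\epsilon m/\log m\bigr)$, which is eventually bounded by $2b/m^2$ and hence summable \emph{directly}, with no need for the block/oscillation argument. Your approach is shorter and arguably cleaner; the paper's version has the virtue of structural parallelism with the companion Lemma~S3 (which faces the same summability issue and uses the same blocking device), and it shows what can be done when only an expectation bound — rather than a tail bound — is available.

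One small bookkeeping point worth making explicit: the reduction to the single index $m$ relies on $f \ge 0$ so that $\max_{i\le n} f(X_i) \le \max_{i\le N(m)} f(X_i)$; you state this hypothesis at the outset and use it correctly, but the monotonicity step is where it is actually needed. Also, $N(m) = \lceil e^{Cm^{\beta/2}}\rceil \le 2e^{Cm^{\beta/2}}$ is what justifies the factor of $2$ in your display, which you use implicitly.
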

		\begin{proof}
			To simplify notation we adopt the shorthand 
			$$
			E_m := \left( \frac{\log(m)}{m} \right) \max_{i=1,\dots,n} f(X_i)
			$$
			in this proof only, where $n = n(m)$.
			The argument is similar to the proof of \Cref{lem: AS kQ}.
			Fix $\epsilon > 0$.
			If $E_m > \epsilon$ i.o. then there are infinitely many $r$ such that $\max_{r^2 \leq m < (r+1)^2} E_m > \epsilon$, so that
			\begin{align}
			    \mathbb{P} \left[ E_m > \epsilon \text{ i.o.} \right] \leq \mathbb{P}\left[ \max_{r^2 \leq m < (r+1)^2 } E_m > \epsilon \text{ i.o.} \right]. \label{eq: io bound 2}
			\end{align}
			Now, consider the bound
			\begin{align*}
			    \sum_{r=1}^\infty \mathbb{P}\left[ \max_{r^2 \leq m < (r+1)^2} E_m > \epsilon \right] \leq  
               \underbrace{ \sum_{r=1}^\infty \mathbb{P}\left[ E_{r^2} > \frac{\epsilon}{2} \right] }_{(*)}
                     + \underbrace{ \sum_{r=1}^\infty \mathbb{P}\left[ \max_{r^2 \leq m < (r+1)^2 } |E_m -  E_{r^2}| > \frac{\epsilon}{2} \right] }_{(**)} ,
			\end{align*}
			In the remainder we will show that the sums $(*)$ and $(**)$ are finite, so that from the Borel--Cantelli lemma
			\begin{align}
			\mathbb{P}\left[ \max_{r^2 \leq m < (r+1)^2} E_m > \epsilon \text{ i.o.} \right] = 0 . \label{eq: borel cantelli new 2}
			\end{align}
			Since \eqref{eq: borel cantelli new 2} holds for all $\epsilon > 0$, it will follow from \eqref{eq: io bound 2} that $\mathbb{P}[E_m \rightarrow 0] = 1$, as claimed.
			
			\vspace{5pt}
			\noindent
			\textbf{Bounding $(*)$:}
			Similarly to the proof of \Cref{thm: main theorem}, we have the bound
			\begin{eqnarray*}
				\mathbb{E}\left[ \max_{i = 1,\dots, n} f(X_i) \right] & = & \mathbb{E}\left[ \frac{1}{\gamma} \log \max_{i = 1, \dots, n} e^{ \gamma f(X_i) } \right]  \\
				& \leq & \mathbb{E}\left[ \frac{1}{\gamma} \log \sum_{i = 1}^{n} e^{ \gamma f(X_i) } \right]  \\
				& \leq & \frac{1}{\gamma} \log \left( \sum_{i = 1}^{n} \mathbb{E}\left[ e^{\gamma f(X_i) } \right] \right) \; = \; \frac{\log(n b)}{\gamma} 
			\end{eqnarray*}
			and thus from the Markov inequality we have that, for any $\epsilon > 0$,
			\begin{align*}
			\mathbb{P}\left[ E_m > \frac{\epsilon}{2} \right] \leq \frac{ 2 }{ \epsilon} \mathbb{E} [E_m]  \leq \frac{ 2 }{ \epsilon} c_1(m)
			\end{align*}
			where 
			$$
			c_1(m) := \frac{\log(m)}{m} \frac{\log(nb)}{\gamma}.
			$$
			The assumption $\log(n) = O(m^{\beta/2})$ for some $\beta < 1$ implies that $\log(nb) \leq \alpha m^{\beta/2} + \log(b)$ for some constant $\alpha \in (0,\infty)$.
			Thus
			\begin{align*}
			c_1(r^2) \leq \frac{2\log(r)}{r^2} \left( \frac{\alpha r^{\beta} + \log(b)}{\gamma} \right) .
			\end{align*}
			This shows that $c_1(r^2) = O(\log(r) / r^{2-\beta})$, and it follows that
			\begin{align*}
			(*) = \sum_{r=1}^\infty \mathbb{P}\left[ E_{r^2} > \frac{\epsilon}{2} \right] & \leq \frac{2}{\epsilon} \sum_{r=1}^\infty c_1(r^2) < \infty.
			\end{align*}
			
			\vspace{5pt}
			\noindent
			\textbf{Bounding $(**)$:}
			For the second term we argue that, since $E_m \geq 0$,
			\begin{align*}
			\max_{r^2 \leq m < (r+1)^2} |E_m - E_{r^2}| & \leq 2 \max_{r^2 \leq m < (r+1)^2} E_m \\
			& \leq \frac{2 \log(r^2)}{r^2} \max_{i=1,\dots,n((r+1)^2)} f(X_i)
			\end{align*}
			so that, taking expectations,
			\begin{align*}
			\mathbb{E} \left[ \max_{r^2 \leq m < (r+1)^2} |E_m - E_{r^2}| \right] & \leq \frac{ 4 \log(r)}{r^2} \frac{\log(n((r+1)^2) b)}{\gamma} =: c_2(r)
			\end{align*}
			Using the bound $\log(nb) \leq \alpha m^{\beta/2} + \log(b)$, the quantity $c_2(r)$ just defined satisfies 
			\begin{align*}
			c_2(r) & \leq \frac{4 \log(r)}{r^2} \left( \frac{\alpha (r+1)^\beta + \log(b)}{\gamma} \right) ,
			\end{align*}
			which is $O(\log(r) / r^{2 - \beta})$.
			Using the Markov inequality and the fact that $(a-b)^2 \leq |a^2-b^2|$,
			\begin{align*}
			(**) = \sum_{r=1}^\infty \mathbb{P}\left[ \max_{r^2 \leq m < (r+1)^2} |E_m - E_{r^2}| > \frac{\epsilon}{2} \right] & \leq \frac{2}{\epsilon} \sum_{r=1}^\infty \mathbb{E}\left[ \max_{r^2 \leq m < (r+1)^2} |E_m - E_{r^2}|  \right] \\
			& \leq \frac{2}{\epsilon} \sum_{r=1}^\infty c_2(r) < \infty .
			\end{align*}
			This completes the proof.
		\end{proof}

		Now we present the proof of \Cref{cor: bias}:
		
		\begin{proof}[Proof of \Cref{cor: bias}]
			Our starting point is again the bound in \Cref{prop: optimal converge}:
			\begin{align*}
			D_{k_P}\left( \frac{1}{m} \sum_{j=1}^m \delta(X_{\pi(j)}) \right)^2 & \leq \underbrace{ D_{k_P}\left( \sum_{i=1}^n w_i^* \delta(X_i) \right)^2 }_{(*)} + \underbrace{ \left( \frac{1 + \log(m)}{m} \right)  \max_{i=1,\dots,n} k_P(X_i,X_i) }_{(**)} .
			\end{align*}
			For term $(*)$, note that 
			\begin{align*}
			D_{k_P}\left( \sum_{i=1}^n w_i^* \delta(X_i) \right) \leq D_{k_P}\left(   \sum_{i=1}^n w_i \delta(X_i) \right), \qquad w_i := \frac{  \frac{\mathrm{d}P }{ \mathrm{d}Q}(X_i) } { \sum_{i'=1}^n \frac{\mathrm{d}P }{ \mathrm{d}Q}(X_{i'}) }
			\end{align*}
			due to the optimality property of the weights $w^*$ presented in \eqref{eq: optimal weights}.
			Further note that
			\begin{align*}
			D_{k_P}\left(\sum_{i=1}^n w_i \delta(X_i) \right) = \frac{1}{ \frac{1}{n} \sum_{i'=1}^n \frac{\mathrm{d}P}{\mathrm{d}Q}(X_{i'}) } D_{k_Q}\left( \frac{1}{n} \sum_{i=1}^n \delta(X_i) \right)
			\end{align*}
			where $k_Q(x,y) := \frac{\mathrm{d}P}{\mathrm{d}Q}(x) k_P(x,y) \frac{\mathrm{d}P}{\mathrm{d}Q}(y)$ is a reproducing kernel such that $\int_{\mathcal{X}} k_Q(x,\cdot) \mathrm{d}Q = 0$ for all $x \in \mathcal{X}$.
			The preconditions of \Cref{cor: bias} ensure that $V(x) \geq \sqrt{k_Q(x,x)}$ and 
			\begin{align*}
			\sup_{i \in \mathbb{N}} \mathbb{E} \left[ e^{\gamma k_Q(X_i,X_i)} \right] \leq b < \infty, \qquad M = \sup_{i \in \mathbb{N}} \mathbb{E}\left[\sqrt{k_Q(X_i,X_i)} V(X_i)\right] < \infty .
			\end{align*}
			Therefore we may apply \Cref{lem: AS kQ} to obtain that
			\begin{align*}
			D_{k_Q}\left( \frac{1}{n} \sum_{i=1}^n \delta(X_i) \right) \rightarrow 0
			\end{align*}
			almost surely as $n \rightarrow \infty$.
			Moreover, since $\int_{\mathcal{X}} \left| \frac{\mathrm{d}P}{\mathrm{d}Q} \right| \mathrm{d}Q < \infty$, it follows from \citet[][Theorem 17.0.1, part (i)]{Meyn2012} that
			\begin{align*}
			\frac{1}{n} \sum_{i=1}^n \frac{\mathrm{d}P}{\mathrm{d}Q}(X_i) \rightarrow \int_{\mathcal{X}} \frac{\mathrm{d}P}{\mathrm{d}Q} \mathrm{d}Q = 1
			\end{align*}
			almost surely as $n \rightarrow \infty$.
			Standard properties of almost sure convergence thus imply that $(*) \rightarrow 0$ almost surely as $n \rightarrow \infty$.
			
			For term $(**)$, we notice that
			$$
			\sup_{i \in \mathbb{N}} \mathbb{E}\left[ e^{ \gamma k_P(X_i,X_i) } \right] \leq b < \infty
			$$
			and we can therefore use \Cref{lem: AS max k Q} with $f(x) = k_P(x,x)$ to deduce that $(**) \rightarrow 0$ almost surely as $m,n \rightarrow \infty$.
			
			Thus we have established that
			\begin{align}
			D_{k_P} \left( \frac{1}{m} \sum_{j=1}^m \delta(X_{\pi(j)}) \right) \rightarrow 0
			\end{align}
			almost surely as $m,n \rightarrow \infty$.
			The final part of the statement of \Cref{cor: bias} is immediate from \Cref{prop: convergence control}.
		\end{proof}

	\subsection{Satisfying the Conditions of \Cref{thm: main theorem}}
	\label{ap: satisfy conditions}
	
The conditions for \Cref{thm: main theorem} are agnostic to the specific Markov chain used (e.g. Metropolis--Hastings, Gibbs sampling, etc), making it quite general.
In this appendix we discuss how explicit sufficient conditions can be obtained if one restricts attention to a specific MCMC method.
Here we focus on the \textit{Metropolis-adjusted Langevin algorithm} (MALA; whose definition is recalled in \Cref{ap: MCMC methods}).

Let $q(x,y)$ denote the probability density for the proposal $x \rightarrow y$ in MALA, with step size $\epsilon > 0$ fixed.
Let $A(x) \subset \mathbb{R}^d$ be the set of values $y$ which, if a move $x \rightarrow y$ is proposed, then $y$ is always accepted.
Let $I(x) := \{y \in \mathbb{R}^d : \|y\| \leq \|x\|\}$ and let $A \Delta B := (A \cup B) \setminus (A \cap B)$.
MALA is said to be \textit{inwardly convergent} if
\begin{align}
\lim_{\|x\| \rightarrow \infty} \int_{A(x) \Delta I(x)} q(x,y) \mathrm{d}y = 0 , \label{eq: inward conv}
\end{align}
see Section 4 of \cite{Roberts1996}.
The following result will then be established:

\begin{lemma} \label{lem: satisfy assumptions}
Let $P \in \mathcal{P}$ be distantly dissipitive on $\mathcal{X} = \mathbb{R}^d$, let $\mathbb{E}_{X \sim P}[\exp(\beta \|X\|^2)] < \infty$ for some $\beta \in (0,\infty)$, and assume that MALA is inwardly convergent.
Then, with kernel $k(x,y) = (1 + \|x-y\|^2)^{-1/2}$, the conditions of \Cref{thm: main theorem} are satisfied.
\end{lemma}
\begin{proof}
This proof exploits Theorem 9 of \cite{Chen2019}, which establishes $V$-uniform ergodicity of MALA for each of $V(x) = \exp(\gamma \|x\|)$ (any $\gamma >0$), $V(x) = \exp(\gamma \|x\|^2)$ (for $\gamma > 0$ sufficiently small) and $V(x) = 1 + \|x\|^\gamma$ ($\gamma \in \{1,2\}$).
Each choice of $V$ leads to a different set of preconditions for \Cref{thm: main theorem}, and the claimed result follows from taking $V(x) = 1 + \|x\|$.
Note that w.l.o.g. we can consider $V(x) = C(1 + \|x\|)$ for any fixed $C \in [1,\infty)$, since the constant $C$ cancels in the definition of $V$-uniform ergodicity.
The conditions for \Cref{thm: main theorem} now simplify as follows:

\vspace{5pt}
\noindent \textbf{First Condition}: 
The form of $k$ implies that $k_P(x,x) = d + \|\nabla \log p(x) \|^2$ (see \Cref{rem: iso kernel rem}).
Since $\nabla \log p$ is assumed to be Lipschitz we have, for $C$ sufficiently large,
\begin{align*}
    V(x) \geq \sqrt{k_P(x,x)} & \Leftrightarrow C(1+\|x\|) \geq \sqrt{d + \|\nabla \log p(x)\|^2} \\
    & \Leftarrow \|\nabla \log p(x)\|^2 \leq C_1 + C_2 \|x\|^2 \qquad (\text{some }C_1, C_2 > 0) ,
\end{align*}
so that the first condition of \Cref{thm: main theorem} is automatically satisfied.

\vspace{5pt}
\noindent \textbf{Second Condition}:
Now, suppose MALA is also $\tilde{V}$-uniformly ergodic; i.e. $\|\mathrm{P}^i(x,\cdot) - P\|_{\tilde{V}} \leq \tilde{R} \tilde{V}(x) \tilde{\rho}^n$ for some $\tilde{R} \in [0,\infty)$, $\tilde{\rho} \in (0,1)$.
Let $X \sim P$ independent of $X_i$.
Then
\begin{align*}
    \mathbb{E}[\tilde{V}(X_i)] & = \mathbb{E}[\tilde{V}(X)] + \mathbb{E}[\tilde{V}(X_i) - \tilde{V}(X)] \\
    & \leq \mathbb{E}[\tilde{V}(X)] + \|\mathrm{P}^i(X_0,\cdot) - P\|_{\tilde{V}}  \|\tilde{V}\|_{\tilde{V}} \\
    & \leq \mathbb{E}[\tilde{V}(X)] + \tilde{R} \tilde{V}(X_0) \tilde{\rho}^i \rightarrow \mathbb{E}[\tilde{V}(X)] \qquad \text{as } i \rightarrow \infty
\end{align*}
which shows that
\begin{align}
    \mathbb{E}[\tilde{V}(X)] < \infty & \Rightarrow \sup_{i \in \mathbb{N}} \mathbb{E}[\tilde{V}(X_i)] < \infty . \label{eq: remove sup 2}
\end{align}
From Theorem 9 of \cite{Chen2019} we have $\tilde{V}$-uniform ergodicity for $\tilde{V}(x) = \exp(\gamma C_2 \|x\|^2)$ (for $\gamma > 0$ sufficiently small), which shows that, for all $\gamma > 0$ sufficiently small
\begin{align}    
    b = \sup_{i \in \mathbb{N}} \mathbb{E}[\exp(\gamma k_P(X_i,X_i))] < \infty & \Leftrightarrow \sup_{i \in \mathbb{N}} \mathbb{E}[\exp( \gamma \|\nabla \log p(X_i)\|^2)] < \infty \nonumber \\
    & \Leftarrow \sup_{i \in \mathbb{N}} \mathbb{E}[\exp(\gamma C_2 \|X_i\|^2)] < \infty \nonumber \\
    & \Leftarrow \mathbb{E}[\exp(\gamma C_2 \|X\|^2)] < \infty \qquad (\text{due to \eqref{eq: remove sup 2}}). \label{eq: exp moment 2}
\end{align}
Thus the second condition of \Cref{thm: main theorem} is satisfied if $\mathbb{E}[\exp(\beta \|X\|^2)] < \infty$ for some $\beta \in (0,\infty)$.
    
\vspace{5pt}
\noindent \textbf{Third Condition}:
A similar argument used for the second condition can again be used, this time with $\tilde{V}(x) = 1 + \|x\|^s$ for $s \in \{1,2\}$.
Specifically, we have that
\begin{align*}
    M = \sup_{i \in \mathbb{N}} \mathbb{E}[\sqrt{k_P(X_i,X_i)} V(X_i) ] < \infty & \Leftrightarrow \sup_{i \in \mathbb{N}} \mathbb{E}[\sqrt{d + \|\nabla \log p(X_i)\|^2} (1 + \|X_i\|) ] < \infty \\
    & \Leftarrow \sup_{i \in \mathbb{N}} \mathbb{E}[\sqrt{d + C_1 + C_2\|X_i\|^2} (1 + \|X_i\|) ] < \infty \\
    & \Leftarrow \sup_{i \in \mathbb{N}} \mathbb{E}[1 + \|X_i\|], \; \sup_{i \in \mathbb{N}} \mathbb{E}[1 + \|X_i\|^2] < \infty \\
    & \Leftarrow \mathbb{E}[1 + \|X\|], \; \mathbb{E}[1 + \|X\|^2] < \infty \quad \text{(due to \eqref{eq: remove sup 2})} 
\end{align*}
with the latter being implied by the stronger moment condition in \eqref{eq: exp moment 2}.
\end{proof}
	
	The sufficient conditions presented in \Cref{lem: satisfy assumptions} may be explicitly verified, with the possible exception of the inwards convergence condition of \citet{roberts2004general}.

	\section{Experimental Protocol} \label{ap: MCMC methods}
	
	In this appendix we describe the generic structure of a parameter inference problem for a system of ODEs, that forms our empirical test-bed.
	
	Consider the solution $u$ of a system of $q$ coupled ODEs of the form
	\begin{align}
	\frac{\mathrm{d}u_1}{\mathrm{d}t} &= F_1(t,u_1,\dots,u_q; x) \nonumber \\
	& \vdots \nonumber \\
	\frac{\mathrm{d}u_q}{\mathrm{d}t} &= F_q(t,u_1,\dots,u_q; x) , \label{eq: ODE}
	\end{align}
	together with the initial condition $u(0) = u^0 \in \mathbb{R}^q$.
	The functions $F_i$ that define the gradient field are assumed to depend on a number $d$ of parameters, collectively denoted $x \in \mathbb{R}^d$, and the $F_i$ are assumed to be differentiable with respect to $u_1, \dots, u_q$ and $x$.
	It is assumed that $u(t)$ exists and is unique on an interval $t \in [0, T]$ for all values $x \in \mathbb{R}^d$.
	For simplicity in this work we assumed that the initial condition $u^0$ is not dependent on $x$ and is known.
	The goal is to make inferences about the parameters $x$ based on noisy observations of the state vector $u(t_i)$ at discrete times $t_i$; this information is assumed to be contained in a likelihood of the form
	\begin{align}
	\mathcal{L}(x) & := \prod_{i=1}^N \phi_i(u(t_i))
	\label{eq:original_likelihood}
	\end{align}
	where the functions $\phi_i : \mathbb{R}^q \rightarrow [0,\infty)$, describing the nature of the measurement at time $t_i$, are problem-specific and to be specified.
	The parameter $x$ is endowed with a prior density $\pi(x)$ and the posterior of interest $P$ admits a density $p(x) \propto \pi(x) \mathcal{L}(x)$.
	Computation of the gradient $\nabla \log p$ therefore requires computation of $\nabla \log \pi$ and $\nabla \log \mathcal{L}$; the latter can be performed by augmenting the system in \eqref{eq: ODE} with the sensitivity equations, as described next.
	
	Straight-forward application of the chain rule leads to the following expression for the gradient of the log-likelihood:
	\begin{align*}
	(\nabla \log \mathcal{L})(x) &= - \sum_{i=1}^N \frac{\partial u}{\partial x}(t_i) (\nabla \log \phi_i)(u(t_i)),
	\end{align*}
	where $(\partial u/\partial x)_{r,s} \coloneqq \partial u_r / \partial x_s$ is the matrix of \emph{sensitivities} of the solution $u$ to the parameter $x$ and is time-dependent.
	Sensitivities can be computed by augmenting the system in \eqref{eq: ODE} and simultaneously solving the \emph{forward sensitivity equations}
	\begin{align}
	\frac{\mathrm{d}}{\mathrm{d}t} \left( \frac{\partial u_r}{\partial x_s} \right) &= \frac{\partial F_r}{\partial x_s} + \sum_{l=1}^q \frac{\partial F_r}{\partial u_l} \frac{\partial u_l}{\partial x_s} \label{eq:forward_sensitivities}
	\end{align}
	together with the initial condition $(\partial u_r / \partial x_s)(0) = 0$, which follows from the independence of $u^0$ and $x$.
	
	The experiments reported in \Cref{sec: empirical} were based on four distinct Metropolis--Hastings MCMC methods, whose details have not yet been described.
	The generic structure of the proposal mechanism is
	$x^* = x_{n-1} + H \nabla \log p(x_{n-1}) 
	+ G \xi_n$,
	where the $\xi_n \sim \mathcal{N}(0,I)$ are independent.
	The matrices $H$ and $G$ are specified in \Cref{table:transition_kernel_parms}.
	Our implementation of these samplers interfaces with the \texttt{CVODES} library \citep{hindmarsh2005sundials}, which presents a practical barrier to reproducibility.
	Moreover, the CPU time required to obtain MCMC samples was approximately two weeks for the calcium model.
	Since our research focused on post-processing of MCMC output, rather than MCMC itself, we directly make available the full output from each sampler on each model considered at
	\begin{center}
	    \url{https://doi.org/10.7910/DVN/MDKNWM}.
	\end{center}
	This Harvard database download link consists of a single ZIP archive (1.5GB) that contains, for each ODE model and each MCMC method, the states $(x_i)_{i=1}^n$ visited by the Markov chain, their corresponding gradients $\nabla \log p(x_i)$ and the values $p(x_i)$ up to an unknown normalisation constant.
	The \texttt{Stein Thinning} software described in \ref{subsec: software} can be used to post-process these datasets at minimal effort, enabling our findings to be reproduced.

	\begin{table}
		\centering
		\footnotesize
		\begin{tabular}{|p{5cm}||c|c||p{6cm}|} \hline
			\textbf{Proposal} & $H$ & $G$ & \textbf{Details} \\
			\hline 
			\hline 
			\texttt{RW} & $0$& $\epsilon I$ & Step size $\epsilon$ selected following \cite{Roberts2001} \\
			\hline 
			\texttt{ADA-RW} \citep{haario1999adaptive} & $0$&  $\sqrt{\hat{\Sigma}}$ &  $\hat{\Sigma}$ is the sample covariance matrix of preliminary MCMC output \\
			\hline 
			\texttt{MALA} \citep{Roberts1996}	& $\frac{\epsilon^2}{2}I$&  $\epsilon I$ & Step size $\epsilon$ selected following \cite{Roberts2001}  \\
			\hline 
			\texttt{P-MALA} \citep{Girolami2011}	& $\frac{\epsilon^2}{2}
			M^{-1}(x_{n-1})$&  $\epsilon
			\sqrt{M^{-1}(x_{n-1})}$ &  $M(x) = F(x) + \Sigma_0^{-1}$ where $F(x)$ is the Fisher information matrix at $x$ and $\Sigma_0$ is the prior covariance matrix. \\ \hline
		\end{tabular}
		\caption{Parameters $H$ and $G$ used in the Metropolis--Hastings proposal. }
		\label{table:transition_kernel_parms}
	\end{table}

	\section{Convergence Diagnostics for MCMC} \label{section: background on diagnostics}
	
	Rigorous approaches for selecting a burn-in period $b$ have been proposed by authors including \cite{meyn1994computable, rosenthal1995minorization, roberts1999bounds}; see also \cite{jones2001honest}.
	Unfortunately, these often involve conditions that are difficult to establish
	\citep[][discuss how some of the terms appearing in these conditions can be estimated]{biswas2019estimating}, or, when they hold, they provide loose bounds, implying an unreasonably long burn-in period.
	
	Convergence diagnostics have emerged as a practical solution to the need to test for non-convergence of MCMC. 
	Their use is limited to reducing bias in MCMC output; they are not optimised for the fixed $n$ setting, which requires a bias-variance trade-off.
	Nevertheless, convergence diagnostics constitute the principal means by which MCMC output is post-processed.
	In this section we recall standard practice for selection of a burn-in period $b$ in constructing an estimator of the form \eqref{eq: std post process}, focussing on the widely-used diagnostics of \cite{gelman1992, brooks1998general, gelman2014} (the \emph{GR diagnostic}), as well as the more recent work of \cite{vats2018revisiting} (the \emph{VK diagnostic}). 
	
	The GR diagnostic is based on running $L$ independent chains, each of length $n$, with starting points that are over-dispersed with respect to the target. Obtaining initial points with such characterisation is not trivial because the target is not known beforehand; we refer to the original literature for advice on how to select these initial points, but, in practice, it is not uncommon to guess them. 
	When the support of the target distribution is uni-dimensional  (or when $d>1$, but a specific uni-dimensional summary $f(x)$ is used), the GR diagnostics ($\hat{R}^{\text{GR},L}$) is obtained as the square root of the ratio of two estimators of the variance $\sigma^2$ of the target.  
	In particular,
	\begin{align}
	\hat{R}^{\text{GR},L} := \sqrt{\frac{\hat{\sigma}^2}{s^2}},
	\label{eq:GR_diagnostic}
	\end{align}
	where $s^2$ is the (arithmetic) mean of the sample variances $s^2_l$, $l = 1,\dots,L$, of the chains, which typically provides an underestimate of $\sigma^2$, and $\hat{\sigma}^2$ is constructed as an overestimate of the target variance
	\begin{align*}
	\hat{\sigma}^2 := \frac{n-1}{n} s^2 + \frac{B}{n},   
	\end{align*}
	where the term $B/n$ is an estimate of the asymptotic variance of the sample mean of the Markov chain.
	In the original GR diagnostics, this asymptotic variance was estimated as the sample variance of the means $\bar{X}_l$, $l = 1,\dots,L$, from the $L$ chains, leading to
	\begin{align*}
	\frac{B}{n} = \frac{1}{L-1} \sum_{l=1}^L \left( \bar{X}_{l} - \frac{1}{L} \sum_{l'=1}^L \bar{X}_l \right)^2 .
	\end{align*}
	The improved VK diagnostic, $\hat{R}^{\text{VK},L}$, is formally obtained in the same way as \eqref{eq:GR_diagnostic}, but with more efficient estimators $\tau^2/n$ for the asymptotic variance used in place of $B/n$. 
	A number of options are available here, but the (lugsail) batch mean estimator of \cite{vats2018lugsail} is recommended because it is guaranteed to be biased from above, while still being 
	consistent (in our simulations we use  batches of size $\sqrt[3]{n}$). 
	This gain in efficiency leads to improved performance of the VK diagnostic over the GR diagnostic, in the sense that it is less sensitive to the randomness in the Markov chains and the number of chains used. 
	In particular, $\hat{R}^{\text{VK},L}$ 
	can be computed using one chain only $(L=1)$, which has clear practical appeal.

	For an ergodic Markov chain, $\hat{R}^{\text{GR},L}$ and $\hat{R}^{\text{VK},L}$ converge to 1 as $n\rightarrow\infty$, so that selection of a suitable burn-in period $b$ amounts to observing when these diagnostics are below $1 + \delta$, where $\delta$ is a suitable threshold. 
	In the literature on $\hat{R}^{\text{GR},L}$, the somewhat arbitrary choice $\delta = 0.1$ is commonly used, see \citet[Ch.\ 11.5]{gelman2014} and the survey in \cite{vats2018revisiting}. 
	In the literature on $\hat{R}^{\text{KV},L}$, \cite{vats2018revisiting} showed how $\delta$ can be selected by exploiting the relationship between $\hat{R}^{\text{VK},L}$ and the effective sample size (ESS) when estimating the mean of the target.
	In particular, it is possible to re-write 
	\begin{align}
	\hat{R}^{\text{VK},L} = \sqrt{\left( \frac{n-1}{n}\right) + \frac{L}{\reallywidehat{\text{ESS}}}}
	\label{eq:ESS_R_hat}
	\end{align}
	where $\reallywidehat{\text{ESS}}$ is a strongly consistent estimator of the ESS.
	One can therefore (approximately) select a $\delta$ threshold that corresponds to a pre-specified value of the ESS.
	The literature on error assessment for MCMC provides guidance on how large the ESS ought to be in order that the width of a $(1-\alpha)\%$ confidence interval for the mean is less that a specified threshold $\epsilon$; see \cite{jones2001honest, flegal2008markov, vats2019multivariate}:
	\begin{align}
	\reallywidehat{\text{ESS}} \geq M_{\alpha, \epsilon} := \frac{2^{2} \pi}{(\Gamma(1/2))^{2}}\frac{\chi^2_{1-\alpha}}{\epsilon^2}, \label{eq:bound_ESS}
	\end{align}
	where  $\Gamma(\cdot)$ is the Gamma function,  $\chi^2_{1-\alpha}$ is the $(1-\alpha)^{\text{th}}$ quantile of the $\chi^2$ distribution with one degree of freedom. 
	Plugging \eqref{eq:bound_ESS} in \eqref{eq:ESS_R_hat} leads to the conclusion that, after the first iteration for which $\hat{R}^{\text{VK},L}$ is below $1 + \delta$, where 
	\begin{align}
	\delta \equiv \delta(L, \alpha, \epsilon) = \sqrt{1+ \frac{L}{M_{\alpha, \epsilon}}} - 1, 
	\label{eq:delta_Vats}
	\end{align}
	the chain will provide an estimate of the mean with small Monte Carlo error, when compared to the variability of the target. 
	The default choices $\alpha = 0.05$ and $\epsilon = 0.05$ were suggested in \cite{vats2018revisiting}, and were used in our work.
	For experiments reported in this paper we used \eqref{eq:delta_Vats} to select an appropriate threshold for both $\hat{R}^{\text{GR},L}$ and $\hat{R}^{\text{VK},L}$, which leads to estimated burn-in periods that we denote $\hat{b}^{\text{GR}, L}$, and $\hat{b}^{\text{VK},L}$, respectively, in the main text.

	The above discussion focussed on the univariate case, but generalisations of these convergence diagnostics are available and can be found in 
	\cite{brooks1998general}  and  \cite{vats2018revisiting}.
	All convergence diagnostics in this work were computed using the \texttt{R} packages \verb+coda+ \citep{Rcoda} and \verb+stableGR+ \citep{RstableGR}\footnote{The GR diagnostic in the software package uses the original definition in \cite{gelman1992}, that differs slightly from \eqref{eq:GR_diagnostic}; however, this difference is not expected to strongly affect the simulation results that we present.}.

	\section{Empirical Assessment: Additional Results} \label{app: add results}
	
	This section first explores the effect of the choice of kernel, then collects together additional empirical results that accompany the assessment in \Cref{sec: empirical}.

	\subsection{Choice of Kernel} \label{ap: hyper param vary}
	
	This section concerns the selection of the kernel parameter $\beta \in (-1,0)$, which enters into the kernel $k(x,y) = (c^2 + \|\Gamma^{-1/2}(x-y)\|^2)^\beta$ used within \texttt{Stein Thinning}.
	The experiment of \Cref{fig: illustration} was repeated for different values of $\beta \in \{-0.1,-0.5,-0.9\}$, and for each setting of the $\Gamma$ preconditioner matrix (\texttt{med}, \texttt{sclmed}, \texttt{smpcov}), with the results displayed in \Cref{fig:beta_effect}.
	It can be seen that the states selected when $\beta = -0.5$ are qualitatively reasonable. 
	However, poor results were observed for $\beta \in \{ -0.1, -0.9\}$.
    At $\beta = -0.1$, states were selected that were closer together compared to what would intuitively have been expected.
    At $\beta = -0.9$, stratification of selected states across the two components of the target was not achieved.
    These results reflect that $\beta \in \{-0.1,-0.9\}$ are edge cases for KSD, which has been shown to enjoy convergence control only for $\beta \in (-1,0)$ \citep[see Theorem 4 in][]{Chen2019}.
	These results support the use of $\beta = -0.5$ as a default in \texttt{Stein Thinning}.

	\begin{figure}[t!]
			\centering
			
			\begin{subfigure}[b]{0.3\textwidth}
			\includegraphics[width = \textwidth]{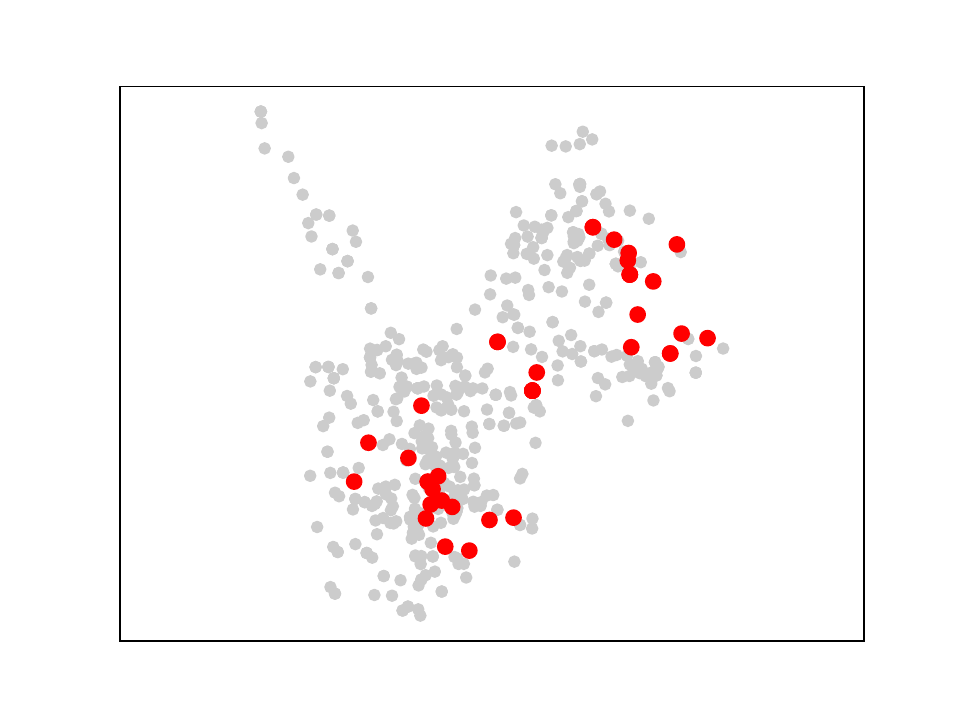}
			\caption{\texttt{med}, $\beta = -0.1$}
			\end{subfigure}
			\begin{subfigure}[b]{0.3\textwidth}
			\includegraphics[width = \textwidth]{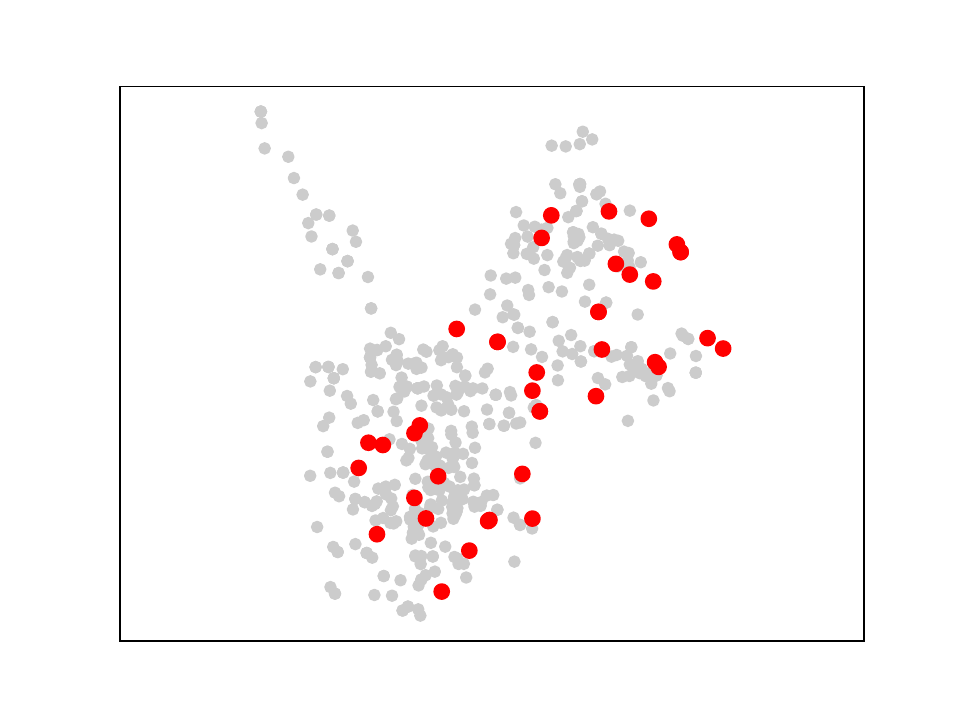}
			\caption{\texttt{med}, $\beta = -0.5$}
			\end{subfigure}
			\begin{subfigure}[b]{0.3\textwidth}
			\includegraphics[width = \textwidth]{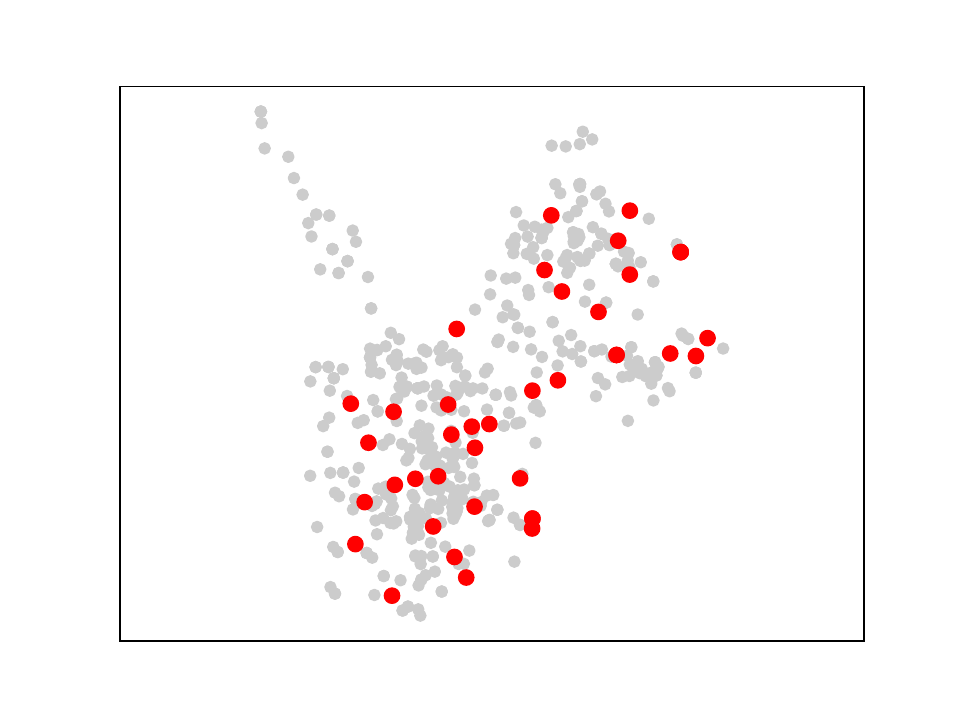}
			\caption{\texttt{med}, $\beta = -0.9$}
			\end{subfigure}
			
			\begin{subfigure}[b]{0.3\textwidth}
			\includegraphics[width = \textwidth]{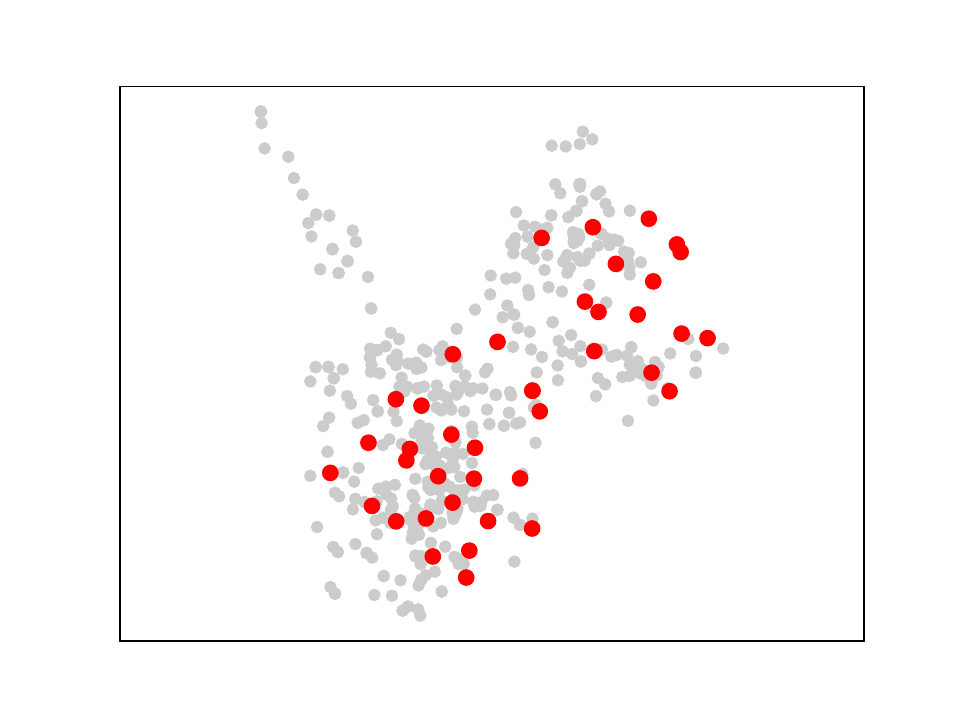}
			\caption{\texttt{sclmed}, $\beta = -0.1$}
			\end{subfigure}
			\begin{subfigure}[b]{0.3\textwidth}
			\includegraphics[width = \textwidth]{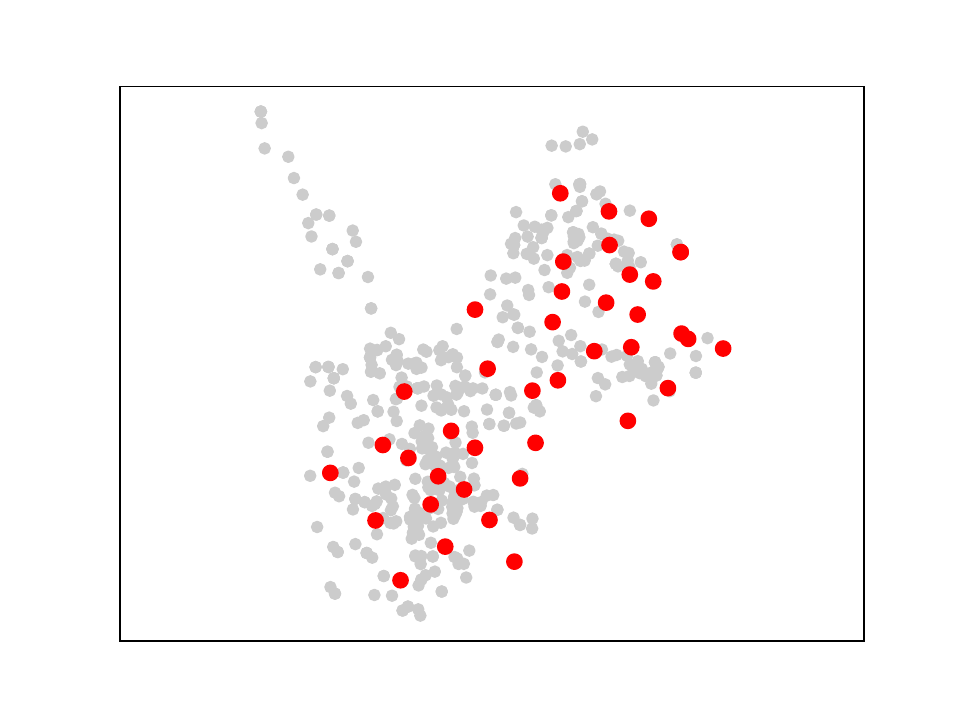}
			\caption{\texttt{sclmed}, $\beta = -0.5$}
			\end{subfigure}
			\begin{subfigure}[b]{0.3\textwidth}
			\includegraphics[width = \textwidth]{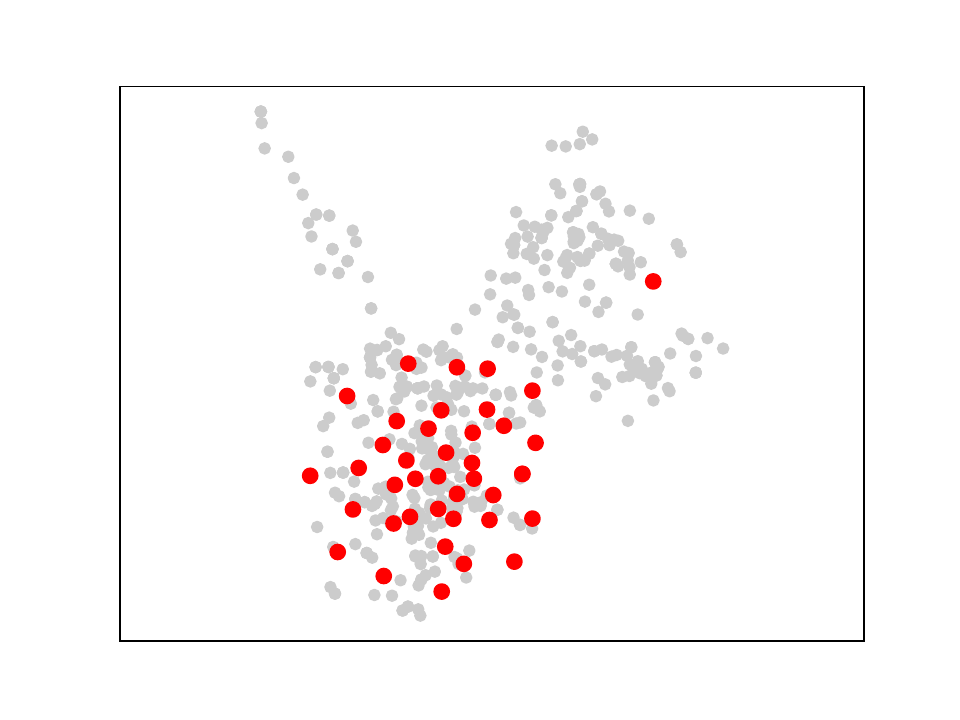}
			\caption{\texttt{sclmed}, $\beta = -0.9$}
			\end{subfigure}
			
			\begin{subfigure}[b]{0.3\textwidth}
			\includegraphics[width = \textwidth]{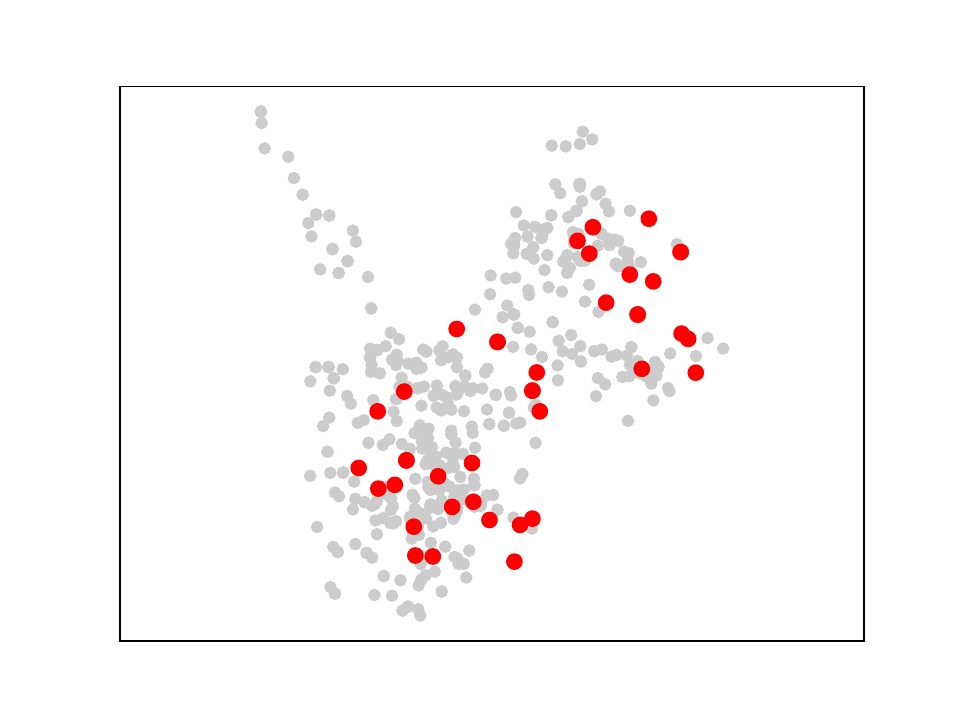}
			\caption{\texttt{smpcov}, $\beta = -0.1$}
			\end{subfigure}
			\begin{subfigure}[b]{0.3\textwidth}
			\includegraphics[width = \textwidth]{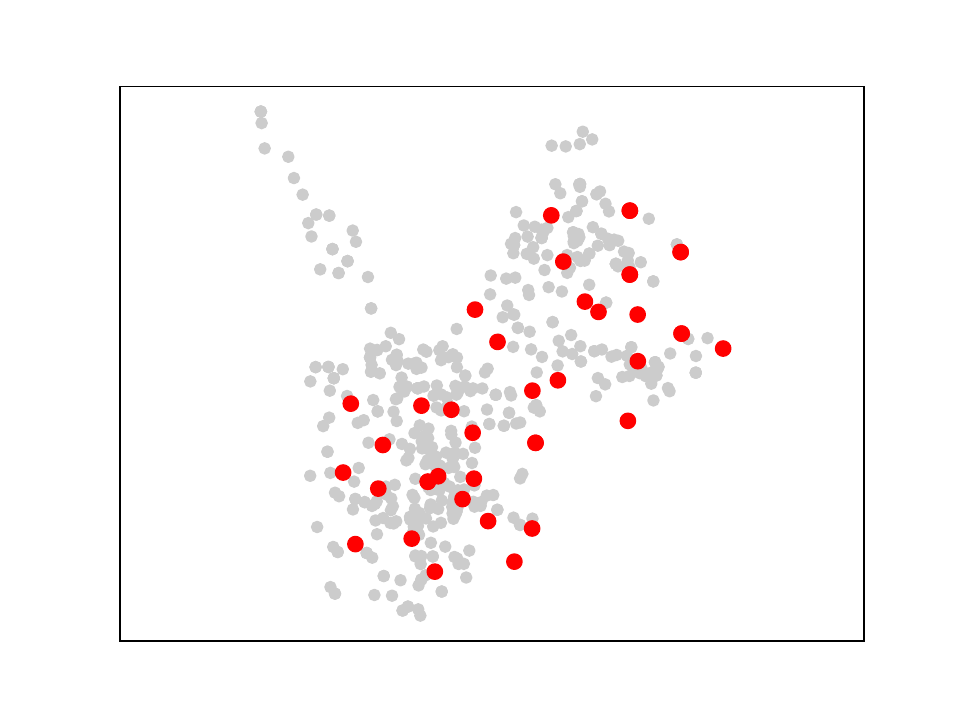}
			\caption{\texttt{smpcov}, $\beta = -0.5$}
			\end{subfigure}
			\begin{subfigure}[b]{0.3\textwidth}
			\includegraphics[width = \textwidth]{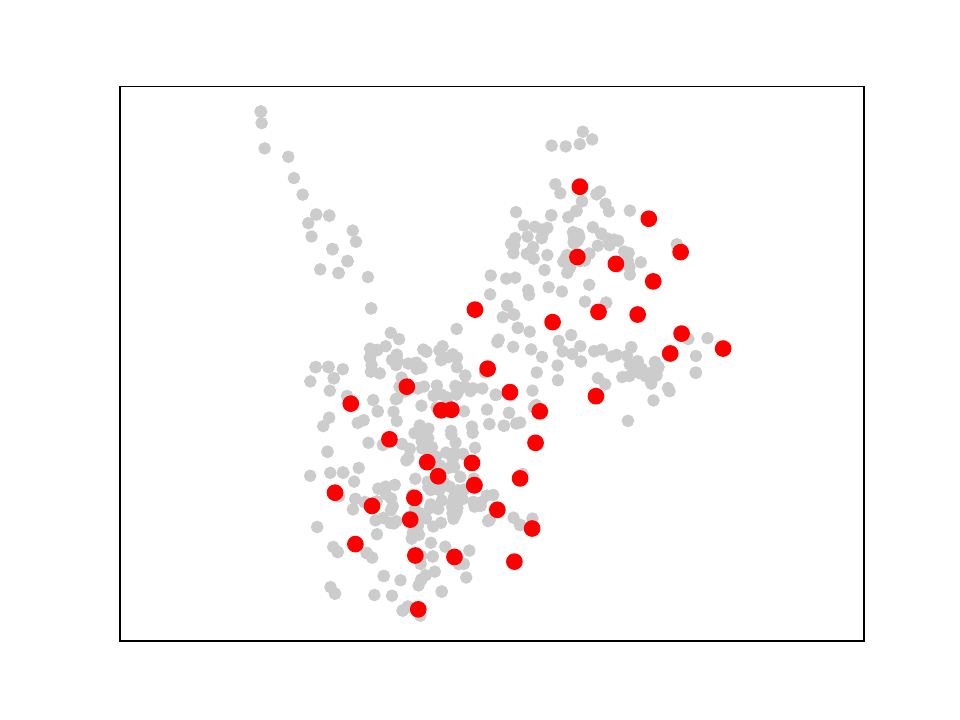}
			\caption{\texttt{smpcov}, $\beta = -0.9$}
			\end{subfigure}
			
			\caption{
				Exploring the role of $\beta$ in \texttt{Stein Thinning}.
				[The experiment in \Cref{fig: illustration} was reproduced, varying the parameter $\beta$ in the kernel. The results reported in \Cref{fig: illustration} correspond to $\beta = -0.5$.]}
			\label{fig:beta_effect}
		\end{figure}

	\subsection{Goodwin Oscillator}\label{app:addtional_Goodwin}
	
	The Goodwin oscillator is a phenomenological model for genetic regulatory processes in a cell and is described by $g$ coupled ODEs of the form
	\begin{align*}
	\frac{\mathrm{d} u_1}{\mathrm{d} t} &= \frac{a_1}{1+a_2 u_g^{\rho}} - \alpha u_1, 
	\\
	\frac{\mathrm{d} u_2}{\mathrm{d} t} &= k_1 u_1- \alpha u_2,
	\\
	& \vdots
	\\
	\frac{\mathrm{d} u_g}{\mathrm{d} t} &= k_{g-1} u_{g-1}- \alpha u_g ,
	\end{align*}
	where the first component $u_1$ represents the concentration of mRNA, $u_2$ that of its corresponding protein product, while $u_3, \ldots u_g$ represent concentrations of proteins in a signalling cascade, that can either be present ($g>2$) or absent ($g=2$), and with the $g^{\text{th}}$ protein having a negative feedback on the production of mRNA, by means the Hill curve in the first equation. 
	The nontrivial oscillations that result have led to the Goodwin oscillator being used in previous studies to assess the performance of Bayesian computational methods \citep{Calderhead2009a,Oates2016,Chen2019}.
	The parameters $a_1, k_1, \ldots k_{g-1} > 0$ represent synthesis rates and $a_2, \alpha > 0$ representing degradation rates.
	To cast this model in the setting of \Cref{sec: methods} we set $x \in \mathbb{R}^{g+2}$ to be the vector whose entries are $\log(a_1)$, $\log(k_1)$, and so forth, so that we have a $d = g+2$ dimensional parameter for which inference is performed.

	The experiment that we report considers synthetic data $y_i \in \mathbb{R}^g$ generated in the simple case $g=2$, which are then corrupted by Gaussian noise such that the terms $\phi_i$ in \eqref{eq:original_likelihood}
	are equal to 
	\begin{equation}
	\phi_i(u(t_i)) \propto \exp\left(- \frac{1}{2} (y_i - u(t_i))^\top C^{-1} (y_i - u(t_i)) \right) \label{eq:2d_likelihood}
	\end{equation}
	with $C = \text{diag}(0.1^2, 0.05^2)$.
	The initial condition was $u(0) = (0,0)$  and the data-generating parameters were $(a_1, a_2, \alpha, k_1) = (1, 3, 0.5, 1)$.
	The times $t_i$, $i = 1, \ldots, 2400$, at
	which data were obtained were taken to be uniformly spaced on [1,25], in order to capture both the oscillatory behaviour of the system and its steady state. 
	This relatively high frequency of observation and corresponding informativeness of the dataset was used to pre-empt a similarly high frequency observation process in the calcium signalling model of \Cref{subsec: cardiac}. 
	\Cref{fig:Goodwin_data} displays the dataset.
	A standard Gaussian prior $\pi(x)$ was placed on the parameter $x$ and each MCMC method was applied to approximately sample from the posterior $P$.
	
	Exemplar trace plots for the MCMC methods are presented in \Cref{fig:Goodwin_traceplots_z}.
	The over-dispersed initial states used for the $L$ chains are reported in \Cref{table:initial-points-Goodwin}, while the univariate and multivariate convergence diagnostics, computed  every 1000 iterations, are shown respectively in \Cref{fig:Goodwin_conv_diagnostics_z} and \Cref{fig:Goodwin_conv_diagnostics_z_multi}.
	The values of the thresholds $\delta(L,\alpha, \epsilon)$ are reported in \Cref{table:delta-Lotka}.
	For each MCMC method, the estimated burn-in period is presented in Table \ref{table:burnin-Goodwin}. 
	The GR diagnostic did not fall below the $1 + \delta$ threshold in the allowed number of iterations, which is consistent with the empirical observations of \cite{vats2018revisiting}. 
	
	The additional results for the Goodwin oscillator that we present in this appendix are as follows: 
	
	\begin{itemize}
		\item Figures \ref{fig:Goodwin_SP_RW} (\texttt{RW}), \ref{fig:Goodwin_SP_MALA} (\texttt{MALA}) and \ref{fig:Goodwin_SP_MALA_PRECOND} (\texttt{P-MALA}) display point sets of size $m=20$ selected using traditional burn in and thinning methods, \texttt{Support Points} and \texttt{Stein Thinning}, based on MCMC output. 
		Note that the gray regions are not \emph{necessarily} regions of high posterior probability; they are the regions explored by the sample path and, moreover, these panels are two-dimensional projections from $\mathbb{R}^4$.
	    Therefore we are hesitant to draw strong conclusions from these figures.
	    \item Figures \ref{fig:Goodwin_moments_RW} (\texttt{RW}), \ref{fig:Goodwin_moments_MALA} (\texttt{MALA}) and \ref{fig:Goodwin_moments_PRECOND-MALA} (\texttt{P-MALA}) display the absolute error in estimating the first moment of each parameter, for each of the competing methods, where an extended run from MCMC provided the ground truth.
		\item Figures \ref{fig:Goodwin_densities_RW} (\texttt{RW}), \ref{fig:Goodwin_densities_ADA-RW} (\texttt{ADA-RW}),  \ref{fig:Goodwin_densities_MALA} (\texttt{MALA}) and \ref{fig:Goodwin_densities_PRECOND-MALA} (\texttt{P-MALA}) show marginal density estimates, for each parameter and each of the competing methods, where an extended run from MCMC provided the ground truth. 
	\end{itemize}

	\begin{figure}[t!]
		\centering
		\includegraphics[width = 0.45\textwidth]{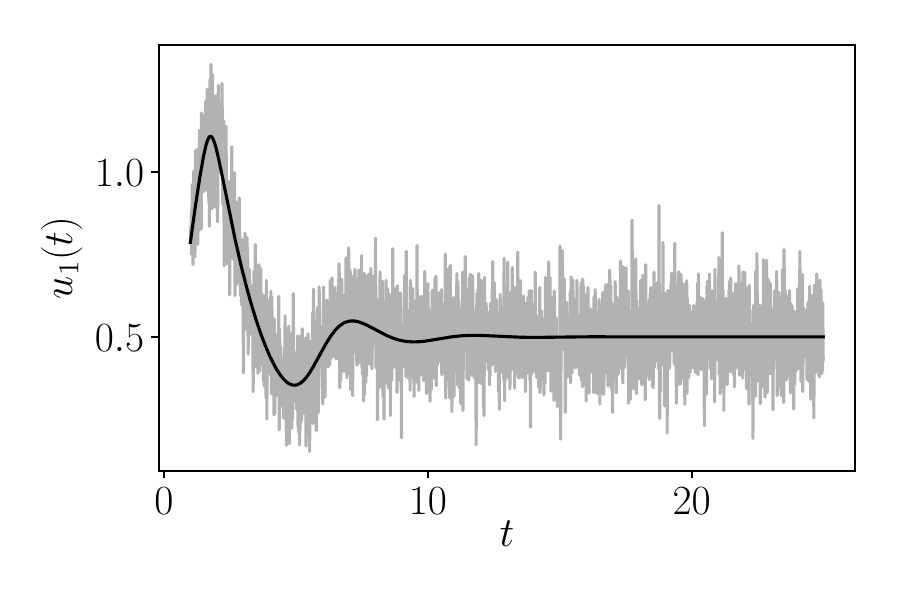}
		\includegraphics[width = 0.45\textwidth]{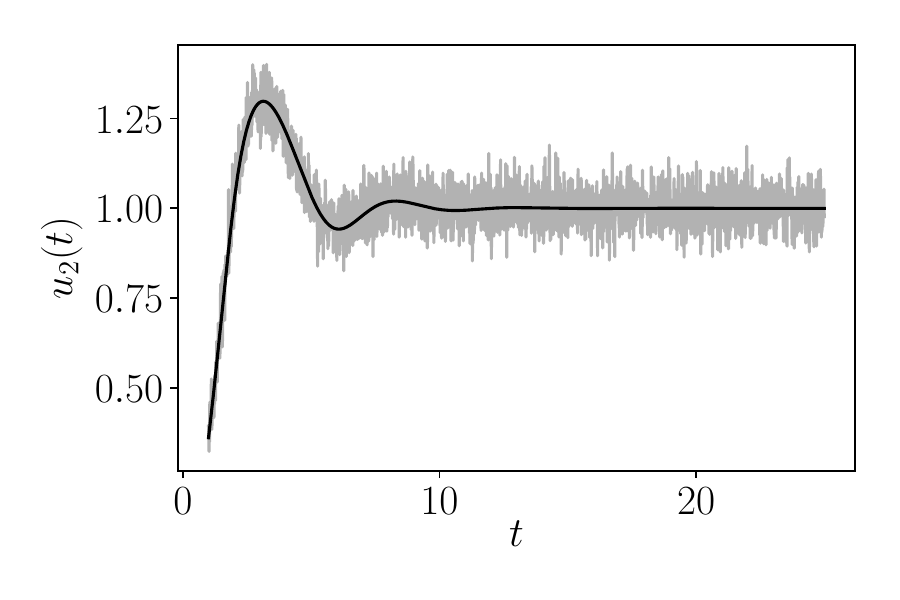}
		\caption{Data (gray) and  ODE solution corresponding to the true data-generating parameters (black) for the Goodwin oscillator.}
		\label{fig:Goodwin_data}
	\end{figure}

	\begin{table}[t!]
		\centering
		\footnotesize
		\begin{tabular}{|c||c | }
			\hline
			\textbf{Chain Number} &  \textbf{Initial State for Parameters}  $(a_1, a_2, \alpha, k)$ \\
			\hline
			\hline
			1 & (0.5, 1, 3, 2)  \\
			2 & (0.001, 0.2, 0.1, 10)   \\
			3 & (10, 0.1, 0.9, 0.1) \\
			4 & (0.1, 30, 0.1, 0.3)     \\
			5 & (2, 2, 2, 2)    \\
			6 & (5, 5, 1, 1)    \\
			\hline 
			
		\end{tabular}
		\caption{Initial states, over-dispersed with respect to the posterior, for the $L = 6$ independent Markov chains used in the Goodwin oscillator.  
		The parameters used to generate the data were $ (a_1, a_2, \alpha, k) = (1,	3,	0.5,	1)$. }
		\label{table:initial-points-Goodwin}
	\end{table}

	\begin{figure}[ht!]
		\centering
		\includegraphics[width = 0.9\textwidth]{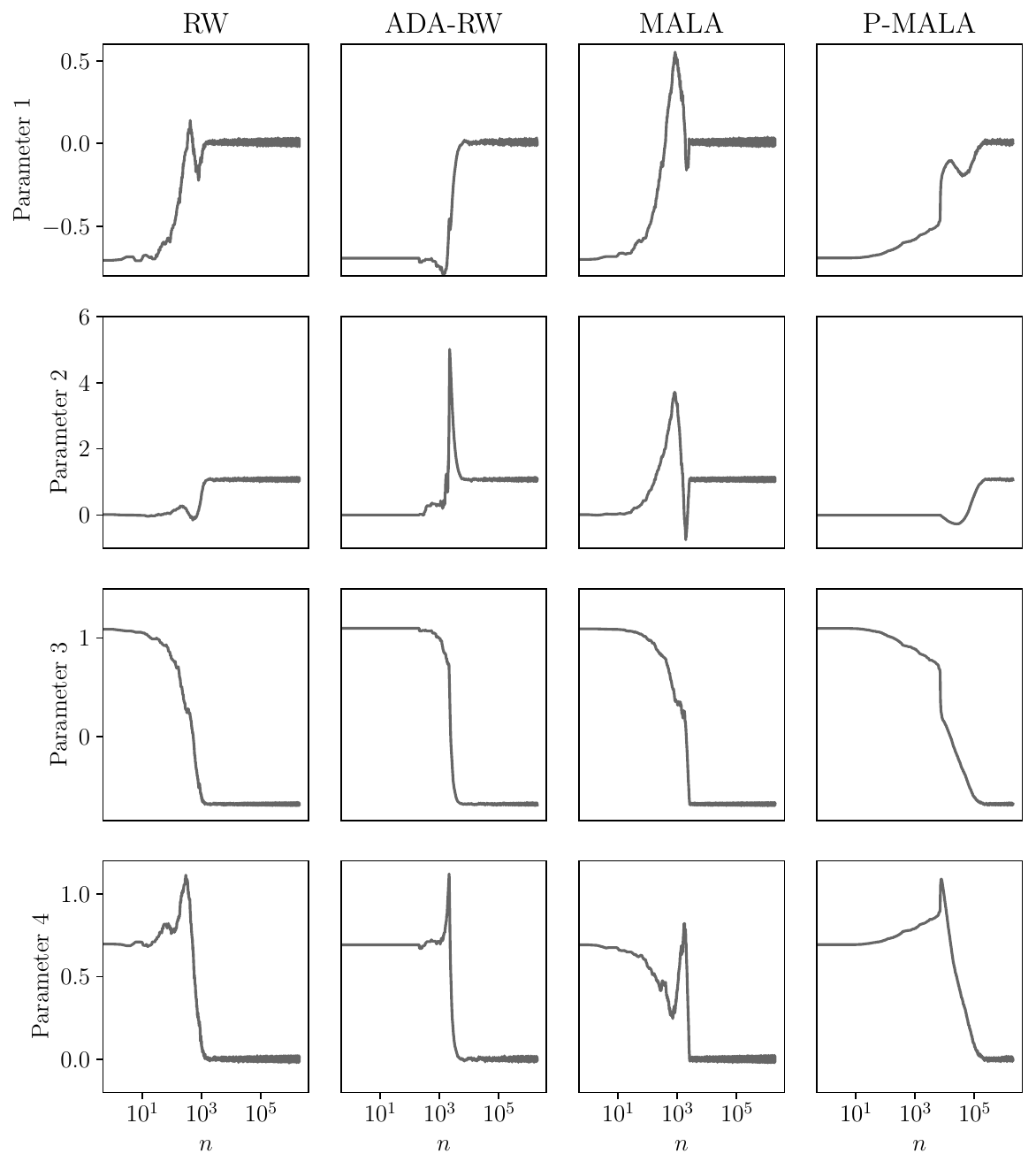}
		\caption{Trace plots for each parameter in the Goodwin oscillator, plotted against the MCMC iteration number. 
			Each row corresponds to one of the four parameters, while each column corresponds to one of the four MCMC methods considered.
			(Note the logarithmic scale on the horizontal axis, used to better visualise the initial part of the MCMC sample path.)
		}
		\label{fig:Goodwin_traceplots_z}
	\end{figure}

	\begin{table}[t!]
		\centering
		\footnotesize
		\begin{tabular}{|c||c|c| }
			\hline
			& \textbf{Univariate Diagnostics}& \textbf{Multivariate Diagnostics}
			\\
			\hline
			\hline
			$L = 6$ &  4.88 $\times 10^{-4}$ & 3.56 $\times 10^{-4}$ 
			\\
			\hline
			$L = 5$ & 4.07 $\times 10^{-4}$ &  2.96 $\times 10^{-4}$\\
			\hline
			$L = 1$& 8.13 $\times 10^{-5}$&  5.93 $\times 10^{-5}$\\
			\hline
		\end{tabular}
		\caption{The values of the threshold $\delta(L, \alpha, \epsilon)$ used in analysis of the Goodwin and Lotka--Volterra models, with $\alpha = 0.05$, $\epsilon = 0.05$, when changing $L$, and considering the univariate and multivariate convergence diagnostics $\hat{R}^{\text{VK},L}$.}
		\label{table:delta-Lotka}
	\end{table}

	\begin{figure}[t!]
		\centering
		\includegraphics[width = 1\textwidth]{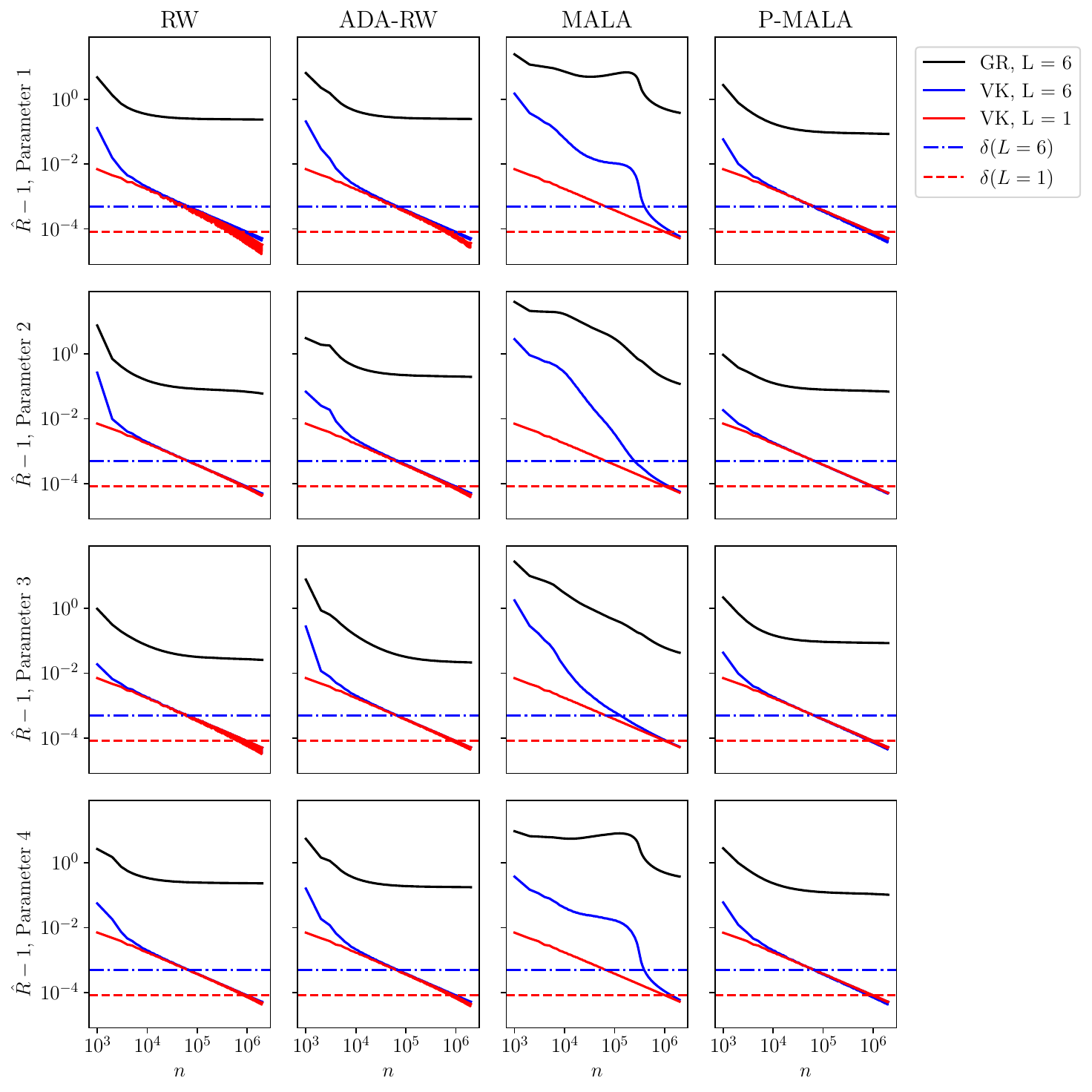}
	
		\caption{Univariate convergence diagnostics, for the Goodwin oscillator, plotted against the MCMC iteration number. 
			The black line represents the GR diagnostic (based on $L=6$ chains), while the blue and red lines represent the VK diagnostic (based on $L = 6$ and $L=1$ chains, respectively). 
			The dash-dotted ($L=6$) and dashed ($L=1$) horizontal lines correspond to the critical values $\delta(L, \alpha, \epsilon)$, used to determine the burn-in period; see Table~\ref{table:delta-Lotka}.}
		\label{fig:Goodwin_conv_diagnostics_z}
	\end{figure}

	\begin{figure}[ht!]
		\centering
		\includegraphics[width = 1\textwidth]{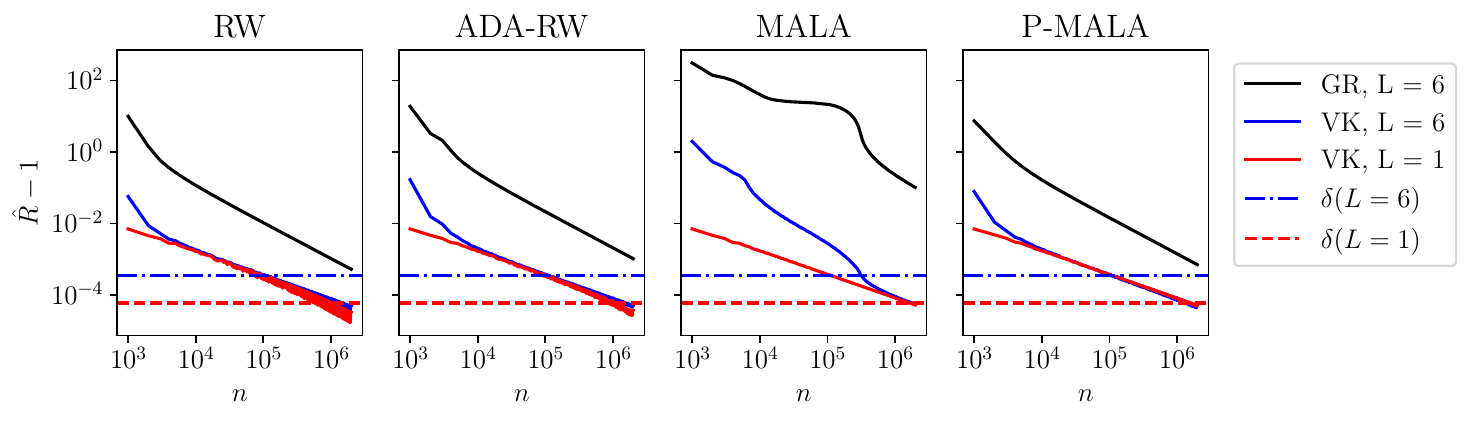}
		\caption{Multivariate convergence diagnostics for the Goodwin oscillator, plotted against the MCMC iteration number $n$. 
			The black line is the GR diagnostic (based on $L=6$ chains), while the blue and red lines are the VK diagnostic (based on $L = 6$ and $L=1$ chains, respectively). 
			The dotted ($L=6$) and dashed ($L=1$) horizontal lines correspond to the threshold $\delta$ that is used to determine the burn-in period; see Table~\ref{table:delta-Lotka} in \Cref{app:addtional_Goodwin}.}
		\label{fig:Goodwin_conv_diagnostics_z_multi}
	\end{figure}

	\begin{table}[t!]
		\centering
		\footnotesize
		\begin{tabular}{|c||c||c | c | c|}
			\hline
			\textbf{MCMC Diagnostics} & \textbf{Sampler} &  $\hat{b}^{\text{GR}, 6}$ & $\hat{b}^{\text{VK}, 6}$ & $\hat{b}^{\text{VK}, 1}$   \\
			\hline 
			\hline 
				\multirow{4}{*}{\textbf{Univariate}} & \texttt{RW} &   $>n$ &  70,000& 820,000\\
			\cline{2-5}
			& \texttt{ADA-RW} & $>n$& 71,000& 816,000\\	
		    \cline{2-5}
			& \texttt{MALA} 	 &  $>n$&  397,000& 1,020,000 \\
			\cline{2-5}
			& \texttt{P-MALA} 	&   $>n$&  68,000 & 987,000\\
			\hline \hline
			\multirow{4}{*}{\textbf{Multivariate}} & \texttt{RW} & $>n$  & 93,000 &  578,000\\
			\cline{2-5}	
			& \texttt{ADA-RW} &  $>n$  &107,000 & 824,000\\
			\cline{2-5} 
			& \texttt{MALA} 	 &  $>n$ &  316,000 & 1,615,000\\
			\cline{2-5} 
			& \texttt{P-MALA} 	&  $>n$  &103,000 &  1,475,000 \\
			\hline
		\end{tabular}
		\caption{Estimated burn-in period for the Goodwin oscillator, using the GR diagnostic based on $L$ chains, $\hat{b}^{\text{GR},L}$ ($L=6$), and the VK diagnostic based on $L$ chains,  $\hat{b}^{\text{VK},L}$, ($L = 1,6$). 		In each case both univariate and multivariate convergence diagnostics are presented; in the univariate case we report the largest value obtained when looking at each of the $d$ parameters individually to estimate the burn-in period. 			The symbol ``$>n$'' indicates the case in which a diagnostic did not go below the $1+\delta$ threshold. 			}
		\label{table:burnin-Goodwin}
	\end{table}

	\begin{figure}[t!]
		\centering
		\includegraphics[width = 0.4\textwidth]{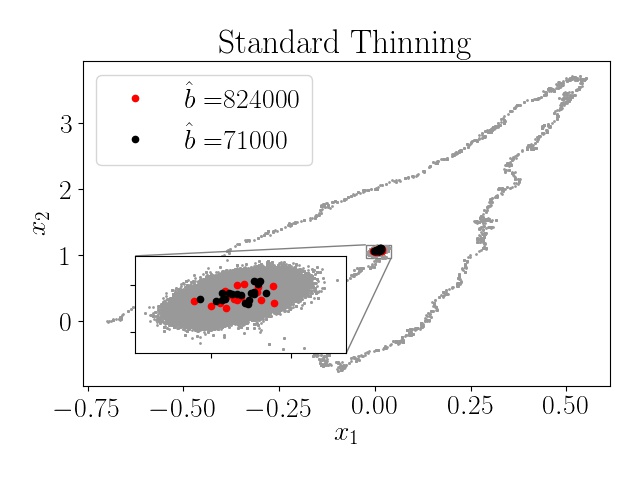}		
		\includegraphics[width = 0.4\textwidth]{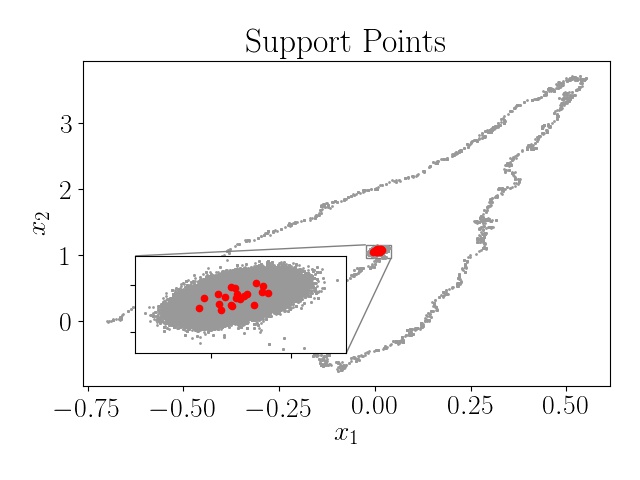}	
		\includegraphics[width = 0.4\textwidth]{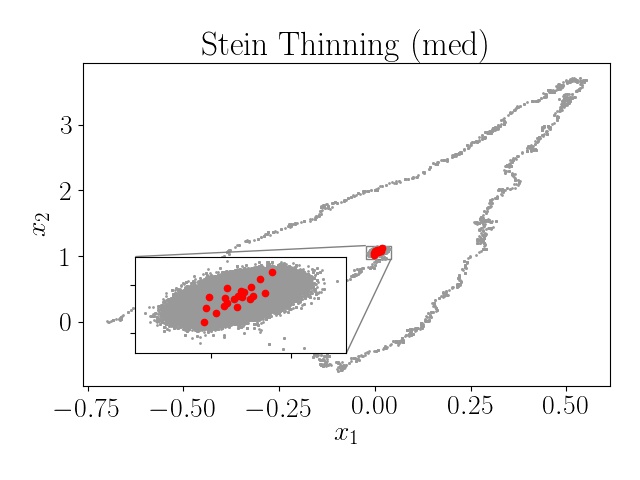}
		\includegraphics[width = 0.4\textwidth]{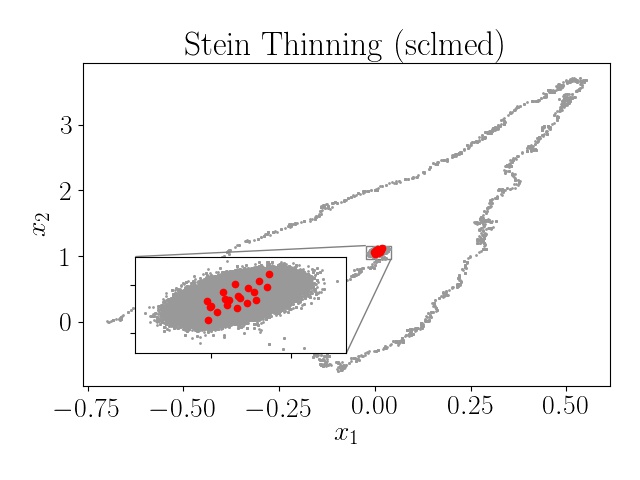}		
		\includegraphics[width = 0.4\textwidth]{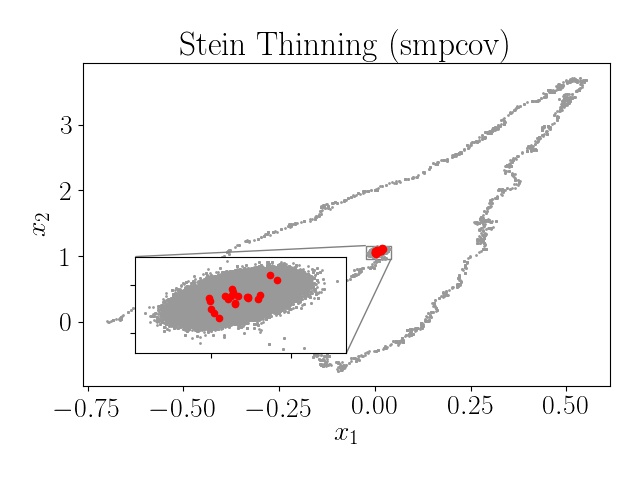}		
		
		\caption{Projections on the first two coordinates of the \texttt{ADA-RW} MCMC output for the Goodwin oscillator (grey dots), together with the first $m=20$ points selected through: traditional burn-in and thinning (the amount of burn in is indicated in the legend);  the \texttt{Support Points} method; \texttt{Stein Thinning}, for each of the settings \texttt{med}, \texttt{sclmed}, \texttt{smpcov}.
		}
		\label{fig:Goodwin_SP_ADA-RW}
	\end{figure}

	\begin{figure}[t!]
		\centering
		\includegraphics[width = 0.4\textwidth]{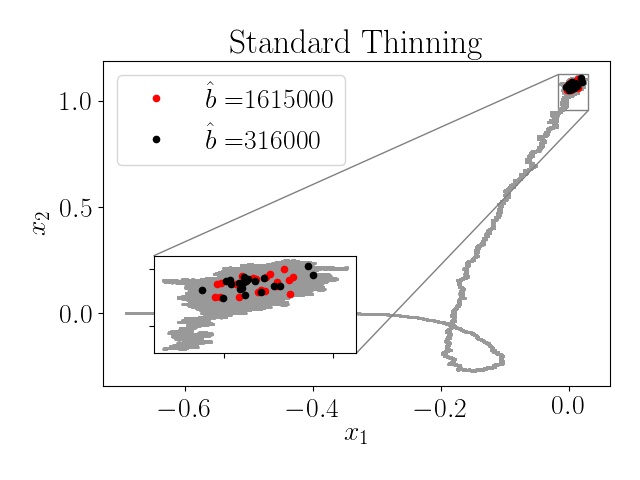}		
		\includegraphics[width = 0.4\textwidth]{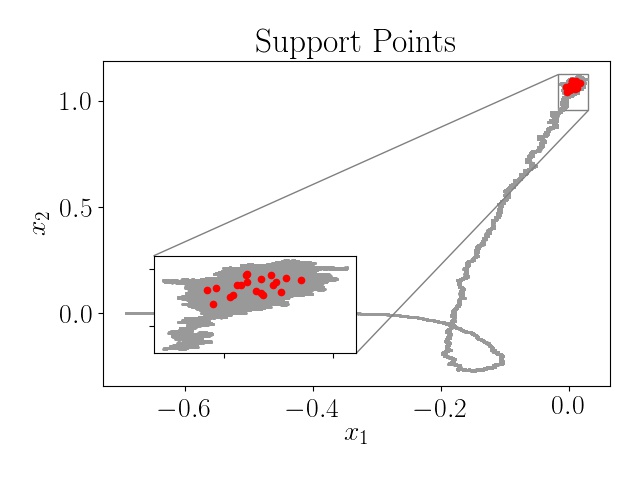}	
		\includegraphics[width = 0.4\textwidth]{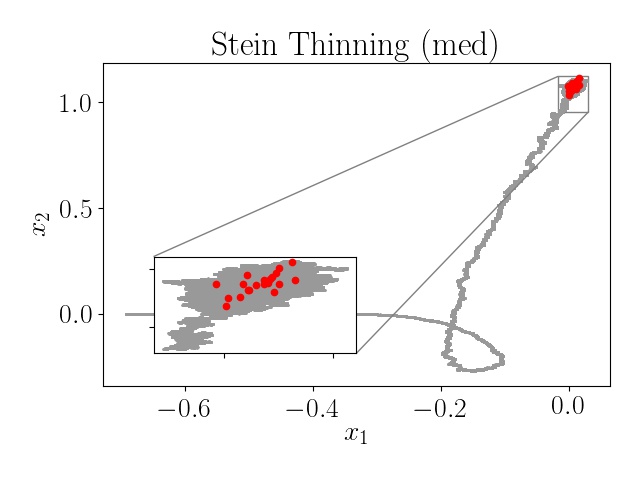}
		\includegraphics[width = 0.4\textwidth]{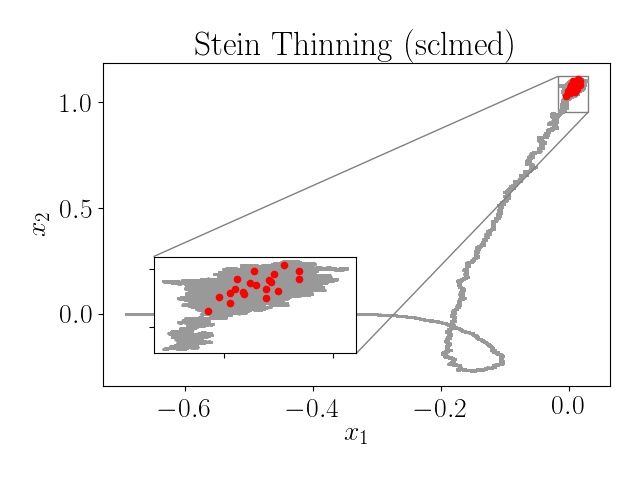}		
		\includegraphics[width = 0.4\textwidth]{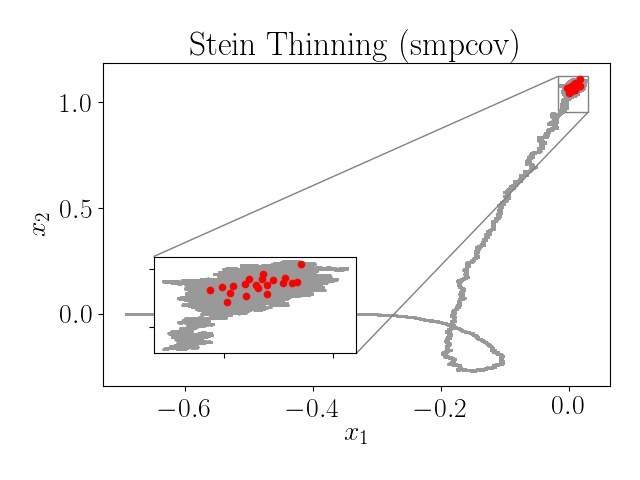}
		
		\caption{Projections on the first two coordinates of the \texttt{MALA} MCMC output for the Goodwin oscillator (grey dots), together with the first $m=20$ points selected through: traditional burn-in and thinning (the amount of burn in is indicated in the legend);  the \texttt{Support Points} method; \texttt{Stein Thinning}, for each of the settings \texttt{med}, \texttt{sclmed}, \texttt{smpcov}.
		}		\label{fig:Goodwin_SP_MALA}
	\end{figure}

	\begin{figure}[t!]
		\centering
		\includegraphics[width = 0.4\textwidth]{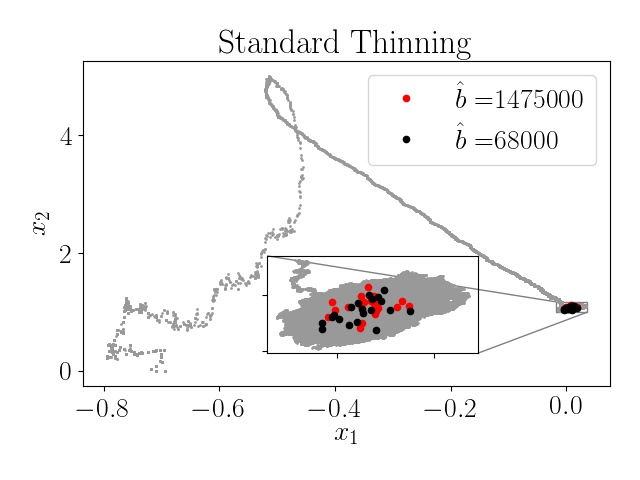}		
		\includegraphics[width = 0.4\textwidth]{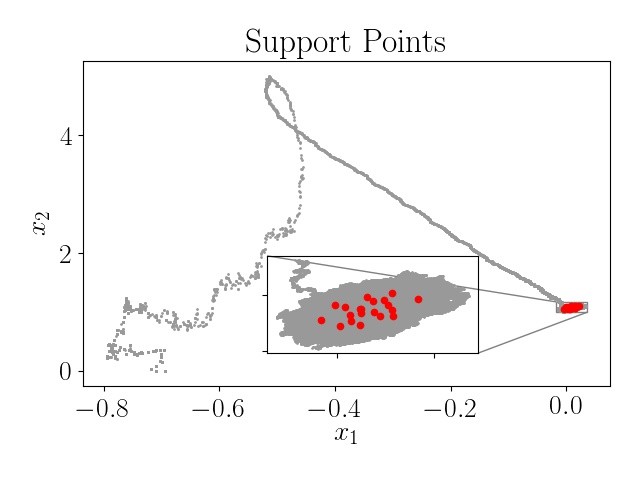}	
		\includegraphics[width = 0.4\textwidth]{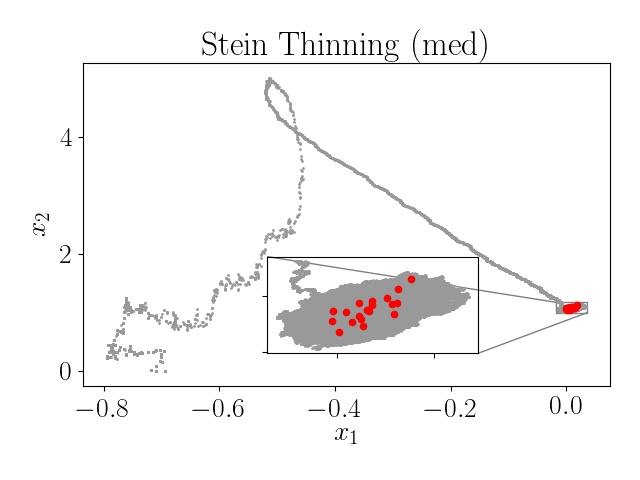}
		\includegraphics[width = 0.4\textwidth]{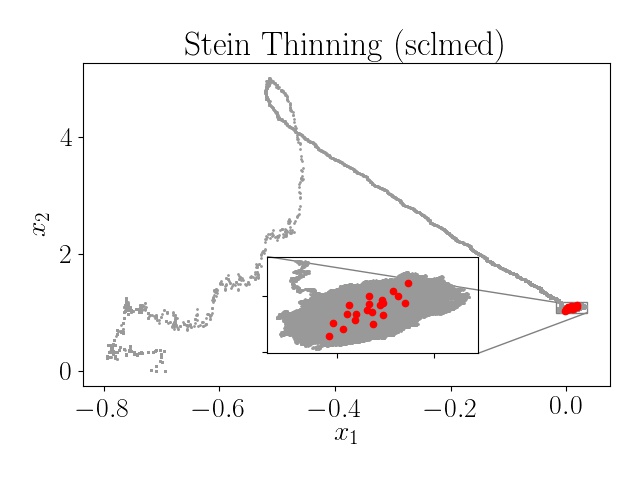}		
		\includegraphics[width = 0.4\textwidth]{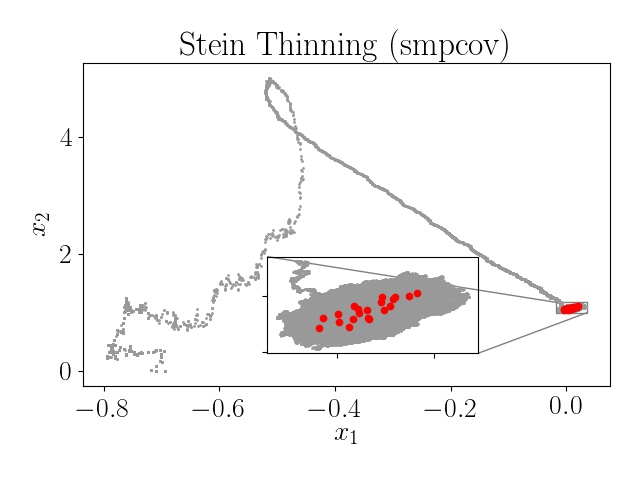}

		\caption{Projections on the first two coordinates of the \texttt{P-MALA} MCMC output for the Goodwin oscillator (grey dots), together with the first $m=20$ points selected through: traditional burn-in and thinning (the amount of burn in is indicated in the legend);  the \texttt{Support Points} method; \texttt{Stein Thinning}, for each of the settings \texttt{med}, \texttt{sclmed}, \texttt{smpcov}.
		}
	\label{fig:Goodwin_SP_MALA_PRECOND}
	\end{figure}

	\begin{figure}[t!]
		\centering
		\includegraphics[width = 1
		\textwidth]{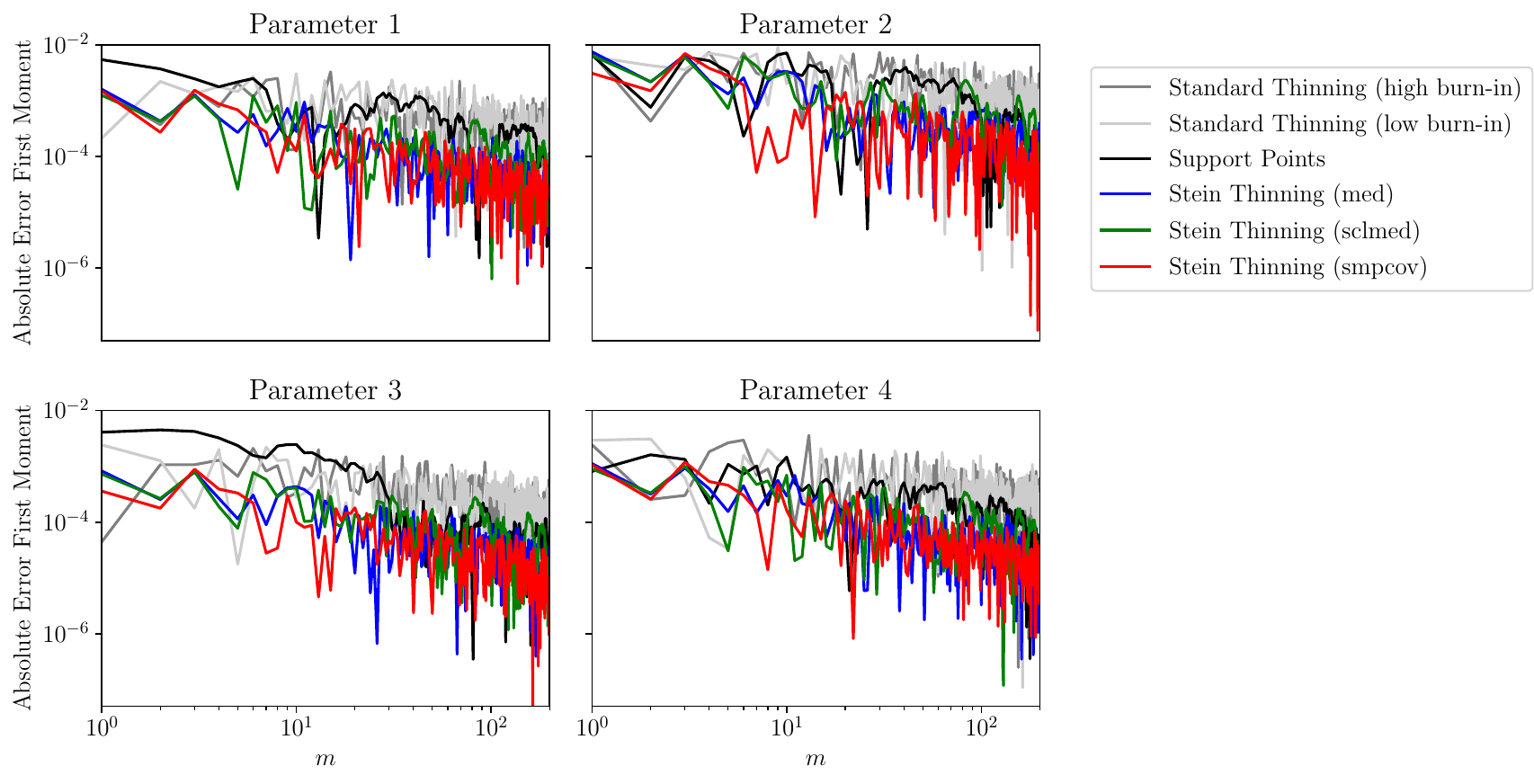}
		\caption{Absolute error of estimates for the posterior mean of each parameter in the Goodwin oscillator, based on output from \texttt{ADA-RW} MCMC.
		}
		\label{fig:Goodwin_moments_ADA_RW}
	\end{figure}

		\begin{figure}[t!]
		\centering
		\includegraphics[width = 1
		\textwidth]{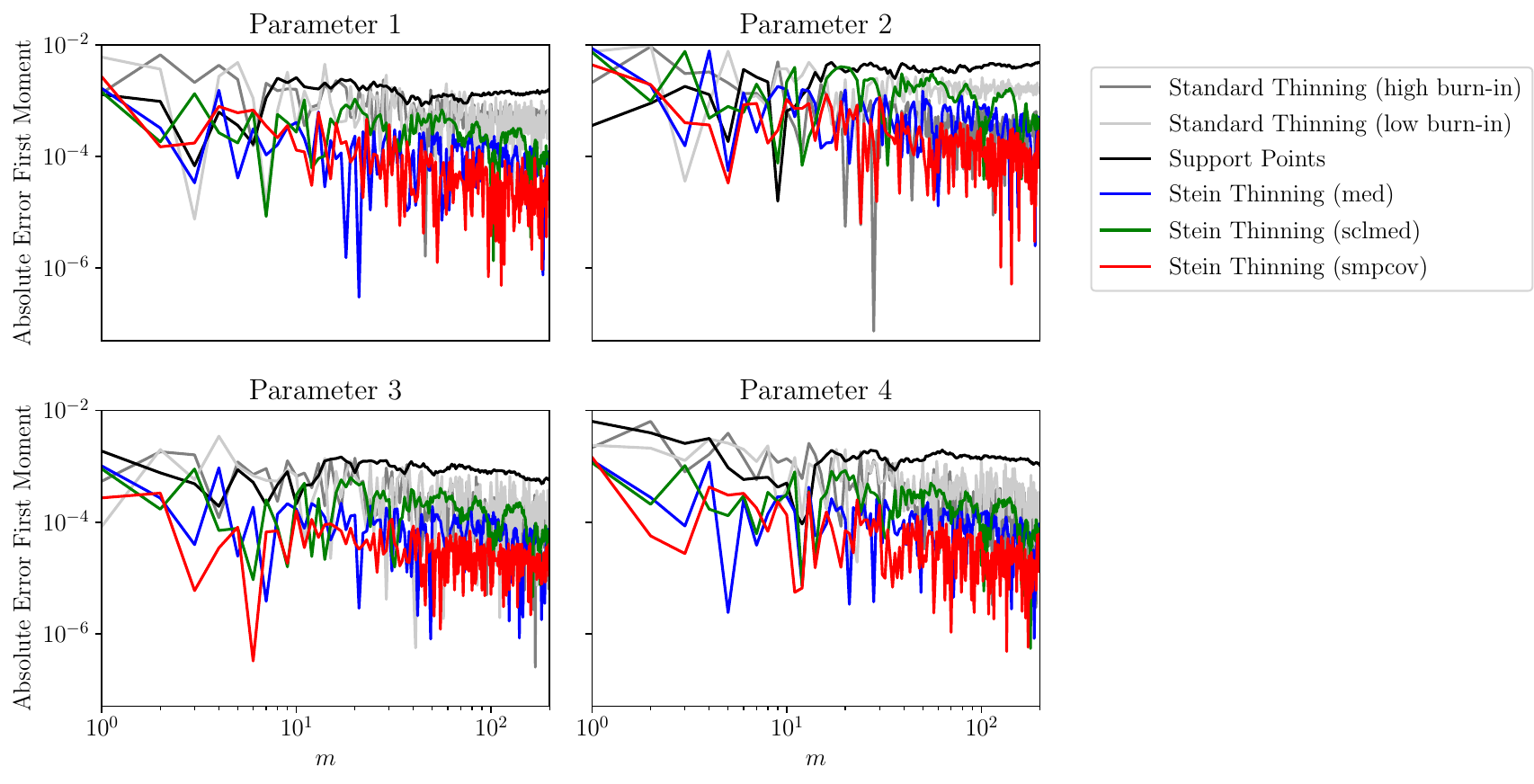}
		\caption{Absolute error of estimates for the posterior mean of each parameter in the Goodwin oscillator, based on output from \texttt{MALA} MCMC.
		}
		\label{fig:Goodwin_moments_MALA}
	\end{figure}

		\begin{figure}[t!]
		\centering
		\includegraphics[width = 1
		\textwidth]{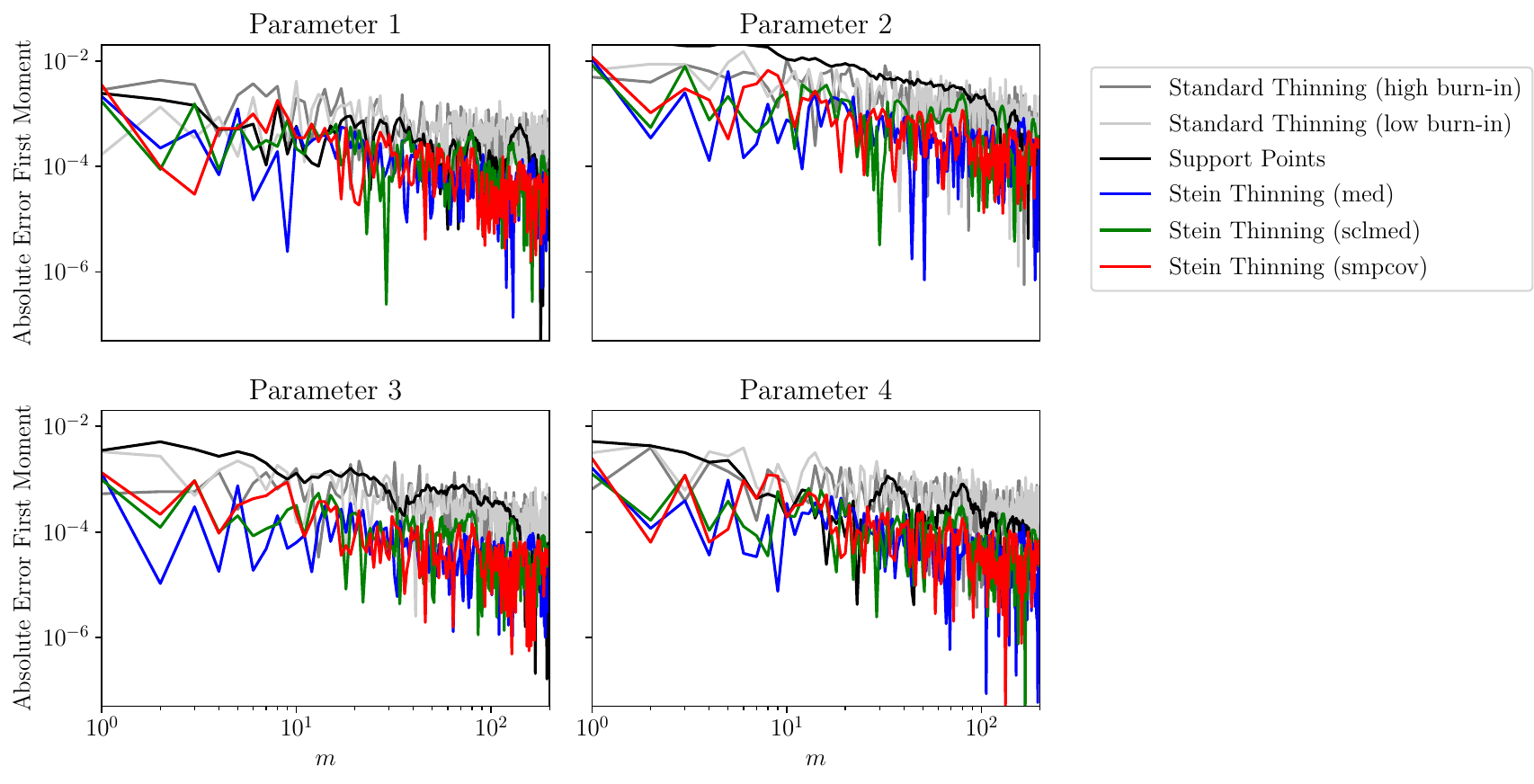}
		\caption{Absolute error of estimates for the posterior mean of each parameter in the Goodwin oscillator, based on output from \texttt{P-MALA} MCMC.
		}
		\label{fig:Goodwin_moments_PRECOND-MALA}
	\end{figure}

	\begin{figure}[t!]
		\centering
		\includegraphics[width =1
		\textwidth]{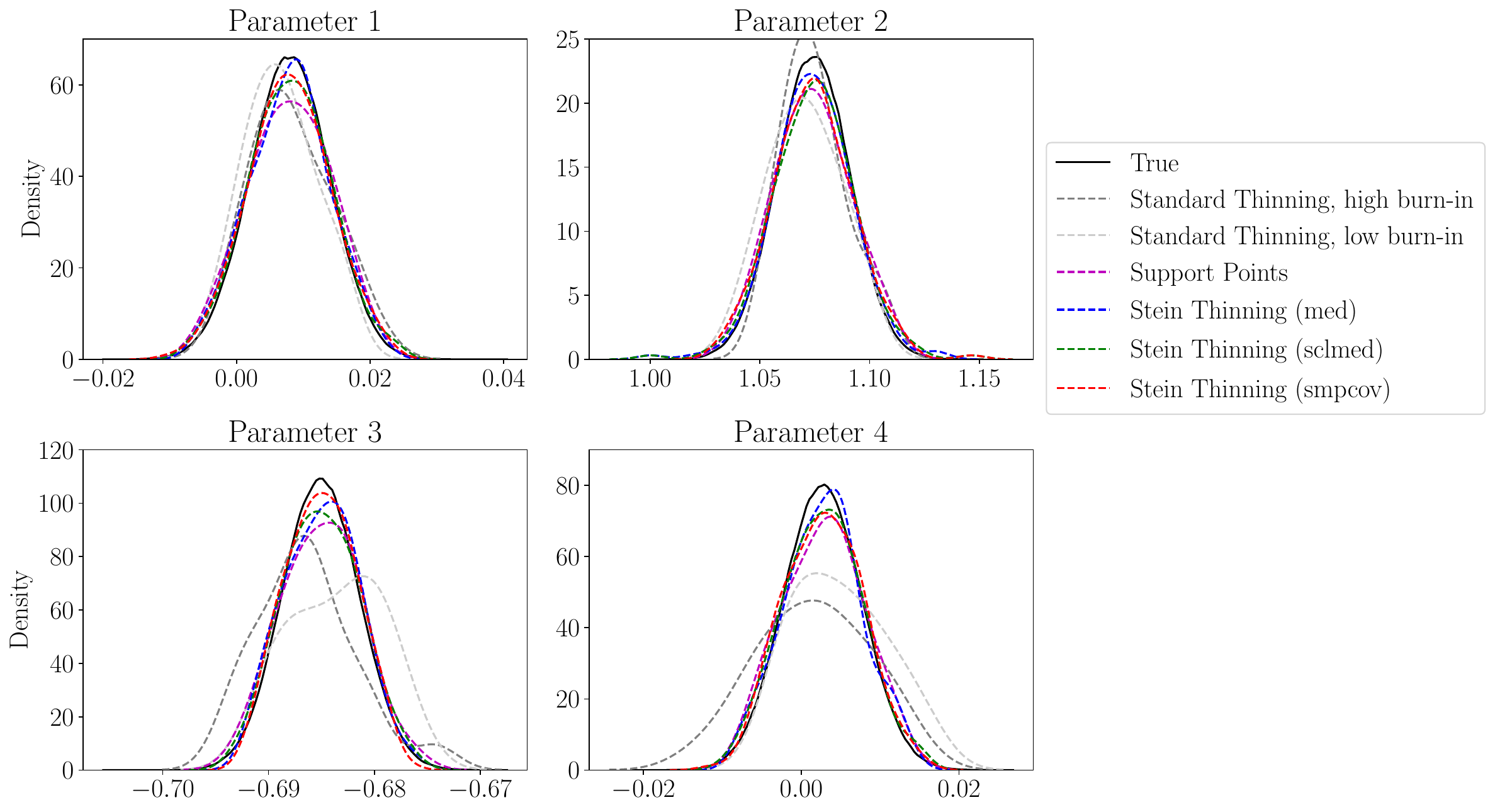}
		\caption{True and estimated marginal densities of the parameters in the Goodwin oscillator, using $m=20$ points selected from \texttt{RW} MCMC output. 
			}
		\label{fig:Goodwin_densities_RW}
	\end{figure}

	\begin{figure}[t!]
		\centering
		\includegraphics[width =1
		\textwidth]{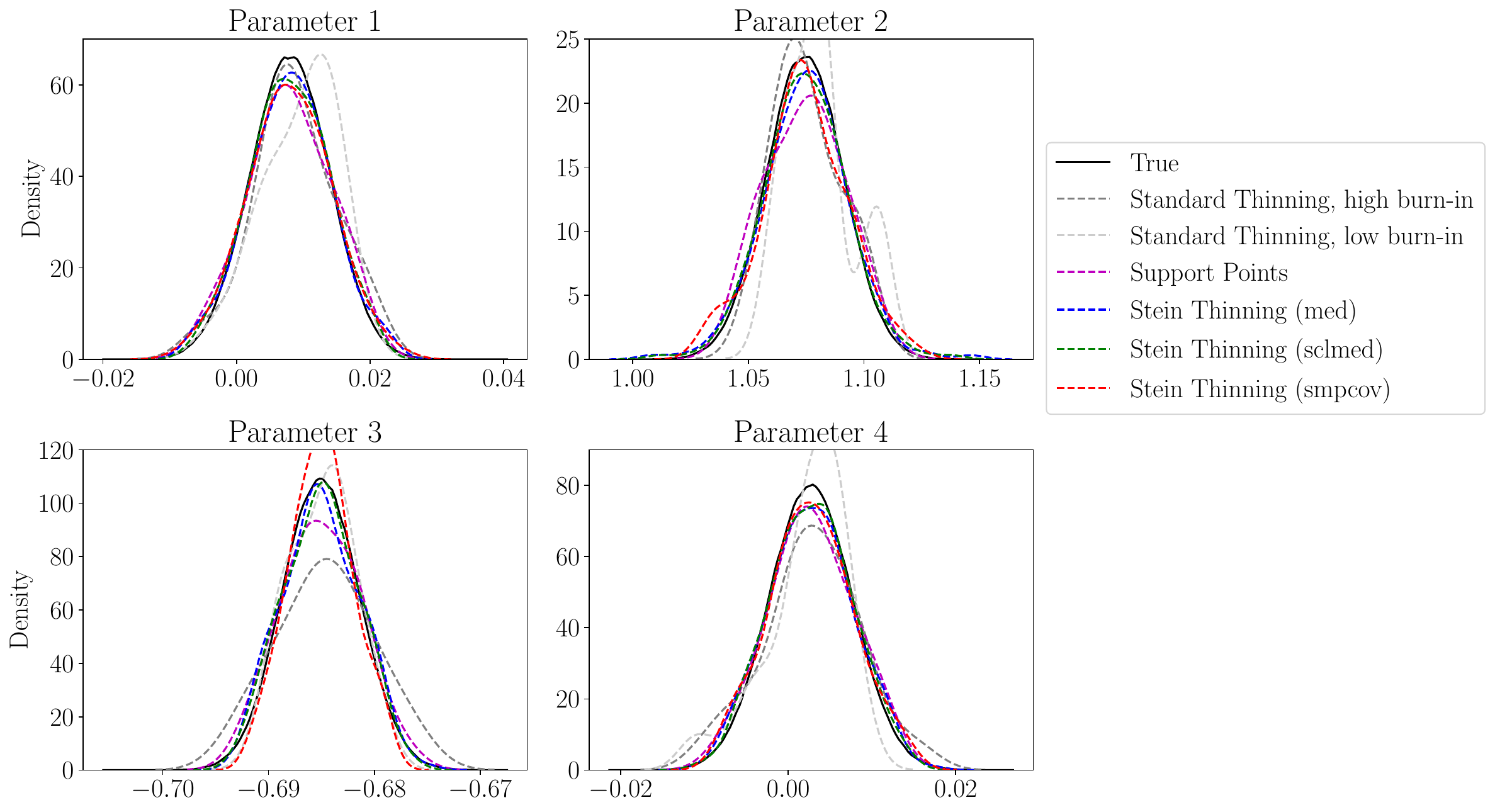}
		
		\caption{True and estimated marginal densities of the parameters in the Goodwin oscillator, using $m=20$ points, selected from \texttt{ADA-RW} MCMC output. 
			}
		\label{fig:Goodwin_densities_ADA-RW}
	\end{figure}

	\begin{figure}[t!]
		\centering
		\includegraphics[width =1
		\textwidth]{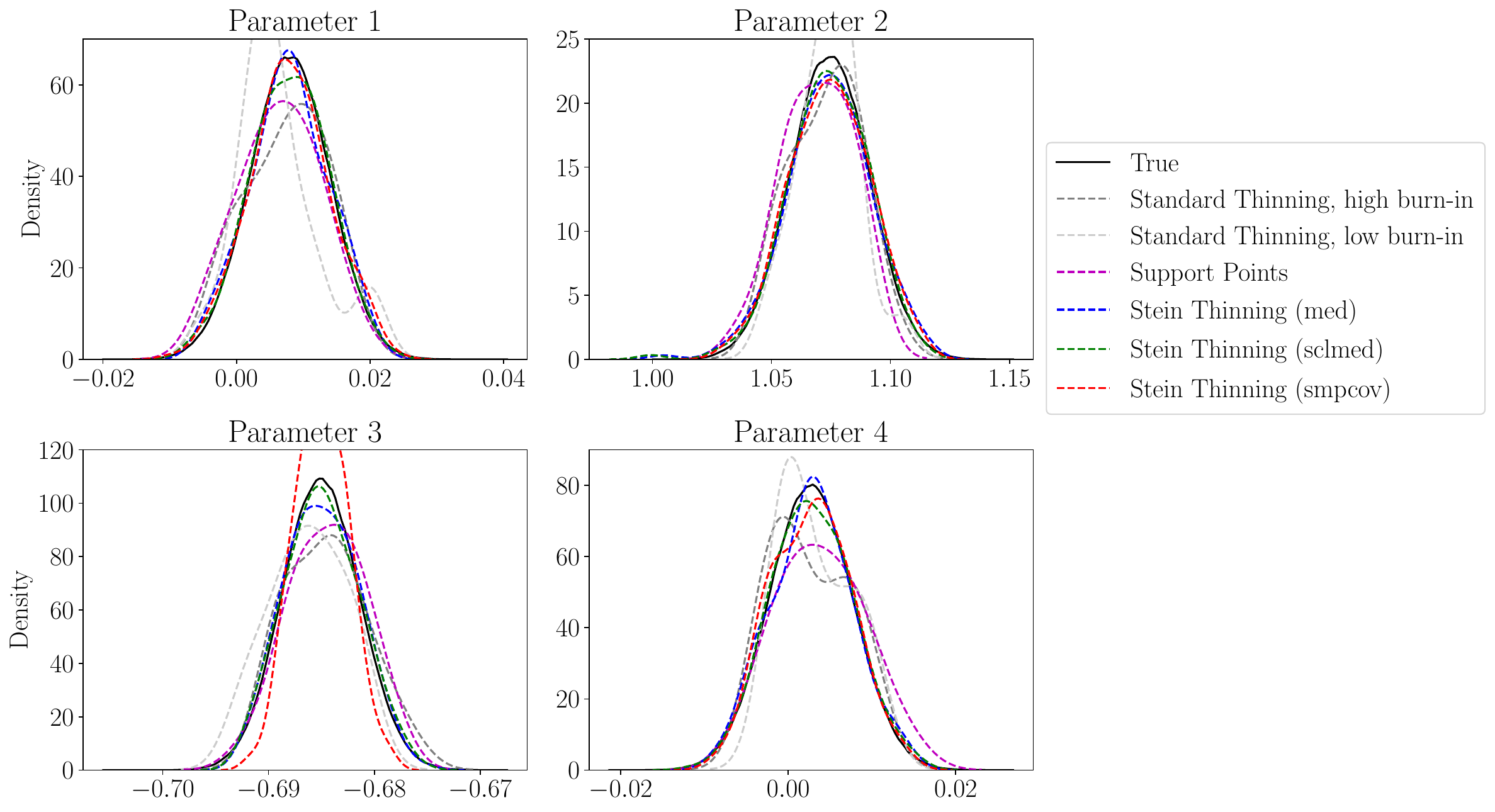}
		
		\caption{True and estimated marginal densities of the parameters in the Goodwin oscillator, using $m=20$ points, selected from \texttt{MALA} MCMC output. 
			}
		\label{fig:Goodwin_densities_MALA}
	\end{figure}

	\begin{figure}[t!]
		\centering
		\includegraphics[width =1
		\textwidth]{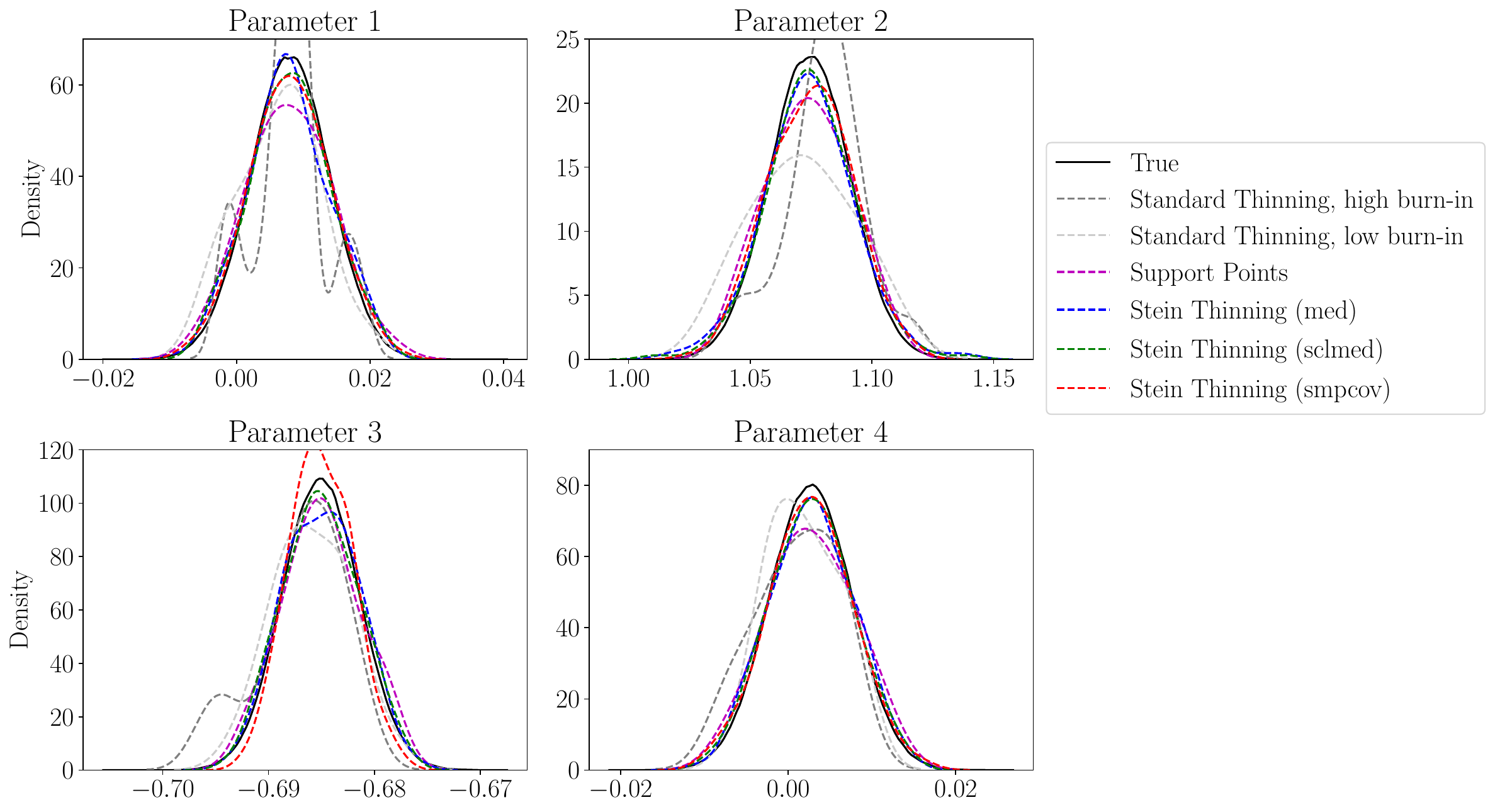}

		\caption{True and estimated marginal densities of the parameters in the Goodwin oscillator, using $m=20$ points, selected from \texttt{P-MALA} MCMC output. 
			}
		\label{fig:Goodwin_densities_PRECOND-MALA}
	\end{figure}

	\FloatBarrier

	\subsection{Lotka--Volterra}\label{app:addtional_Lotka}
	
	The Lotka--Volterra model describes the oscillatory evolution of prey ($u_1$) and predator ($u_2$) species in a closed environment. 
	The prey has an intrinsic mechanism for growth proportional to its abundance, described by a parameter $\theta_1 > 0$, whilst interaction with the predator leads to a decrease in the prey population at a rate described by a parameter $\theta_2 > 0$.
	Conversely, the predator has an intrinsic mechanism for decline proportional to its abundance, described by a parameter $\theta_3 > 0$, whilst interaction with the prey leads to an increase in the predator population at a rate described by a parameter $\theta_4 > 0$.
	The resulting system of ODEs is:
	\begin{align*}
	\frac{\mathrm{d} u_1}{\mathrm{d} t} &=  \theta_1 u_1 - \theta_2 u_1 u_2, 
	\\
	\frac{\mathrm{d} u_2}{\mathrm{d} t} &= \theta_4 u_1 u_2 - \theta_3 u_2. 
	\end{align*}
	To cast this model in the setting of \Cref{sec: methods} we set $x \in \mathbb{R}^4$ to be the vector whose entries are $\log(\theta_1), \dots, \log(\theta_4)$, so that we have a $d = 4$ dimensional parameter for which inference is performed.
	
	The experiment that we report considers synthetic data which are corrupted by Gaussian noise such that the terms $\phi_i$ in \eqref{eq:original_likelihood} have expression \eqref{eq:2d_likelihood}, 
	with $C = \text{diag}(0.2^2, 0.2^2)$.
	The initial condition was $u(0) = (1, 1)$ and the data-generating parameters were
	$x = \log(\theta)$, with $\theta = (0.67, 1.33, 1,1)$.
	The times $t_i$, $i = 1, \dots, 2400$, at which data were obtained were taken to be uniformly spaced on $[0,25]$.
	\Cref{fig:Lotka_data} displays the dataset. 
	A standard Gaussian prior $\pi(x)$ was used. 
	
	Exemplar trace plots for the MCMC methods are presented in \Cref{fig:Lotka_traceplots_z}.
	The over-dispersed initial states used for the $L$ chains are the same as those reported in \Cref{table:initial-points-Goodwin}, while the univariate and multivariate convergence diagnostics, computed  every 1000 iterations, are shown respectively in \Cref{fig:Lotka_conv_diagnostics_z} and \Cref{fig:Lotka_conv_diagnostics_z_multi}.
	The values of the thresholds $\delta(L,\alpha, \epsilon)$ are the same as those reported in \Cref{table:delta-Lotka}. 
	For each MCMC method, the estimated burn-in period is presented in Table \ref{table:burnin-Lotka}. 
	
	The additional results for the Lotka--Volterra model that we present in this appendix are as follows:
	
	\begin{itemize}
	    \item Figures \ref{fig:Lotka_SP_RW} (\texttt{RW}), \ref{fig:Lotka_SP_RW_ADA} (\texttt{ADA-RW}), \ref{fig:Lotka_SP_MALA} (\texttt{MALA}) and \ref{fig:Lotka_SP_MALA_PRECOND} (\texttt{P-MALA}) display point sets of size $m=20$ selected using traditional burn in and thinning methods, \texttt{Support Points} and \texttt{Stein Thinning}, based on MCMC output. 
	\end{itemize}

	\begin{figure}[t!]
		\centering
		\includegraphics[width = 0.45\textwidth]{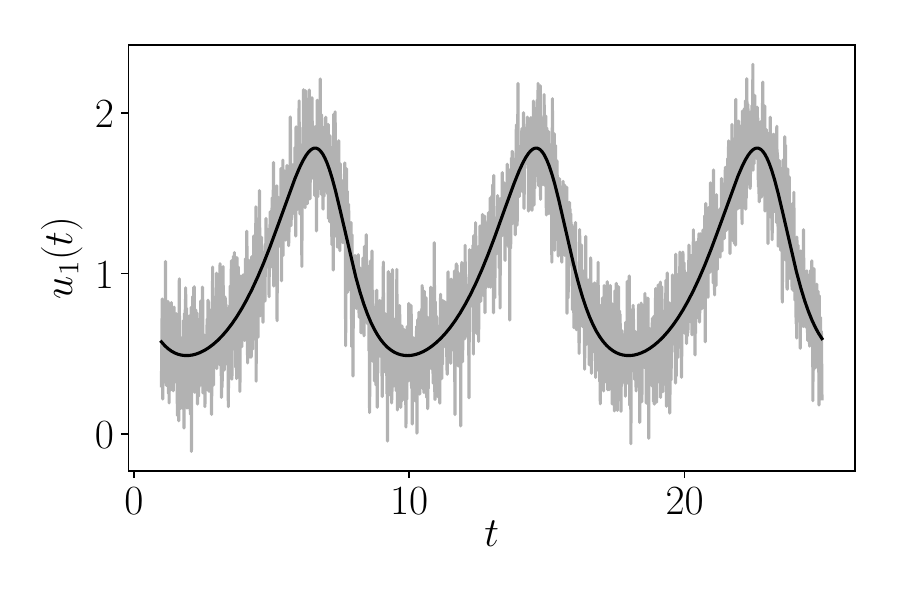}
		\includegraphics[width = 0.45\textwidth]{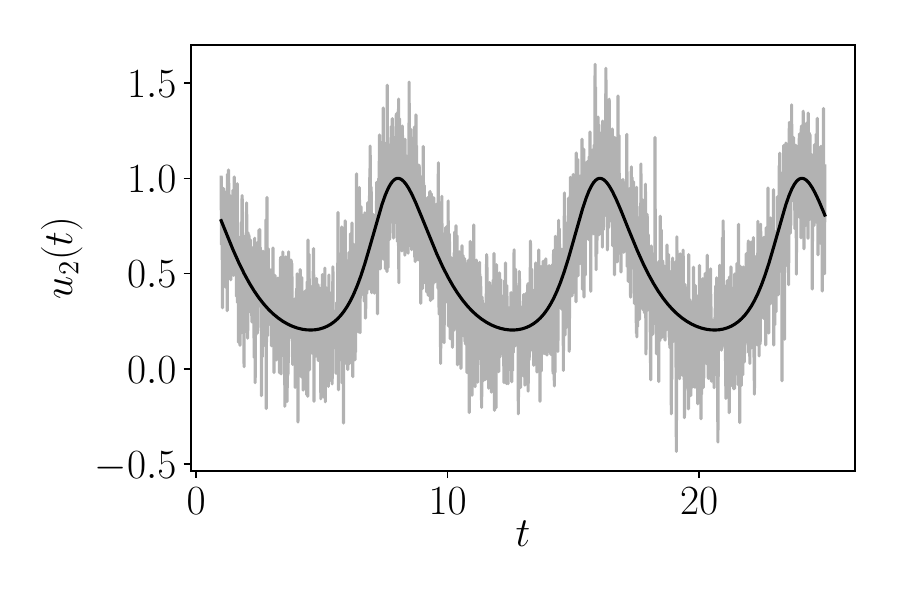}
		\caption{Data (gray) and  ODE solution corresponding to the true data-generating parameters (black) for the Lotka--Volterra model.}
		\label{fig:Lotka_data}
	\end{figure}
	
	\begin{figure}[ht!]
		\centering
		\includegraphics[width = 0.9\textwidth]{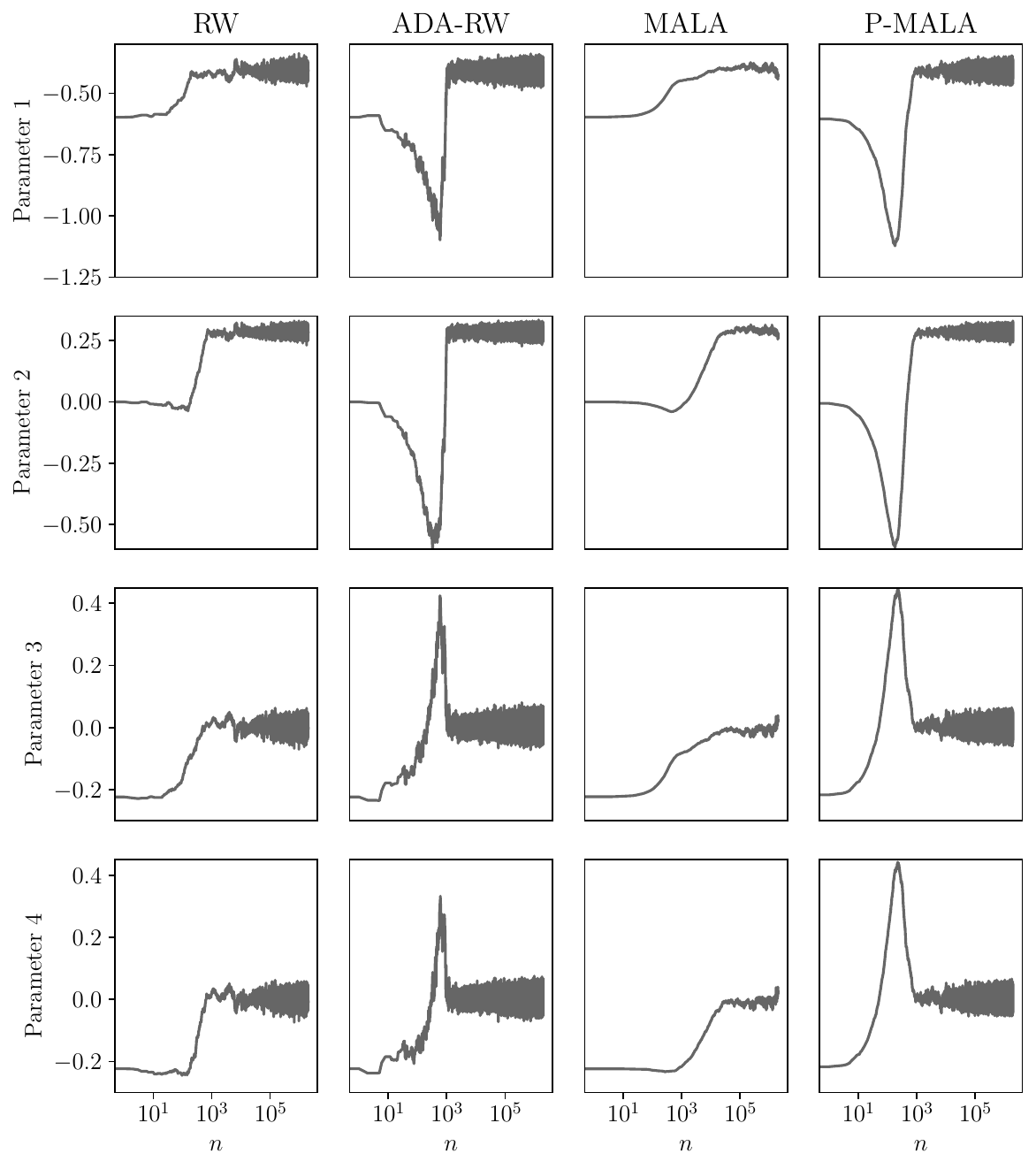}
	
		\caption{Trace plots for the parameters $x_i$ in the Lotka--Volterra model, plotted against the MCMC iteration number. 
			Each row corresponds to one of the four parameters, while each column corresponds to one of the four MCMC methods considered.
		}
		\label{fig:Lotka_traceplots_z}
	\end{figure}	
	
	\begin{table}[t!]
		\centering
		\footnotesize
		\begin{tabular}{|c||c | c|}
			\hline
			 &  \textbf{Initial State for Parameters}  & \textbf{Initial State for Parameters}  \\
			\textbf{Chain Number} & (\texttt{RW}, \texttt{MALA}, \texttt{P-MALA}) & (\texttt{ADA-RW})   \\
			\hline			
			\hline
			1 & (0.55, 1, 0.8, 0.8)  &   (0.55,	1,	0.8, 0.8) \\
			2 & (1.5, 1, 0.8, 0.8)    &  (0.55,	1,	0.8, 1.3)   \\
			3 & (1.3, 1.33, 0.5, 0.8) &  (1.3, 1.33, 0.5, 0.8)  \\
			4 & (0.55, 3, 3, 0.8)    &    (0.55,	1,	1.5,	1.5)    \\
			5 & (0.55, 1, 1.5, 1.5)   &    (0.55,	1.3,	1,	0.8)   \\
			\hline 
			
		\end{tabular}
		\caption{Initial states, $\theta = \exp(x)$, over-dispersed with respect to the posterior, for the $L = 5$ independent Markov chains used in the Lotka--Volterra model.  The parameters used to generate the data were $\theta = (0.67,	1.33,	1,	1)$. }
		\label{table:initial-points-Lotka}
	\end{table}

	\begin{figure}[t!]
		\centering
		\includegraphics[width = 1\textwidth]{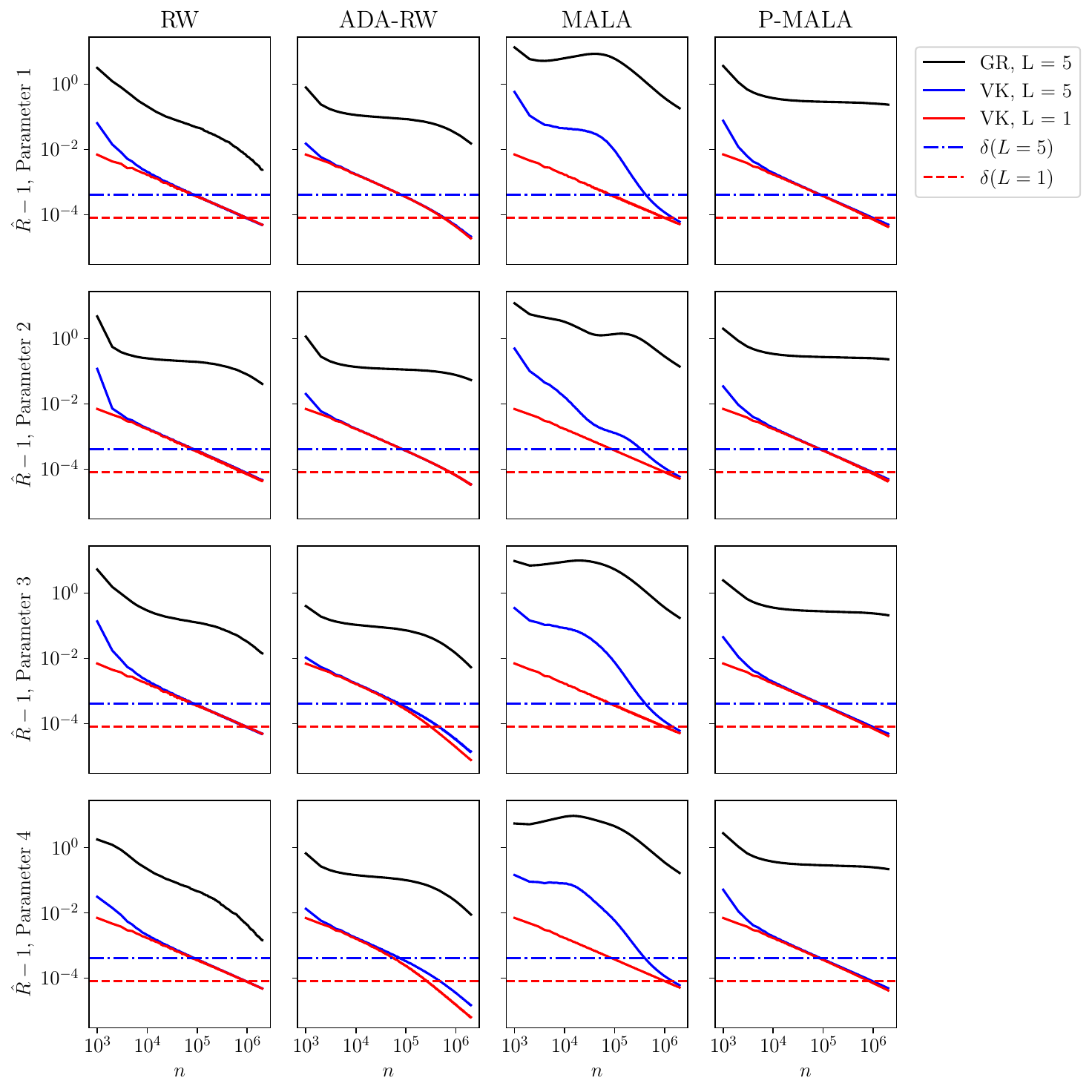}
		\caption{Univariate convergence diagnostics, for the Lotka--Volterra model, plotted against the MCMC iteration number. 
			The black line represents the GR diagnostic (based on $L=5$ chains), while the blue and red lines represent the VK diagnostic (based on $L = 5$ and $L=1$ chains, respectively). 
			The dash-dotted ($L=5$) and dashed ($L=1$) horizontal lines correspond to the critical values $\delta(L, \alpha, \epsilon)$, used to determine the burn-in period; see Table~\ref{table:delta-Lotka}.
			}
		\label{fig:Lotka_conv_diagnostics_z}
	\end{figure}

	\begin{figure}[t!]
		\centering
		\includegraphics[width = 1\textwidth]{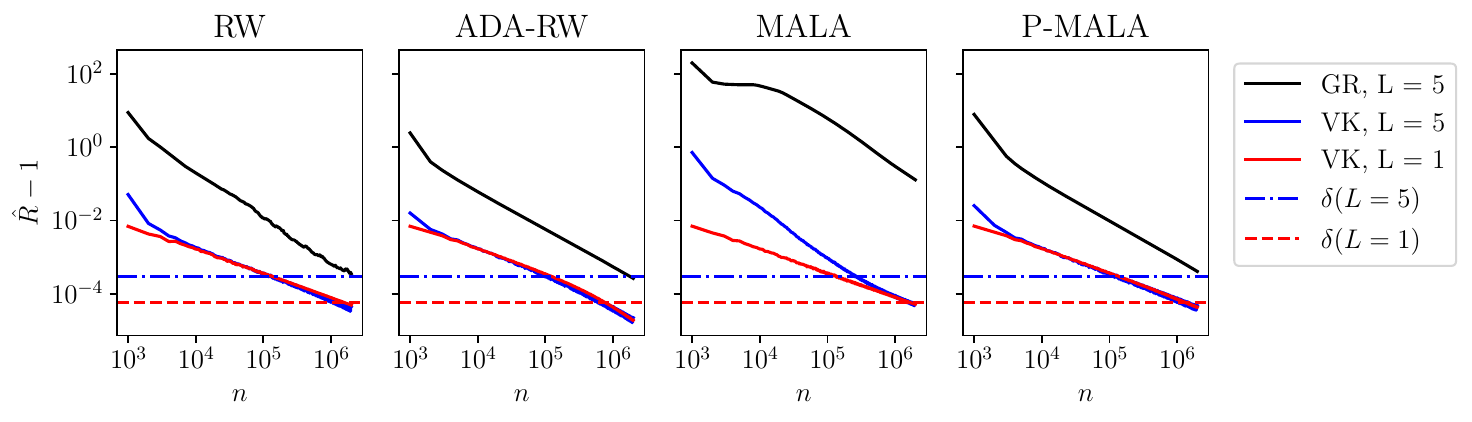}
		\caption{Multivariate convergence diagnostics for Lotka--Volterra, plotted against the MCMC iteration number. 
			The black line is the GR diagnostic (based on $L=5$ chains), while the blue and red lines are the VK diagnostic (based on $L = 5$ and $L=1$ chains, respectively). 
			The dotted ($L=5$) and dashed ($L=1$) horizontal lines correspond to the critical values $\delta(L, \alpha, \epsilon)$, used to determine the burn-in period; see Table~\ref{table:delta-Lotka}.}
		\label{fig:Lotka_conv_diagnostics_z_multi}
	\end{figure}

		\begin{table}[t!]
		\centering
		\footnotesize
		\begin{tabular}{|c||c||c | c | c|}
			\hline
			\textbf{MCMC Diagnostics} & \textbf{Sampler} &  $\hat{b}^{\text{GR}, 5}$ & $\hat{b}^{\text{VK}, 5}$ & $\hat{b}^{\text{VK}, 1}$   \\
			\hline 
			\hline 
				\multirow{4}{*}{\textbf{Univariate}} & \texttt{RW} &   $>n$ &  88,000& 954,000\\
			\cline{2-5}
			& \texttt{ADA-RW} &$>n$&  84,000& 764,000\\	
		    \cline{2-5}
			& \texttt{MALA} 	  &  $>n$&  424,000&  995,000\\
			\cline{2-5}
			& \texttt{P-MALA} 	&    $>n$&  90,000& 820,000\\
			\hline \hline
			\multirow{4}{*}{\textbf{Multivariate}} & \texttt{RW} & $>n$ & 119,00 & $1,512,000$ \\
			\cline{2-5}	
			& \texttt{ADA-RW} &   1,797,000  & 99,000& $743,000$\\
			\cline{2-5} 
			& \texttt{MALA} 	 &   $>n$ & 259,000  &  $1,573,000$\\
			\cline{2-5} 
			& \texttt{P-MALA} &  $>n$ & 114,000 & $1,251,000$ \\
			\hline
		\end{tabular}
		\caption{Estimated burn-in period for the Lotka--Volterra model, using the GR diagnostic based on $L$ chains, $\hat{b}^{\text{GR},L}$ ($L=5$), and the VK diagnostic based on $L$ chains,  $\hat{b}^{\text{VK},L}$, ($L = 1,5$). 		In each case both univariate and multivariate convergence diagnostics are presented; in the univariate case we report the largest value obtained when looking at each of the $d$ parameters individually to estimate the burn-in period. 			The symbol ``$>n$'' indicates the case in which a diagnostic did not go below the $1+\delta$ threshold. 	
		}
		\label{table:burnin-Lotka}
	\end{table}

		\begin{figure}[ht!]
		\centering
		\includegraphics[width = 0.4\textwidth]{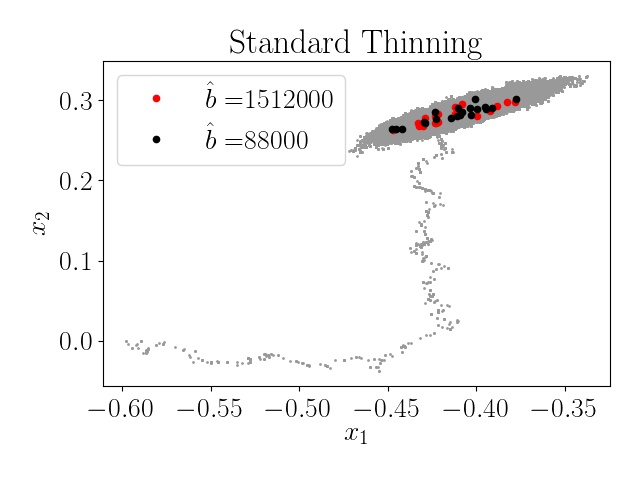}		
		\includegraphics[width = 0.4\textwidth]{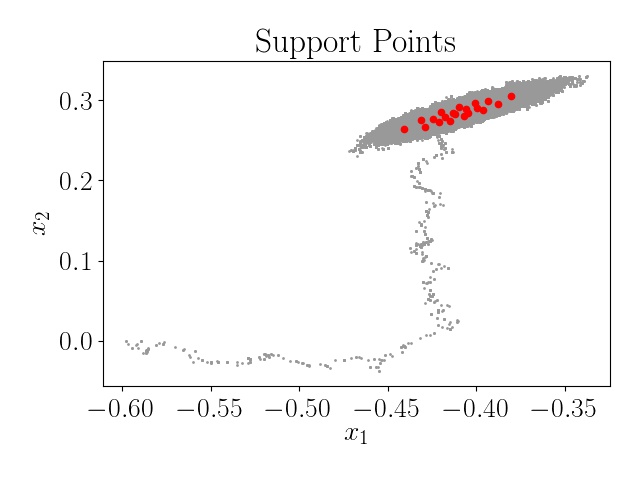}	
		
		\includegraphics[width = 0.4\textwidth]{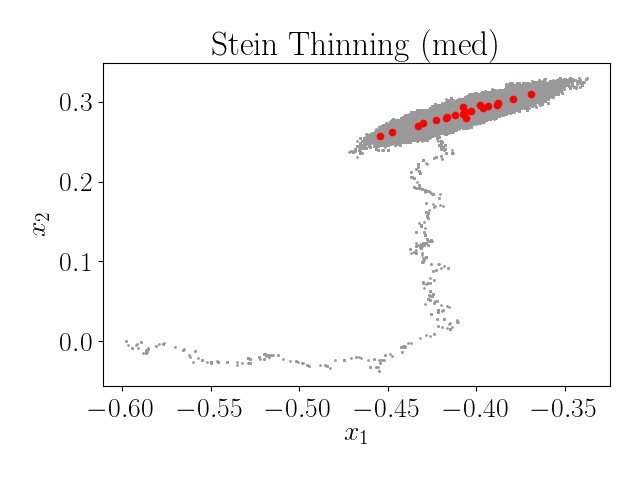}
		\includegraphics[width = 0.4\textwidth]{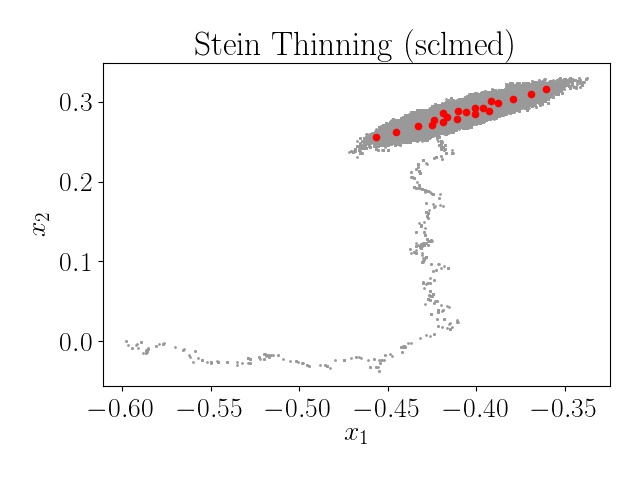}		
		\includegraphics[width = 0.4\textwidth]{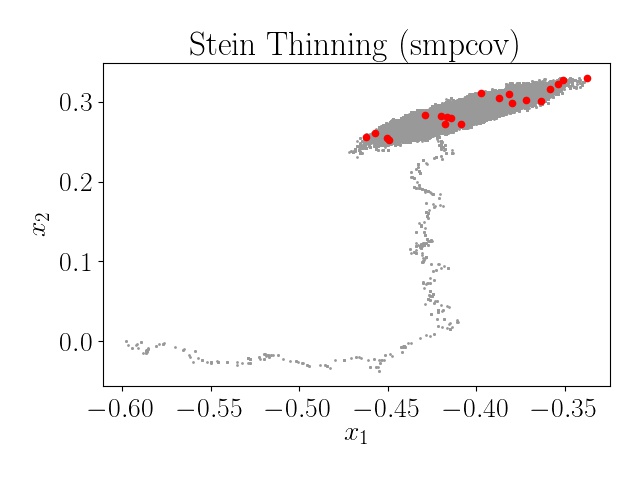}

		\caption{Projections on the first two coordinates of the \texttt{RW} MCMC output for the Lotka--Volterra model (grey dots), together with the first $m=4$ points selected through: traditional burn-in and thinning (the amount of burn in is indicated in the legend);  the \texttt{Support Points} method; \texttt{Stein Thinning}, for each of the settings \texttt{med}, \texttt{sclmed}, \texttt{smpcov}.
		}
		\label{fig:Lotka_SP_RW}
	\end{figure}	
	
	\begin{figure}[t!]
		\centering
		\includegraphics[width = 0.4\textwidth]{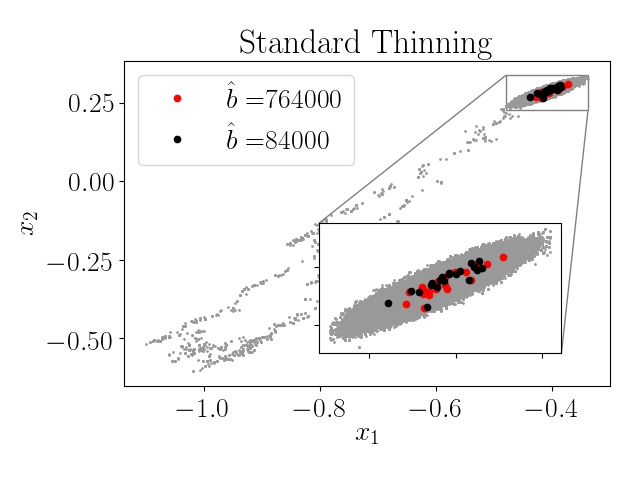}		
		\includegraphics[width = 0.4\textwidth]{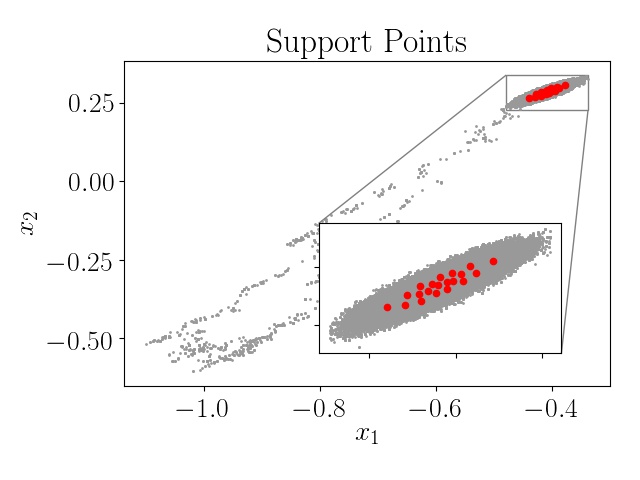}	
		\includegraphics[width = 0.4\textwidth]{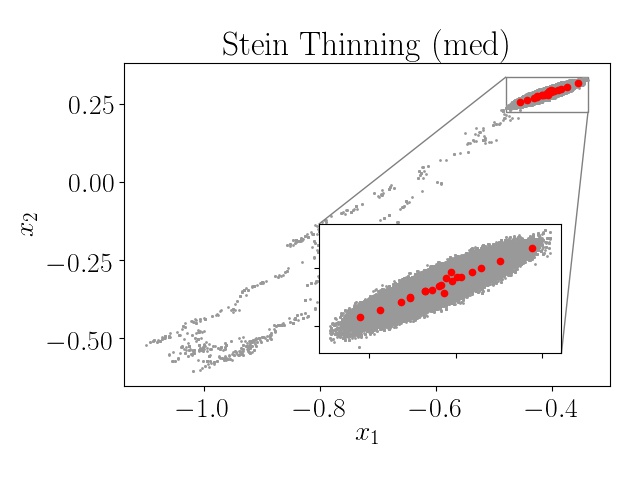}
		\includegraphics[width = 0.4\textwidth]{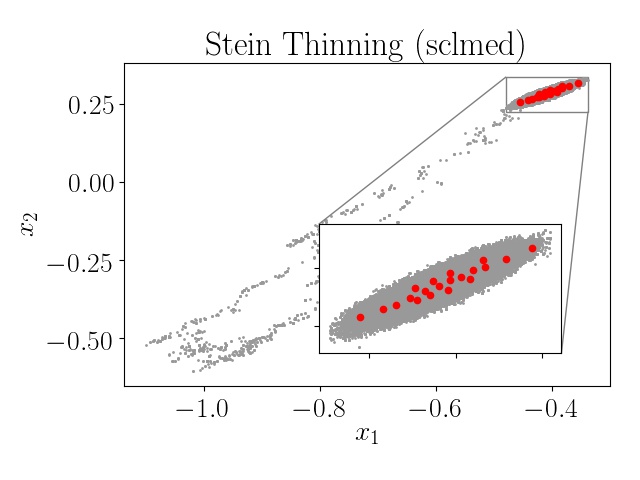}		
		\includegraphics[width = 0.4\textwidth]{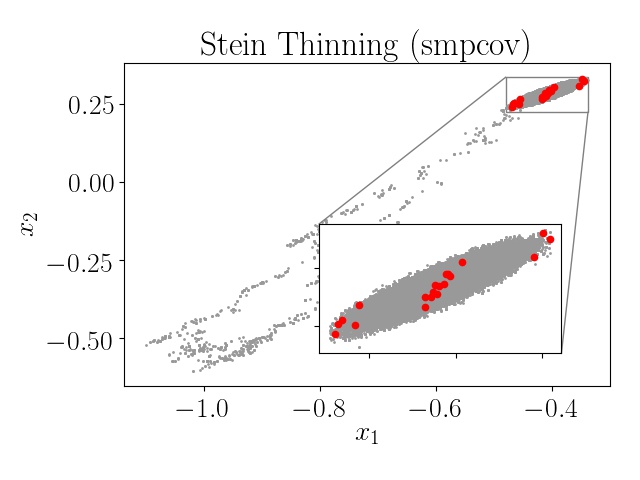}

		\caption{Projections on the first two coordinates of the \texttt{ADA-RW} MCMC output for the Lotka--Volterra model (grey dots), together with the first $m=4$ points selected through: traditional burn-in and thinning (the amount of burn in is indicated in the legend);  the \texttt{Support Points} method; \texttt{Stein Thinning}, for each of the settings \texttt{med}, \texttt{sclmed}, \texttt{smpcov}.
		}
		\label{fig:Lotka_SP_RW_ADA}
	\end{figure}

	\begin{figure}[t!]
		\centering
		\includegraphics[width = 0.4\textwidth]{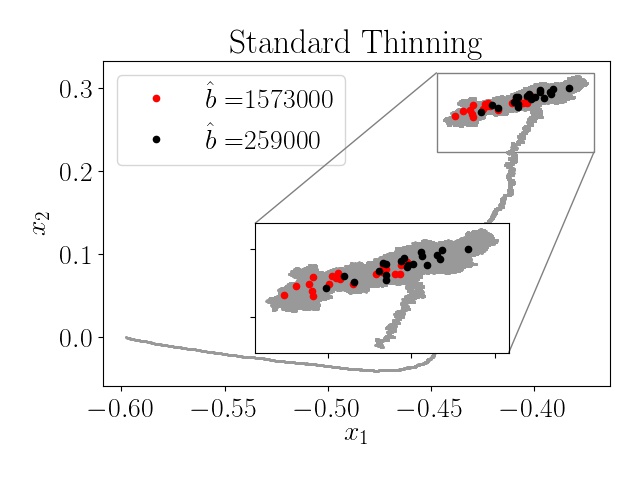}		
		\includegraphics[width = 0.4\textwidth]{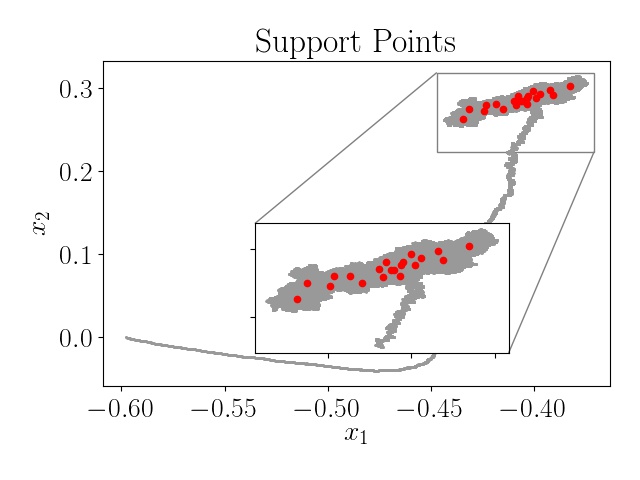}	
		\includegraphics[width = 0.4\textwidth]{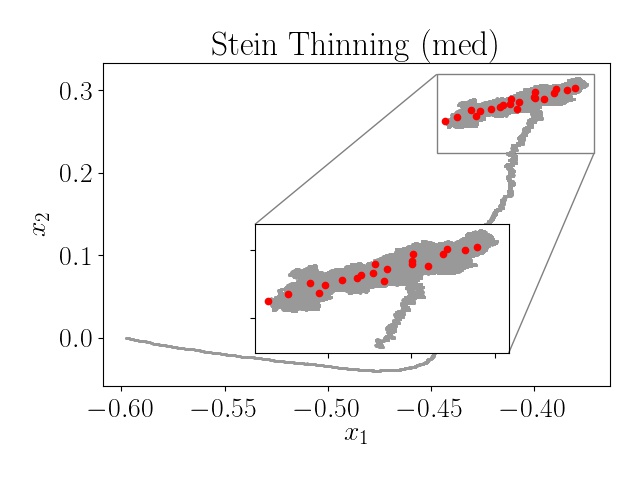}
		\includegraphics[width = 0.4\textwidth]{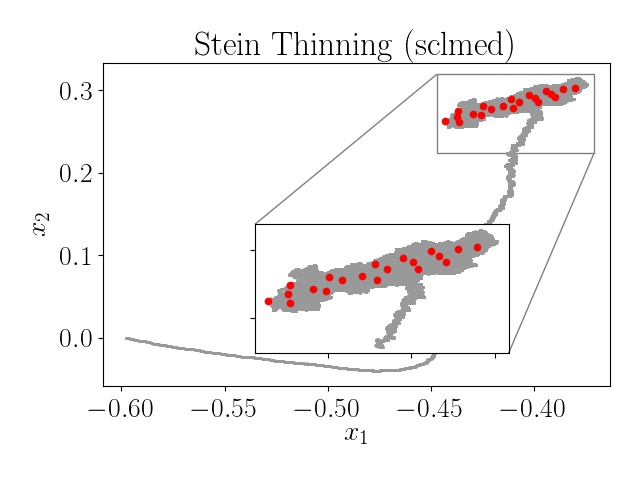}		
		\includegraphics[width = 0.4\textwidth]{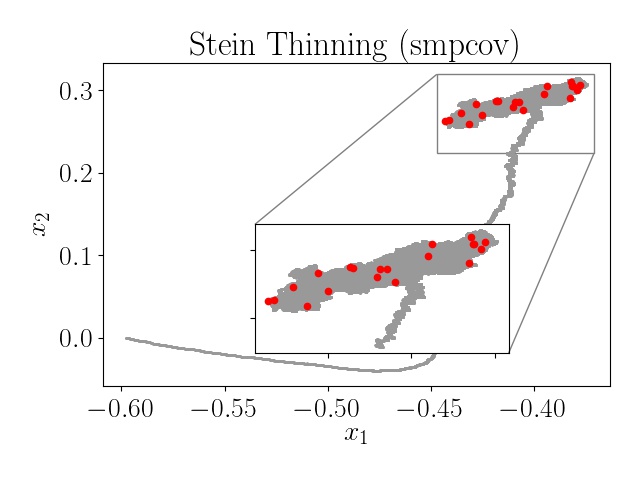}

		\caption{Projections on the first two coordinates of the \texttt{MALA} MCMC output for the Lotka--Volterra model (grey dots), together with the first $m=4$ points selected through: traditional burn-in and thinning (the amount of burn in is indicated in the legend);  the \texttt{Support Points} method; \texttt{Stein Thinning}, for each of the settings \texttt{med}, \texttt{sclmed}, \texttt{smpcov}.
		}
	\label{fig:Lotka_SP_MALA}
	\end{figure}

	\begin{figure}[t!]
		\centering
		\includegraphics[width = 0.4\textwidth]{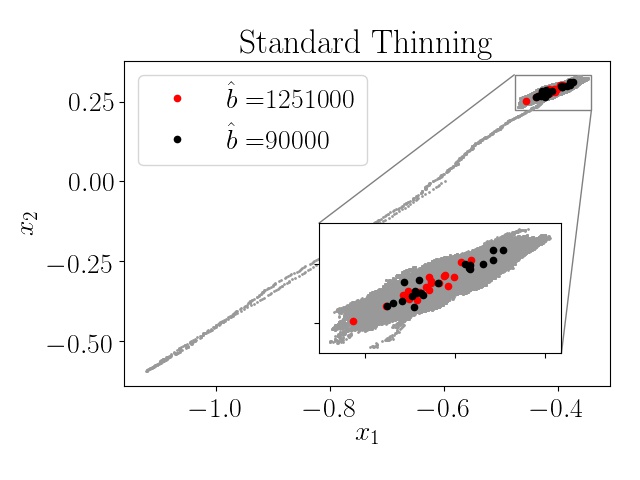}		
		\includegraphics[width = 0.4\textwidth]{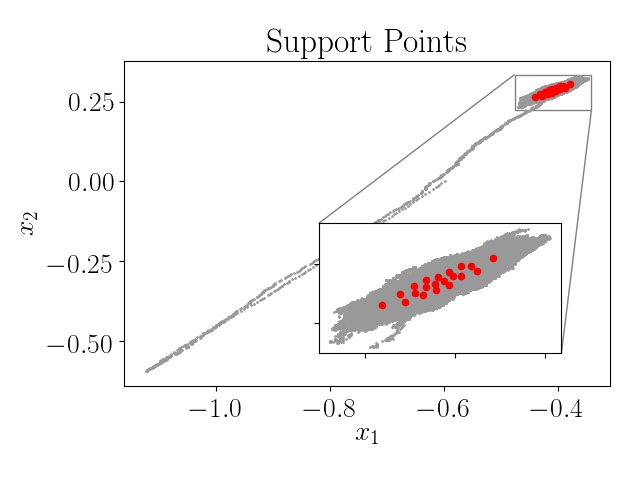}	

		\includegraphics[width = 0.4\textwidth]{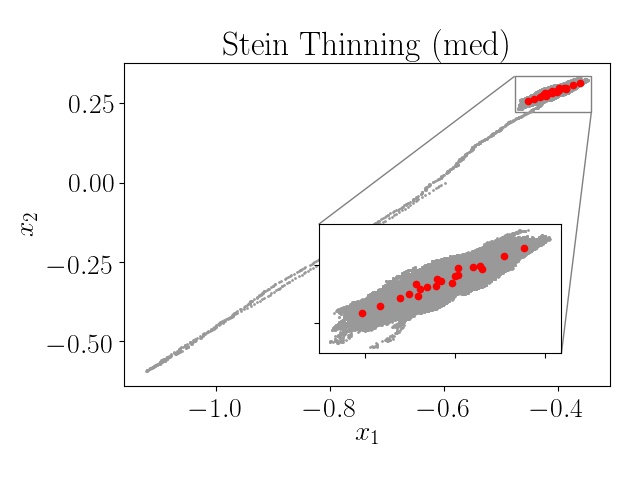}
		\includegraphics[width = 0.4\textwidth]{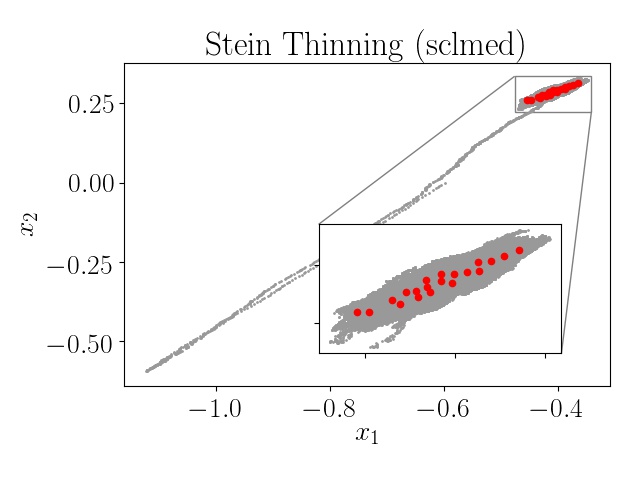}		
		\includegraphics[width = 0.4\textwidth]{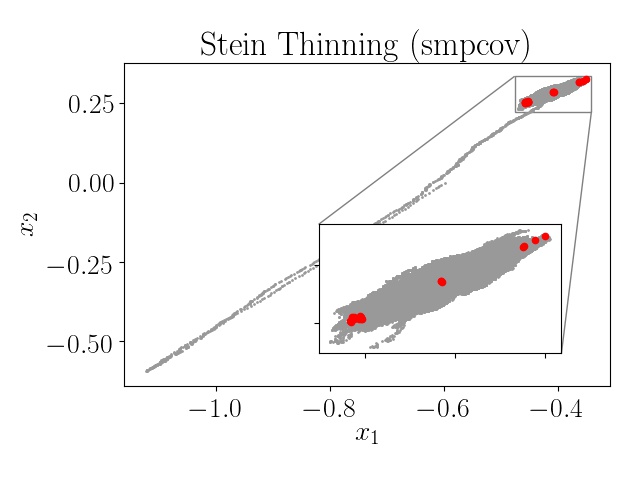}

		\caption{Projections on the first two coordinates of the \texttt{P-MALA} MCMC output for the Lotka--Volterra model (grey dots), together with the first $m=4$ points selected through: traditional burn-in and thinning (the amount of burn in is indicated in the legend);  the \texttt{Support Points} method; \texttt{Stein Thinning}, for each of the settings \texttt{med}, \texttt{sclmed}, \texttt{smpcov}.
		}
	\label{fig:Lotka_SP_MALA_PRECOND}
	\end{figure}

	\FloatBarrier

	\subsection{Calcium Signalling Model}\label{Appendix:Hinch}
	
	This appendix contains a detailed biochemical description of the calcium singalling model studied in \Cref{subsec: cardiac} of the main text, together with the experimental dataset that we collected.
	
	The \cite{Hinch2004} single cell model simulates the calcium transient evoked by membrane depolarisation in a cardiac cell. The model has a mathematical representation of the extracellular space and the intracellular compartment consisting of the sarcoplasmic reticulum (SR), dyadic space and cytosol. The major sarcolemmal calcium pathways are included: the L-type Ca channel (LCC), the plasmalemmal membrane calcium ATPase (PMCA) and the sodium-calcium exchanger (NCX). Inside the cell, the model has mathematical representations for calcium release from the SR to dyadic space through ryanodine receptors (RyR) and re-sequestration of calcium from the dyadic space into the SR by the SR ATPase (SERCA). Calcium buffering is also featured for the cytosol. A schematic representation of the cell model is given in Figure~\ref{fig:cell_scheme}.
	
	Membrane depolarisation is triggered by an electrical event. This causes calcium to enter through LCCs into the dyadic space, producing a local rise in Ca concentration, sufficient to activate RyRs.  This process engages a feedback, whereby Ca release from the SR causes more RyR opening events. As the released Ca diffuses into the cytosol, most of it becomes buffered, but some ions remain free and underpin the Ca transient. Recovery following Ca release is driven by SERCA, which re-sequesters Ca into the SR, and NCX and PMCA which extrude calcium across the sarcolemma. This returns the cell to is initial conditions, ready for the next electrical stimulation. 
	
	The Hinch model describes the nonlinear, time-dependent interaction of the four Ca handling transporters (LCC, PMCA, RyR and SERCA) and lumped buffering by a system of 7 ODEs whose parameter is $d=38$ dimensional. The first three differential equations provide a simplified four-state model describing the interaction between LCC and RyR within the dyadic space; here, only three states are simulated due to a conservation of mass constraint. The remaining four differential equations describe: calcium concentration in the sarcoplasmic reticulum and the cytosol, the calcium bound to cytosolic buffers and calcium current across the cellular membrane. Of these state variables, only the concentration of free calcium in the cytosol and the transmembrane current can be experimentally observed. 
	
	To provide a rich dataset for characterising calcium dynamics in a single cardiac myocyte, we applied three experimental protocols in sequence on a single myocyte.  During these protocols, we controlled membrane potential and measured membrane currents electrophysiologically and, after appropriate calibration, followed Ca fluorimetrically. The calcium handling proteins were interrogated by relating currents and Ca concentration in response to defined membrane potential manoeuvres, and in the presence of drugs to eliminate various confounding components.  The first voltage protocol interrogated LCC currents at different voltages, and measured their response in terms of SR release.  In the second protocol, a train of depolarisations then triggered Ca transients which provided information about SR release and their recovery provided a readout of SERCA, NCX and PMCA activities.  The third protocol consistent of rapid exposure to caffeine which emptied the SR and short-circuited SERCA.  This provided information about SR load, and the subsequent recovery is a readout of NCX and PMCA. Buffering was calculated from the quotient of measured Ca rise upon caffeine exposure and the amount of Ca released back-calculated from sarcolemmal current generated by NCX. The dataset contains 12998 observations of cytosolic free calcium concentration observed at a 60 Hz sampling frequency, and 22260 transmembrane current observations, both for a duration of 3 minutes. The data are displayed in \Cref{fig:Hinch_data}, where the different colours show the three parts of the biological protocol explained above.       
	The calcium signalling model in \Cref{fig:cell_scheme} is represented by a coupled system of~$7$ ODEs and depends on a $d=38$ dimensional parameter, which is to be estimated based on the experimental dataset.
	As just described, the data consist of measurements of calcium concentration in the cytoplasm and transmembrane current whilst the cell was externally stimulated,  so that only two of the state variables (in our case, $u_5$ and $u_7$) were observed (we denote the observations of these states $y^5$ and $y^7$, respectively).
	Our likelihood took the simple Gaussian form $\phi_i(u(t_i)) \propto \exp(-\frac{1}{2 \sigma_5^2} (y^5_i - u_5(t_i))^2 ) + \exp(-\frac{1}{2 \sigma_7^2} (y^7_i - u_7(t_i))^2 )$ with $\sigma_5 =  2.07 \times 10^{-8}$ and $\sigma_7 = 1.62 \times 10^{-10}$.
	The ODE was numerically solved using 
	\verb+CVODES+ \citep{hindmarsh2005sundials}
	and sensitivities were computed by solving the forward sensitivity equations; see \Cref{ap: MCMC methods}.
	Further details of the expert-elicited prior, the data pre-processing procedure and numerical details associated with the ODE solver will be reported in a separate manuscript, in preparation as of 12th July 2021, and are available on request.
	
	In the experiments that follow, \texttt{RW} MCMC was used both to target the posterior $P$ and to target a tempered distribution $Q$.
	The latter is equivalent to multiplying the measurement error standard deviations $\sigma_5$ and $\sigma_7$ by 8, and has the effect of rendering $Q$ more diffuse than $P$, in order that $Q$ is more favourable for MCMC.
	The specific value of 8 corresponded to the smallest amount of tempering required to achieve convergence within the available computational budget.
	A total of $n = 4 \times 10^6$ iterations were performed, and in each case the first $10^6$ iterations were used to adapt the scale of the Gaussian proposal distribution in the \texttt{RW} sampler, so that an acceptance rate close to 0.234 \citep{gelman1997weak} was achieved. 
	The first $10^6$ iterations were then discarded.
		
	The additional results for the calcium signalling model that we present in this appendix are as follows:
	
	\begin{itemize}
	\item \Cref{fig:Hinch_kdeA,fig:Hinch_kdeB,fig:Hinch_kdeC,fig:Hinch_kdeD} contain kernel density estimates for posterior marginals obtained by \texttt{Stein Thinning} applied to tempered \texttt{RW} MCMC output, versus standard \texttt{RW} MCMC output. 
	    \item Figures \ref{fig:Hinch_KSD_sclmed} and \ref{fig:Hinch_KSD_sclmed} present results for KSD based on \texttt{sclmed} and \texttt{smpcov} settings, to complement \Cref{fig:Hinch_KSD} in the main text.
	\end{itemize}

	\begin{figure}[t!] 
			\centerline{
				\includegraphics[width=0.8	\linewidth]{./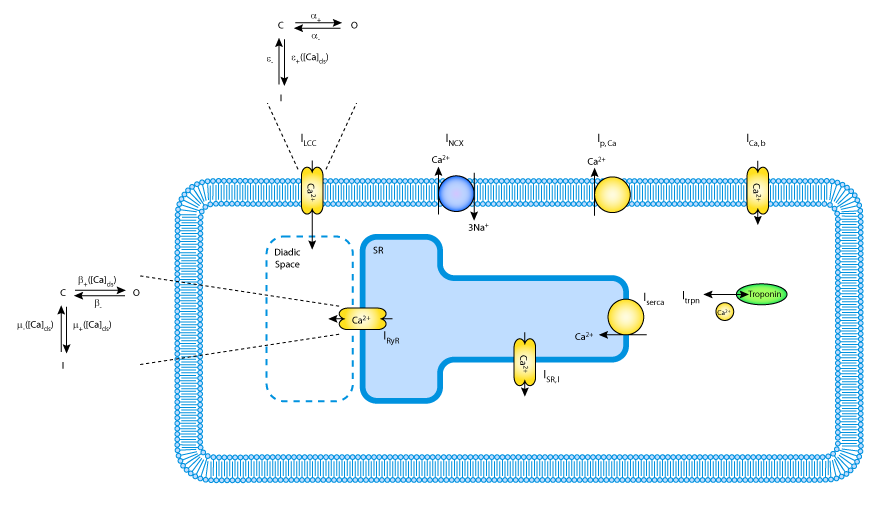}} 
			\caption{Calcium signalling model; a schematic representation due to \cite{Hinch2004}.
				The model consists of 6 coupled ordinary differential equations and depends upon 38 real-valued parameters that must be estimated from an experimental dataset.
			} 
			\label{fig:cell_scheme}
		\end{figure}

	\begin{figure}[t!]
			\centering
			\includegraphics[width = 0.45
			\textwidth]{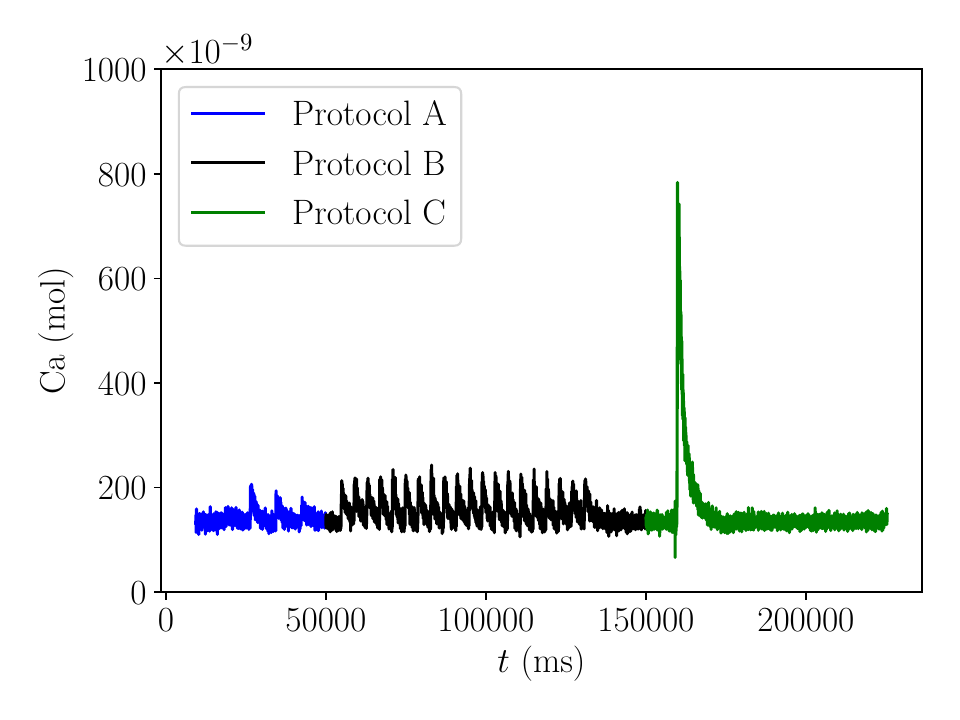}
			\includegraphics[width = 0.45
			\textwidth]{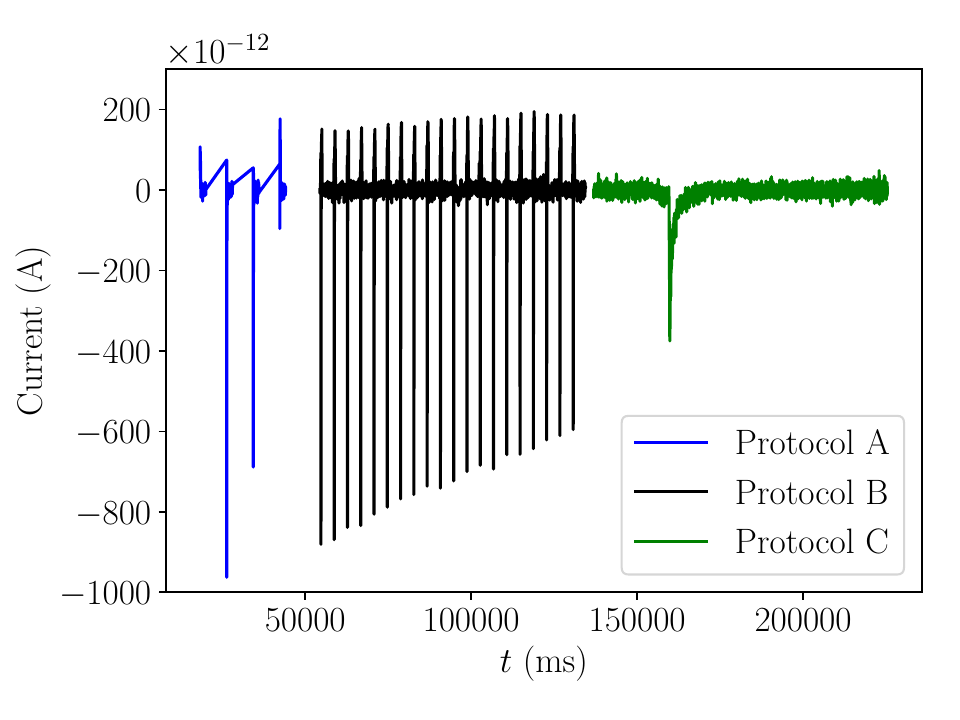}
			\caption{
				Calcium signalling data. The left panel shows calcium concentration (in mol) plotted against time (in ms), while the right panel shows transmembrane current (in A) plotted against time (in ms). The different colours show the data corresponding to the three different biological protocols.
			}
			\label{fig:Hinch_data}
	\end{figure}

 	\begin{figure}[t!]
			\centering
			\includegraphics[width = 0.65
			\textwidth]{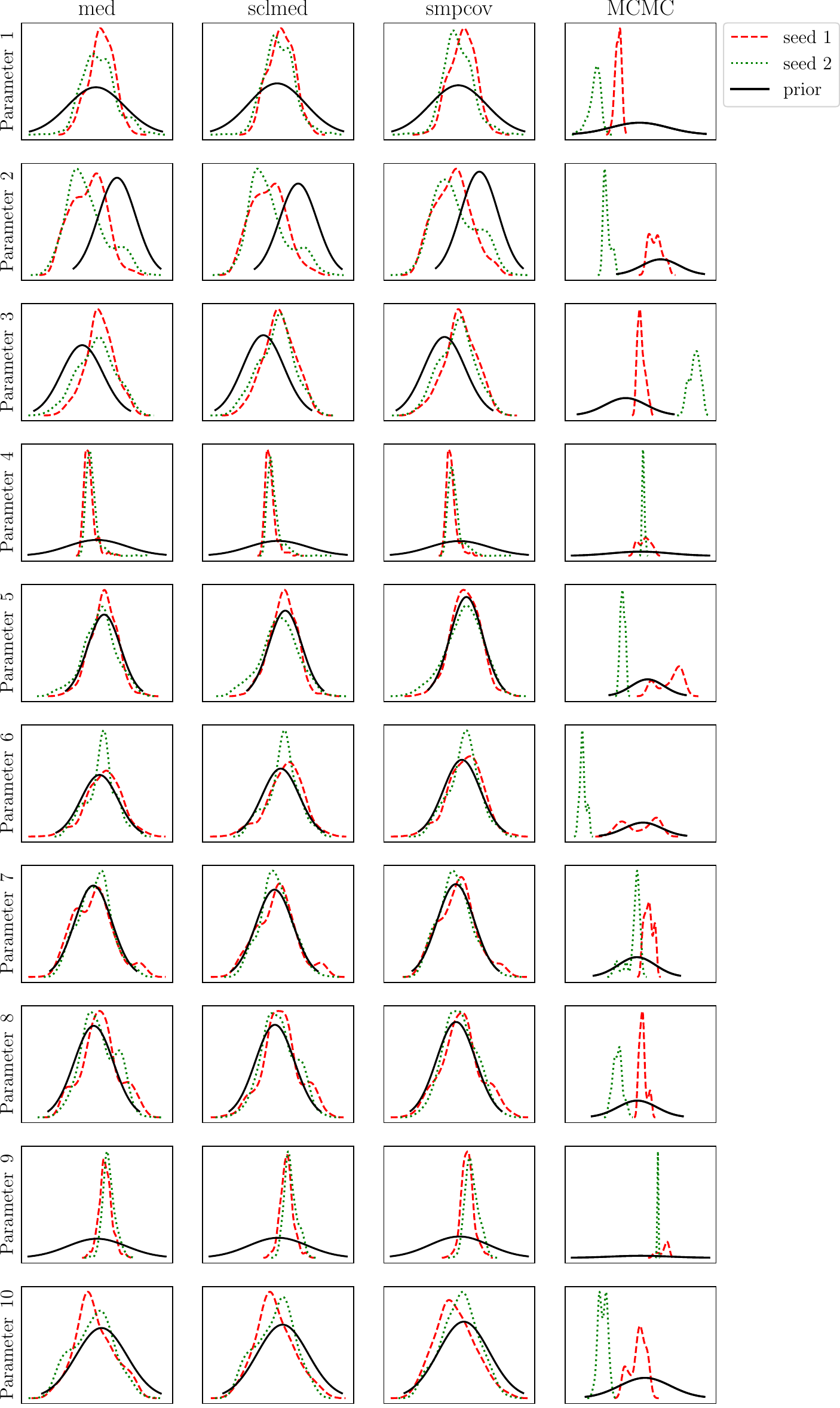}

			\caption{
				Kernel density estimates for posterior marginals in the calcium signalling model. 
				Stein thinning with \texttt{med}, \texttt{sclmed} and \texttt{smpcov} preconditioners (first three columns) was applied to tempered \texttt{RW} MCMC output (to obtain $m=500$ points). 
				These can be contrasted with the last column, where kernel density estimates based on standard \texttt{RW} MCMC are displayed.
				[In each case, two distinct random seeds were used. 
				For reference, the black curve represents the prior marginal.]
				}
			\label{fig:Hinch_kdeA}
		\end{figure}

	 	\begin{figure}[t!]
			\centering
			\includegraphics[width = 0.65
			\textwidth]{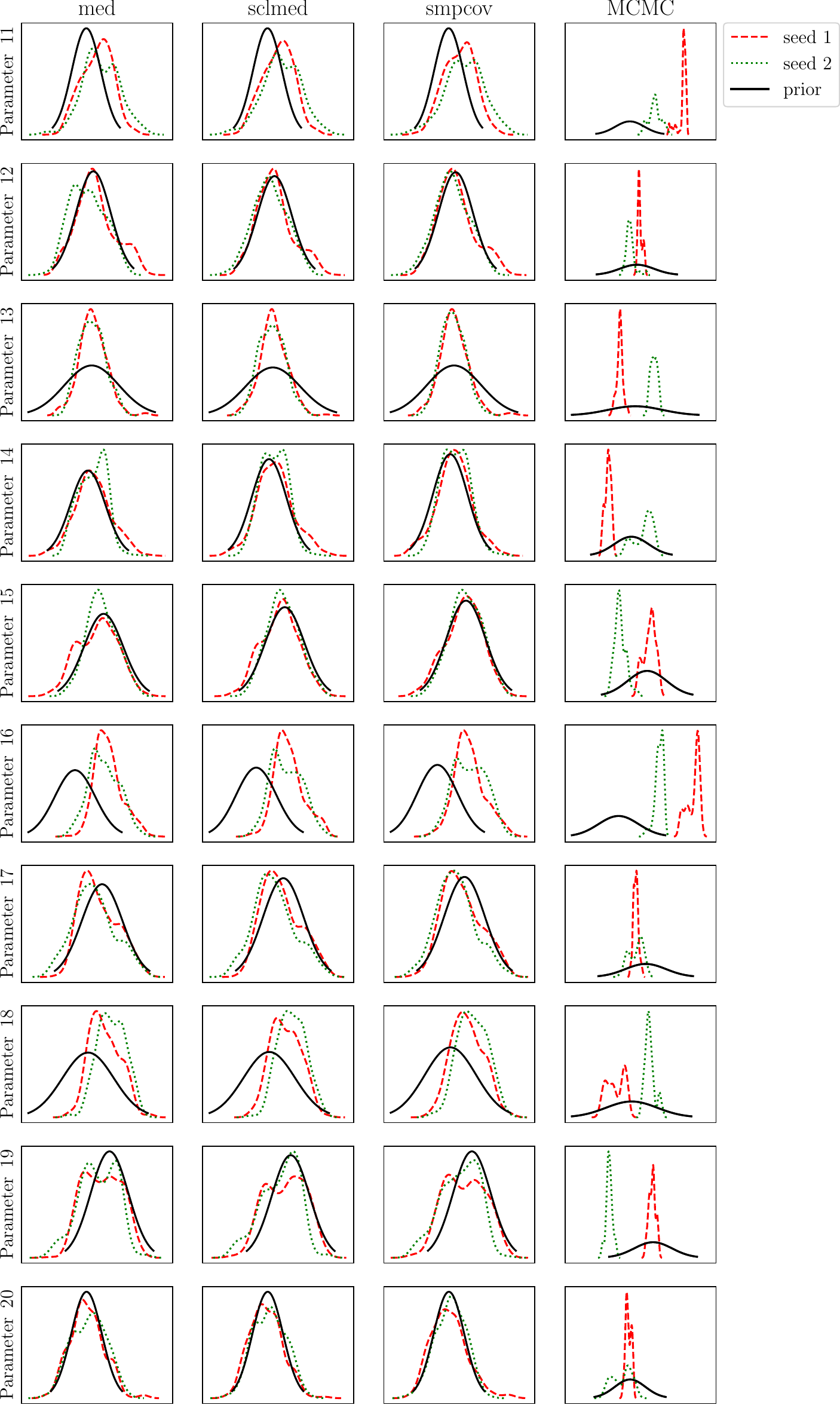}

			\caption{
				Kernel density estimates for posterior marginals in the calcium signalling model. 
				Stein thinning with \texttt{med}, \texttt{sclmed} and \texttt{smpcov} preconditioners (first three columns) was applied to tempered \texttt{RW} MCMC output (to obtain $m=500$ points). 
				These can be contrasted with the last column, where kernel density estimates based on standard \texttt{RW} MCMC are displayed.
				[In each case, two distinct random seeds were used. 
				For reference, the black curve represents the prior marginal.]
				}
			\label{fig:Hinch_kdeB}
		\end{figure}

	 	\begin{figure}[t!]
			\centering
			\includegraphics[width = 0.65
			\textwidth]{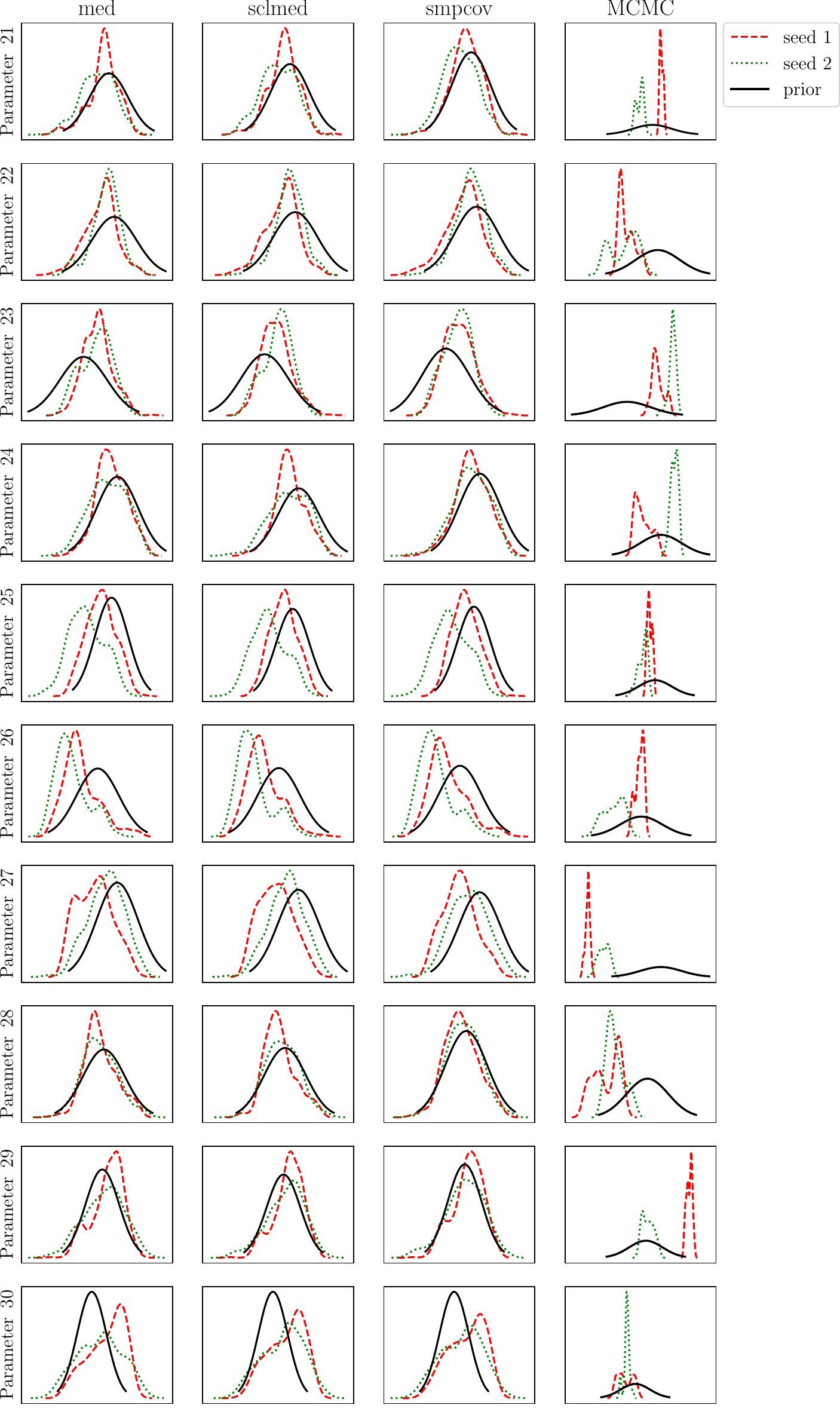}

			\caption{
				Kernel density estimates for posterior marginals in the calcium signalling model. 
				Stein thinning with \texttt{med}, \texttt{sclmed} and \texttt{smpcov} preconditioners (first three columns) was applied to tempered \texttt{RW} MCMC output (to obtain $m=500$ points). 
				These can be contrasted with the last column, where kernel density estimates based on standard \texttt{RW} MCMC are displayed.
				[In each case, two distinct random seeds were used. 
				For reference, the black curve represents the prior marginal.] 
				}
			\label{fig:Hinch_kdeC}
		\end{figure}	
	
	 	\begin{figure}[t!]
			\centering
			\includegraphics[width = 0.65
			\textwidth]{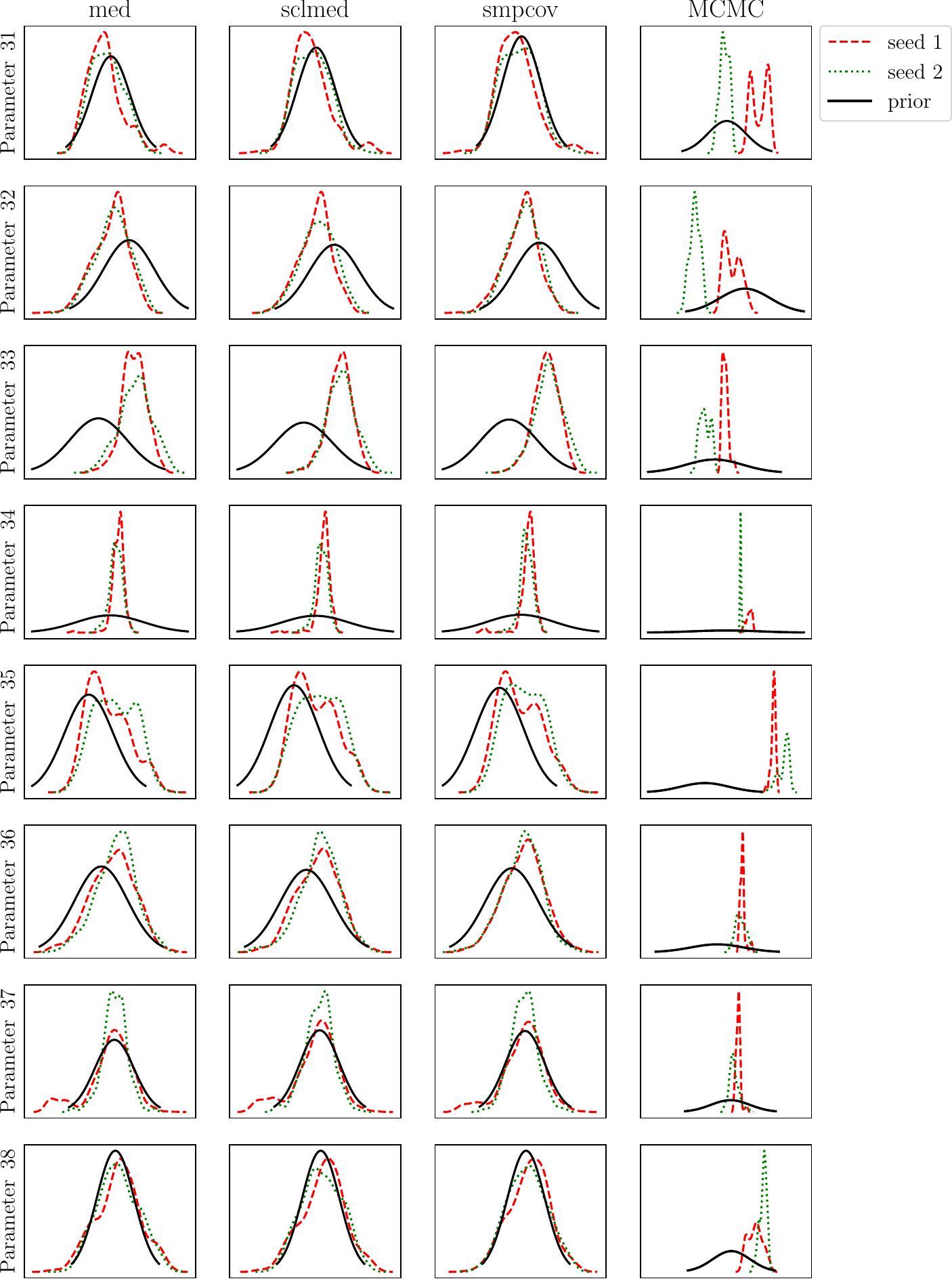}

			\caption{
				Kernel density estimates for posterior marginals in the calcium signalling model. 
				Stein thinning with \texttt{med}, \texttt{sclmed} and \texttt{smpcov} preconditioners (first three columns) was applied to tempered \texttt{RW} MCMC output (to obtain $m=500$ points). 
				These can be contrasted with the last column, where kernel density estimates based on standard \texttt{RW} MCMC are displayed.
				[In each case, two distinct random seeds were used. 
				For reference, the black curve represents the prior marginal.] 
				}
			\label{fig:Hinch_kdeD}
		\end{figure}

		\begin{figure}[t!]
			\centering
			\includegraphics[width = 0.7
			\textwidth,clip,trim = 0cm 0cm 0cm 0.6cm]{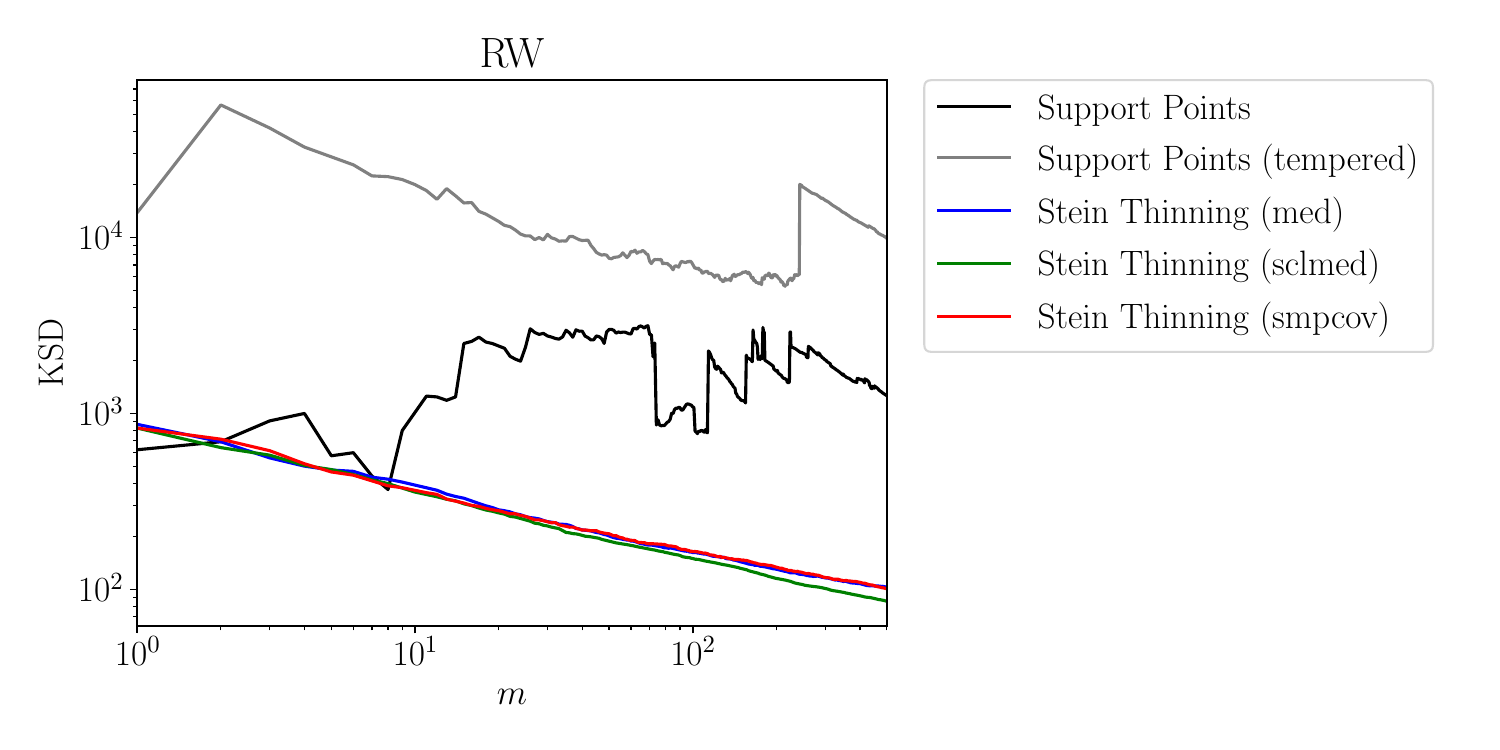}
			\caption{Calcium signalling model. Kernel Stein discrepancy (KSD) based on \texttt{sclmed}, for empirical distributions obtained using \texttt{Support Points} and \texttt{Stein Thinning}, based on output from \texttt{RW} MCMC applied to either $P$ or a tempered version of $P$.
			}
			\label{fig:Hinch_KSD_sclmed}
		\end{figure}	
	
		\begin{figure}[t!]
			\centering
			\includegraphics[width = 0.7
			\textwidth,clip,trim = 0cm 0cm 0cm 0.6cm]{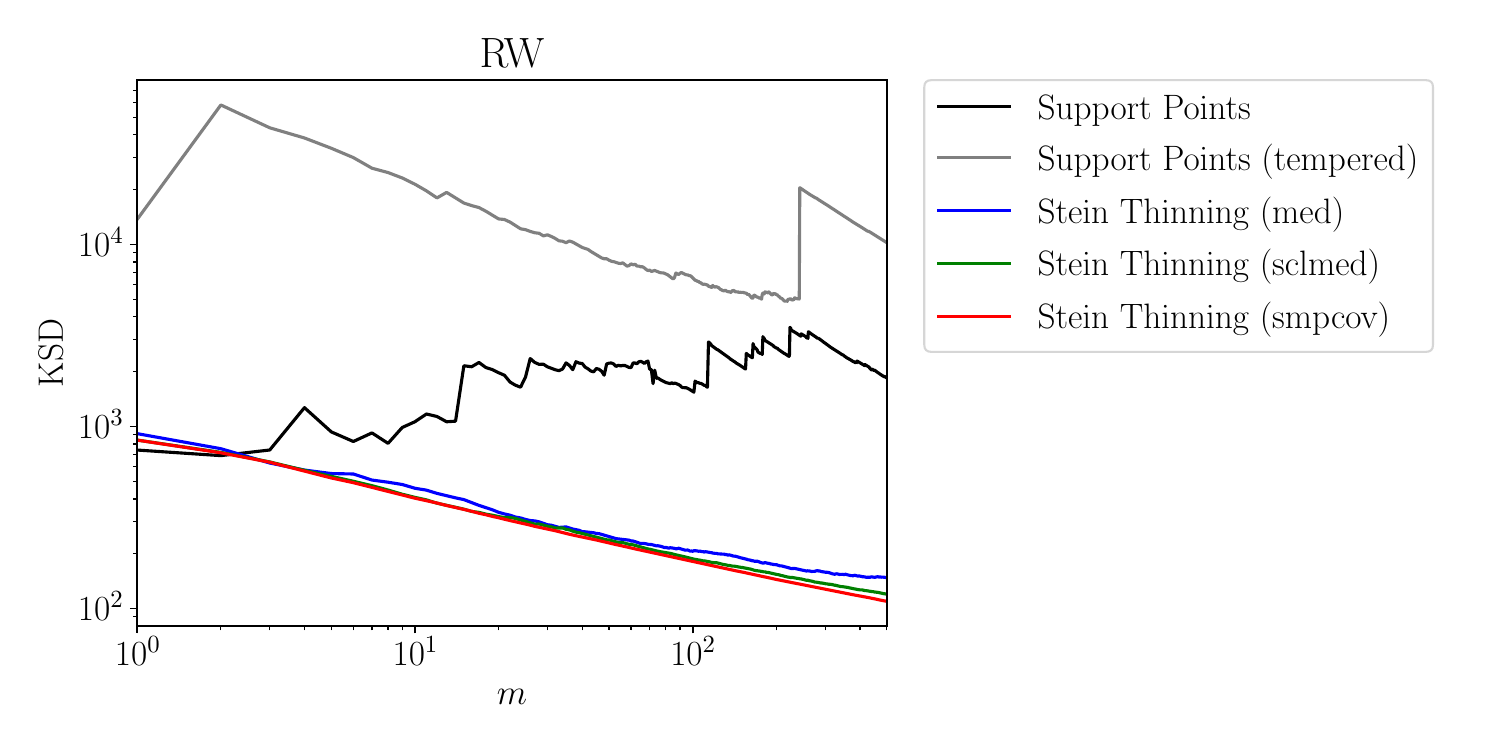}
			\caption{Calcium signalling model. Kernel Stein discrepancy (KSD) based on \texttt{smpcov}, for empirical distributions obtained using \texttt{Support Points} and \texttt{Stein Thinning}, based on output from \texttt{RW} MCMC applied to either $P$ or a tempered version of $P$.
			}
			\label{fig:Hinch_KSD_smpcov}
		\end{figure}

\end{document}